\newcounter{one}
\def\QED{\mbox{\rule[0pt]{1.5ex}{1.5ex}}}
\def\endproof{\hspace*{\fill}~\QED\par\endtrivlist\unskip}
\newenvironment{proofof}[1]{\vspace*{5mm} \par \noindent
         {\bf Proof of #1:\hspace{2mm}}}{\endproof
}
\newenvironment{derivationof}[1]{\vspace*{5mm} \par \noindent
         {\bf Derivation of #1:\hspace{2mm}}}{\endproof
}
\newtheorem{lemma}{Lemma}
\newtheorem{theorem}{Theorem}
\newtheorem{corollary}{Corollary}
\newtheorem{definition}{Definition}
\newcommand\calA{{\cal A}}
\newcommand\calB{{\cal B}}
\newcommand\calC{{\cal C}}
\newcommand\calD{{\cal D}}
\newcommand\calE{{\cal E}}
\newcommand\calF{{\cal F}}
\newcommand\calH{{\cal H}}
\newcommand\calI{{\cal I}}
\newcommand\calM{{\cal M}}
\newcommand\calN{{\cal N}}
\newcommand\calR{{\cal R}}
\newcommand\calS{{\cal S}}
\newcommand\calU{{\cal U}}
\newcommand\calV{{\cal V}}
\newcommand\calX{{\cal X}}
\newcommand\id{\mathrm{id}}
\newcommand\Tr{\mathrm{Tr}}
\newcommand{\ex}[1]{\langle #1 \rangle}
\newcommand{\bra}[1]{\langle #1 |}
\newcommand{\ket}[1]{| #1 \rangle}
\newcommand{\g}[1]{{\color{black} #1}}
\newcommand{\tot}{\mathrm{tot}}
\newcommand{\non}{\nonumber\\}
\renewcommand{\vec}{\bm}
\newcommand{\wh}{\widehat}
\newcommand{\SA}{{\cal A}}
\newcommand{\eq}[1]{\begin{align} #1 \end{align}}
\begin{document}
\title{Universal limitation of quantum information recovery: symmetry versus coherence}

\author	{
			Hiroyasu Tajima
		}
		\email{hiroyasu.tajima@uec.ac.jp}
\affiliation{
				1. Department of Communication Engineering and Informatics, University of Electro-Communications, 1-5-1 Chofugaoka, Chofu, Tokyo, 182-8585, Japan
			}
\affiliation{
2. JST, PRESTO, 4-1-8 Honcho, Kawaguchi, Saitama, 332-0012, Japan
			}
\author{Keiji Saito}
\affiliation{
				Department of Physics, Keio University, 3-14-1 Hiyoshi, Yokohama, 223-8522, Japan
			}



\begin{abstract} 
Quantum  information is scrambled via chaotic time evolution in many-body systems. The recovery of initial information embedded locally in the system from the scrambled quantum state is a fundamental concern in many contexts. From a dynamical perspective, information recovery can measure dynamical instability in quantum chaos, fault-tolerant quantum computing, and the black hole information paradox. This article considers general aspects of quantum information recovery when the scrambling dynamics have conservation laws due to Lie group symmetries. Here, we establish fundamental limitations on the information recovery from scrambling dynamics with arbitrary Lie group symmetries. We show universal relations between information recovery, symmetry, and quantum coherence, which apply to many physical situations. 
The relations predict that the behavior of the Hayden-Preskill black hole model changes qualitatively under the assumption of the energy conservation law.
Consequently, we can rigorously prove that under the energy conservation law, the error of the information recovery from a small black hole remains unignorably large until it completely evaporates. 
Moreover, even when the black hole is very large, the recovery of information thrown into the black hole is not completed until most of the black hole evaporates.
The relations also provide a unified view of the symmetry restrictions on quantum information processing, such as the approximate Eastin-Knill theorem and the Wigner-Araki-Yanase theorem for unitary gates.
\end{abstract}

\pacs{
03.67.-a, 
05.30.-d, 
04.70.Dy  
03.67.Pp	
}

\maketitle
\section{Introduction}
Information locally embedded in many-body isolated systems generically diffuses over the entire system due to its chaotic dynamical motion. 
Recently, a significant amount of research has been conducted on the question of how to accurately recover the initial information from the scrambled state using a fixed protocol without any partial knowledge of the input information.
In classical systems, a chaotic motion with a positive Lyapunov exponent scrambles the phase space without limitations. Hence, the information recovery is practically hopeless due to the sensitivity against a small perturbation to the recovery protocol \cite{gaspard,casati}. However, in quantum systems, the scrambling nature becomes much more moderate than that in the classical case due to the existence of the Planck cell arising from Heisenberg's uncertainty relation. Hence, information recovery in the quantum regime seems feasible. 
Fundamental questions here are how and how accurately one can recover the information. The quantum information recovery problems have provided a lot of surprises at a fundamental level. Additionally, they have brought practical importance in several fields, such as fault-tolerant quantum computation.

The quantum information recovery problem historically dates back to the information paradox of the black hole \cite{Hawking1,Hawking2}. In the 1970s, Hawking raises a question on the information loss of the thrown information into a black hole. In the classical picture, information leakage from a classical black hole is unlikely due to the no-hair theorem \cite{nohair1}. However, quantum black holes can release the quantum information via the Hawking radiation \cite{Hawking1,Hawking2,Hayden-Preskill,Sekino-Susskind,Lashkari,Dupis,Yoshida-Kitaev}.
\begin{figure}[tb]
		\centering
		\includegraphics[width=.495\textwidth]{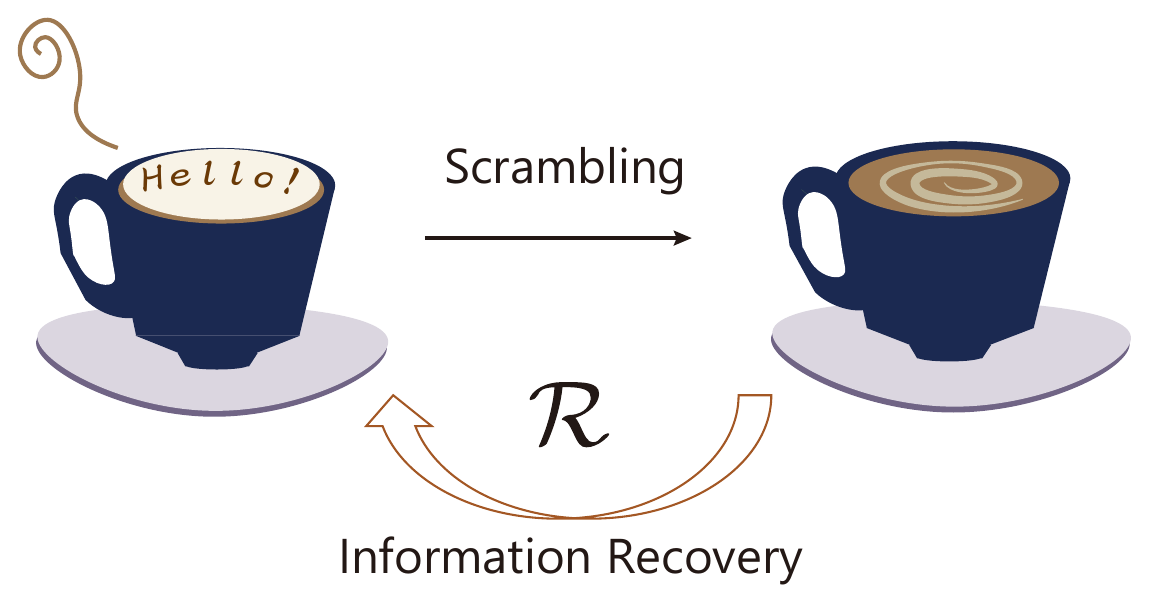}
		\caption{Schematic of the information recovery. We recover the initial information with a fixed protocol ${\mathcal R}$ after scrambling. }
		\label{recovery_fig}
	\end{figure}
Hayden and Preskill developed a model based on the quantum information theory and have found a remarkable result, i.e., arbitrary k-qubit quantum data thrown into the black hole can be almost perfectly recovered by collecting only a few more than $k$-qubit information from the Hawking radiation \cite{Hayden-Preskill}. In other words, quantum black holes act as informative mirrors. This finding is highly counterintuitive compared to the classical dynamical nature, and hence it has triggered a lot of studies \cite{Sekino-Susskind,Lashkari,Dupis,Yoshida-Kitaev}. 
Furthermore, today the technique and concept of information recovery go beyond the black hole physics and commonly appear in various problems considering the recovery of quantum correlations, such as quantum chaos \cite{yoshida-channel}, fault tolerance of quantum information \cite{Yan-Sinitsyn} and measurement-induced phase transition \cite{Choi}.
With recent development in experimental techniques related to quantum information, predictions of quantum information recovery are becoming verifiable. 
Various experimental setups on information recovery have been proposed, and several have been actually realized in the laboratory \cite{p1,p2,p3,siddiqi,lab1}.

Since the seminal work of Hayden and Preskill, the typical theoretical framework for information recovery adopted thus far is to use the Haar random unitary without any conservation laws to represent scrambling dynamics, exception of several works in specific situations \cite{Yoshida-soft,JLiu,Nakata}.
However, we should note that in real many-body systems, information scrambling can occur in dynamics with conservation laws, such as energy and momentum conservations.
Even for a black hole, which is considered the most chaotic system in the universe, the energy conservation law needs to be satisfied. 
Using the out-of-time-order correlation, an estimator of the degree of scrambling \cite{otoc1,otoc2}, one can show that isolated quantum many-body systems with nonlocal interactions demonstrate even fast scrambling phenomenon \cite{zhou1,zhou2,kuwa-sai}.
Conservation laws also can play a critical role in the dynamics in typical isolated many-body quantum systems such as isolated cold atomic systems.
With this backgrounds, it is vital to determine the universal effects of symmetries on the information recovery for the in-depth understanding of its quantum nature and also further applications.

In this work, we address this question by developing the techniques in resource theory of asymmetry \cite{Bartlett,Gour,Marvian,Marvian-thesis,Marvian2018,Lostaglio2018,TSS,TSS2,TN-WAY,Takagi2018,Marvian distillation}. 
Consequently, we  present several general limitations  on  information  recovery  when  the  scrambling  dynamics possess Lie group  symmetries.  
The limitations quantitatively state that when the scrambling has symmetries, significant inevitable errors occur in the recovery, and
only large amounts of quantum fluctuation of the conserved quantities can mitigate these errors to a certain extent.

Since our technique does not require assumptions other than unitarity and symmetry of the scrambling dynamics, the established limitations can be applied to many important situations. An interesting application exists within  black hole physics. In the information recovery from black holes, our results indicate that quantum coherence is required in the initial state of the black hole for accurate information recovery. 
Consequently, we demonstrate that when considering Hayden-Preskill's black hole model with the energy conservation law,  that law drastically limits the success rate of information recovery. 
Our results are summarized in two messages. First, when the size of both black hole and thrown object are comparable, the error of information recovery remains large until the black hole completely evaporates. Second, even when the black hole is much larger than the thrown object, 
information escapes very slowly, and a significant error in recovery remains until the black hole has almost evaporated.
Furthermore, our theorems can explain several limitations on quantum information processing with symmetry \cite{ozawa1,ozawa2,Karasawa2007,Karasawa2009,TSS,TSS2,Eastin-Knill,e-EKFaist,e-EKKubica,e-EKZhou,e-EKYang}.
Examples include the approximate Eastin-Knill theorem on covariant codes \cite{Eastin-Knill,e-EKFaist,e-EKKubica,e-EKZhou,e-EKYang}.
Our study demonstrates that limitations discussed independently so far can be derived from a single general theorem in a unified way.

This paper is organized as follows.
In the section \ref{sec2}, we formulate a general setup of quantum information recovery from unitary dynamics with symmetry.
In the section \ref{secmain}, we present the main results, i.e., the fundamental limitations on quantum information recovery.
In the section \ref{HPwithX}, we apply the main results to the Hayden-Preskill black hole model.
In the section \ref{APPnonsc}, we apply the results to the quantum information processing with symmetry.
In the section \ref{numerical check}, we give a numerical check of the main results.
Finally, in the Appendix, we provide basic tips of resource theory of asymmetry, and present the proof of the main results.

\section{Setup}\label{sec2}
A setup on the information recovery is introduced in a general form.
As discussed later, the setup described here is directly applicable to various situations including black hole scrambling \cite{Hayden-Preskill,Sekino-Susskind,Lashkari,Page,Dupis,Yoshida-Kitaev}, error correcting codes \cite{Eastin-Knill,e-EKFaist,e-EKKubica,e-EKZhou,e-EKYang} and the implementation of quantum computation gates \cite{ozawa1,ozawa2,Karasawa2007,Karasawa2009,TSS,TSS2}.

We consider four finite-level quantum systems $A$, $B$, $R_A$ and $R_B$, represented schematically in Fig.~\ref{setupSR}. 
The part $A$ is the system of interest with a mixed state $\rho_A$ as an initial state. Then, we make a purification between the system $A$ and $R_A$, the wave function of which is described as $\ket{\psi_{AR_A}}$. We assume that the initial state of the composite system $B R_B$ is pure state $\ket{\phi_{BR_B}}$, which is an entangled state. Through entanglement, the systems $R_A$ and $R_B$ have partial quantum information of the system $A$ and $B$, respectively. 
For this initial state, the unitary operation $U$ is applied on the systems $A$ and $B$, which scrambles the quantum information of the initial state. 
A main task in the information recovery problem is to recover the initial state $\ket{\psi_{AR_A}}$ with aid of partial information of the scrambled state. To this end, we suppose that the composite system $AB$ is either naturally or artificially divided into an accessible part $A'$ and the other part $B'$ after the unitary operation, where the Hilbert space of $AB$ and $A'B'$ are the same (see Fig.~\ref{setupSR} again). We then apply a recovery operation $\calR$ which is a completely positive and trace preserving (CPTP) map acting from $A'R_B$ to $A$ without touching $R_A$. Through this recovery operation, we try to recover the initial state $\ket{\psi_{AR_A}}$ as accurate as possible using the quantum information contained in the subsystems $A'$ and $R_B$.
Following the standard argument of information recovery including the black hole information paradox \cite{Hayden-Preskill,Sekino-Susskind,Lashkari,Page,Dupis,Yoshida-Kitaev} and the quantum error correction \cite{Eastin-Knill,e-EKFaist,e-EKKubica,e-EKZhou,e-EKYang}, we define the recovery error $\delta$ as the distance between the initial wave function $\ket{\psi_{AR_A}}$ and the output state on $AR_A$ with the best choice of the recovery operation:
\begin{align}
  \delta &:=\!\! \min_{\mbox{$\calR $ } \atop  (A'R_B \to A )     }  \!\!\!\!\!\!
           D_F\!\left( \rho_{AR_A},\mathrm{id}_{R_A}\otimes\calR[\Tr_{B'}(U\rho_{AR_A}\otimes\rho_{BR_B}U^\dagger) ] \right) \, , 
\end{align}
where $\rho_{A R_A} :=\ket{\psi_{AR_A}}\bra{\psi_{AR_A}}$ and $\rho_{B R_B} :=\ket{\phi_{B R_B}}\bra{\phi_{BR_B}}$. The symbol $\mathrm{id}_{R_A}$ represents the identity operation for the system $R_A$.
The function $D_F$ is the purified distance defined as $D_F(\rho,\sigma):=\sqrt{1-F(\rho,\sigma)^2}$ with the Uhlmann's fidelity $F(\rho , \sigma):=\Tr[\sqrt{\sqrt{\sigma}\rho\sqrt{\sigma}}]$ for arbitrary density operators $\rho$ and $\sigma$ \cite{Tomamichel}. The recovery error $\delta$ is a function of the initial states and the unitary operator.

We remark on another setup. Namely, one may want to input pure initial state for the system $A$ without using the reference state $R_A$, and ask the information recovery\cite{3Horodecki}. We can show that the above setup using the reference state is sufficient to evaluate the recovery accuracy even for this setup. See the argument in the Appendix \ref{AppB}.

When we look at the systems $A$ and $A'$, the unitary operation realizes a CPTP map $\calE$. Namely, the state on $A'$ after the unitary operation is simply described as $\calE(\rho_A)$.
From this picture, one may interpret the recovery error as an indicator of the {\it irreversibility} of the quantum operation $\calE$.

\begin{figure}[tb]
		\centering
		\includegraphics[width=.45\textwidth]{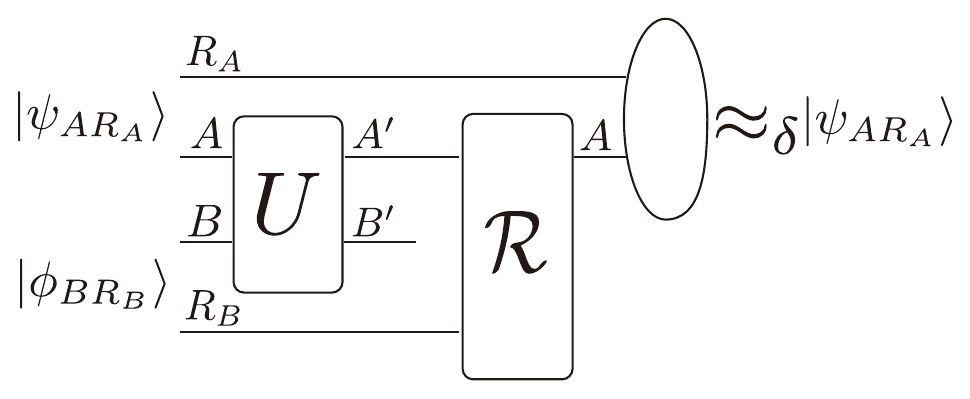}
		\caption{Schematic diagram of the general information recovery.}
		\label{setupSR}
	\end{figure}

        \g{The primary objective of this study} is to show that there is a fundamental limitation on the recovery error when the unitary operation has a Lie group symmetry. The symmetry generically generates conserved quantities such as energy and spin etc.
\g{For simplicity, we} consider a single conserved quantity $X$ under the unitary operation, i.e., 
\begin{align}
U(X_A+X_B)U^{\dagger}=(X_{A'}+X_{B'}) \, , \label{c-law}
\end{align}
where $X_{\alpha}$ is the operator of the local conserved quantity of the system $\alpha$ $(\alpha=A,B,A'\mbox{ or }B')$. We \g{note that} the case with many conserved quantities \g{can also be addressed} 
(see Supplementary Material Supp.$\rm V\hspace{-.1em}I$). 

We \g{now} introduce two key quantities to describe the limitation of information recovery.
While the conservation law for the total system is assumed, local conserved quantities can \g{fluctuate.} The first key quantity we focus on is the dynamical fluctuation associated with the quantum operation $\calE$, \g{i.e.,} a fluctuation of the change between the initial value of $X_A$ and the value of $X_{A'}$ after the quantum operation. The change of the values of \g{the} local conserved quantity depends on the initial state $\rho_A$. We characterize such fluctuation arising from the choice of the initial state,
\g{considering that} the initial reduced density operator for the system $A$ \g{can be decomposed} as $\rho_A =\sum_j p_j \rho_j$ \g{with} weight $p_j$ satisfying $\sum_j p_j =1$. Such a decomposition is not unique. While the linearity on the CPTP map guarantees that the decomposition reproduces the same output state on $A'$, i.e., $\calE ( \rho_A ) = \sum_j p_j \calE(\rho_j)$, each path from the density operator $\rho_j$ shows a variation on the change of local conserved quantities in general. Taking account of this property, we define the following quantity $\SA$ to quantify the dynamical fluctuation on the local conserved quantity for a given initial density operator:
\begin{align}
  \begin{split}
  \SA&:=\max_{\{p_j,\rho_j\}}\sum_jp_j|\Delta_j|,\\
  \Delta_j&:=\left(\ex{X_A}_{\rho_j}-\ex{X_{A'}}_{\calE(\rho_j)}\right)-\left(\ex{X_A}_{\rho_A}-\ex{X_{A'}}_{\calE(\rho_A)}\right) \, ,
  \end{split}
\end{align}
where $\ex{...}_{\rho}:=\Tr(...\rho )$, and the set $\{p_j,\rho_j\}$ covers all decompositions $\rho_A=\sum_jp_j\rho_j$. Note that the quantity $\SA $ is a function of the state $\rho_A$ and the CPTP map. 
When the systems $A$ and $B$ are \g{identical} to $A'$ and $B'$, \g{respectively,} and the unitary operator is decoupled between the systems as $U=U_A \otimes U_B$, the dynamical fluctuation is trivially zero. 
A finite value of the dynamical fluctuation is generated for a finite interaction between the systems\g{. This is} reflected from the fact that the global symmetry does not completely restrict the behaviour of the subsystem.

Another key quantity is quantum coherence. Following the standard argument in  
the resource theory of asymmetry, we employ  the SLD-quantum Fisher information \cite{Helstrom,Holevo} for the state family $\{e^{-iXt}\rho e^{iXt}\}_{t\in\mathbb{R}}$ to quantify the quantum coherence on $\rho$ \cite{Takagi2018,Marvian distillation}:
\begin{align}
\calF_{\rho}(X):=4\lim_{\epsilon\rightarrow0}\frac{D_F(e^{-iX\epsilon}\rho e^{iX\epsilon},\rho)^2}{\epsilon^2}.
\end{align}
The quantum Fisher information is a good indicator of the amount of quantum coherence in $\rho$ with the basis of the eigenvectors of $X$. It is known that this quantity is directly connected to the amount of \textit{quantum fluctuation} (see Appendix \ref{AppA}) \cite{Q-Fisher=Q-fluctuation1, Q-Fisher=Q-fluctuation2}. We consider the quantum coherence contained inside the system $B$ as discussed below. 

\section{Main results}\label{secmain}
\subsection{Fundamental limitation on information recovery}\label{mainA}
With the two key quantities introduced above, we establish two fundamental relations on the limitations of the information recovery. 
We note that the relations are obtained for general unitary operations with conservation laws, \g{without assumptions} such as the Haar random unitary.

The first relation on the limitation of the information recovery is described \g{as} follows \cite{footnoteA2}:
\begin{align}
\frac{\SA}{2(\sqrt{\calF}+4\Delta_{+})}\le\delta \, , \label{SIQ1}
\end{align}
where $\calF:=\calF_{\rho_{ BR_B}}(X_B\otimes1_{R_B})$ is the quantum coherence in the initial state of the system $BR_B$. The quantity $\Delta_+$ is a measure of possible change on the local conserved quantities, i.e., $\Delta_+:=(\calD_{X_A}+\calD_{X_{A'}})/2$ where $\calD_{X_A}$ and $\calD_{X_{A'}}$ are the differences between the maximum and minimum eigenvalues of the operators $X_A$ and $X_{A'}$, respectively.

The inequality (\ref{SIQ1}) shows a close relation between the recovery error (irreversibility), the dynamical fluctuation, and the quantum coherence. 
It shows that when the dynamical fluctuation is finite, \g{perfect recovery} is impossible.
Moreover, high performance recovery is possible only when the quantum coherence \g{sufficiently fills} the initial state of $BR_B$.
Note that the dynamical fluctuation is generically finite, since the systems $A$ and $B$ interact with each other via the unitary operation. 
We show a specific example in Supplementary Materials Supp.$\rm I\hspace{-.1em}I$, 
where filling vast quantum coherence in $BR_B$ actually makes the error $\delta$ smaller than $\SA/8\Delta_{+}$ and negligibly small.

The above inequality uses the quantum coherence $\calF$ of the initial state of $BR_B$. We can also \g{establish} another inequality with the quantum coherence of the final state, which is the second main relation \cite{footnoteA2}:
\begin{align}
  \frac{\SA}{2(\sqrt{\calF_f}+\Delta_{\max})}\le \delta
    \, ,\label{SIQ2}
\end{align}
where $\Delta_{\max}:=\max_{\{p_j,\rho_j\}}\max_{j}|\Delta_{j}|$, \g{and} the set $\{p_j,\rho_j\}$ covers all decompositions satisfying $\rho=\sum_{j}p_j\rho_j$. The quantum coherence here is measured for the final state as $\calF_f:=\calF_{\sigma_{B'R'_B}}(X_{B'}\otimes1_{R'_B})$, where the state $\sigma_{B'R'_{B'}}$ is a purification of the final state of $B'$ using the reference $R'_{B'}$.

It is critical to comment on what happens if the symmetry is violated. One can discuss the degree of violation of the symmetry, by defining the operator $Z:=(X_A+X_B)-U^{\dagger}(X_{A'}+X_{B'})U$ and its variance $V_Z:=V_{\rho_A\otimes\rho_B}(Z)$. 
  Then, the dynamical fluctuation term in the relations \eqref{SIQ1} and \eqref{SIQ2} is replaced by a modified function which becomes small when the degree of violation is large (see Supplementary Material Supp.$\rm V\hspace{-.1em}II$). For instance, the relation (\ref{SIQ1}) is modified as the inequality $(\SA-V_Z)/[2(\sqrt{\calF}+4\Delta_{+}+3V_Z)]\le\delta $.
When the violation of \g{the} symmetry is large, the numerator becomes negative, which implies that the inequalities reduce to trivial bounds. Hence, the meaningful limitations provided above exist due to the existence of symmetry.

\subsection{Limitation on the information recovery without using $R_B$}\label{S-woRBmain}
Here we discuss the case without using the information of $R_B$. The recovery operation $\calR$ in this case maps the state on the system $A'$ to $A$.  
We then define the recovery error as
\begin{align}
  \tilde{\delta}
  :=\min_{\mbox{$\calR$} \atop {A'\to A }} \!\!
  D_F\!\left( \rho_{AR_A},\mathrm{id}_{R_A}\otimes\calR\circ\calE(\rho_{AR_A})] \right) \, .\label{dwithoutRBmain}
\end{align}
Since $\tilde{\delta}\ge\delta$, \g{we} can substitute $\tilde{\delta}$ for $\delta$ in \eqref{SIQ1} and \eqref{SIQ2} to get a limitation of recovery in the present setup. Moreover, in Supplementary Material Supp.$\rm I\hspace{-.1em}V$, we can derive a tighter relation than this simple substitution as follows \cite{footnoteA2}:
\begin{align}
\frac{\SA}{2(\sqrt{\calF_B}+4\Delta_{+})}\le\tilde{\delta} \, , \label{SIQ1'-Smain}
\end{align}
where $\calF_{B}:=\calF_{\rho_B}(X_B)$. Note that $\calF_{B}\le\calF$ holds in general.

\subsection{Mechanism of how conservation laws hinder quantum information recovery}

Let us explain in an intuitive manner why conservation laws prevent quantum information recovery. 
We focus here on the case of
the perfect recovery, i.e., $\delta =0$.
In this case, after applying the recovery map $\calR$, the state of the system $AR_A$ is equal to the initial state $\ket{\psi_{AR_A}}$. (Fig.~\ref{delta_0_case}) Given that the state $\ket{\psi_{AR_A}}$ is pure, the final state of the system $B’$ is uncorrelated with $AR_A$: 
\eq{
&\psi_{AR_A}\otimes\rho_{B'}\non
&=(\calR\otimes\id_{B'R_A})\circ(\calU\otimes\id_{R_AR_B})(\psi_{AR_A}\otimes\phi_{BR_B}),\label{9main}
}
where $\calU(...):=U...U^\dagger$, $\psi_{AR_A}:=\ket{\psi_{AR_A}}\bra{\psi_{AR_A}}$ and $\phi_{BR_B}:=\ket{\phi_{BR_B}}\bra{\phi_{BR_B}}$.
Therefore, no matter what measurement is made on the final state of system $R_A$, no change will occur in the system $B’$ according to the result of that measurement. Because $R_A$ does not interact with any other system during the whole process, performing a measurement on the \textit{final} state of $R_A$ and performing the same measurement on the \textit{initial} state of $R_A$ will have exactly the same result.
This is confirmed by performing a measurement $\{M_{j,R_A}\}$ on $R_A$ on both hand sides of \eqref{9main}:
\eq{
&\rho_{j,AR_A}\otimes\rho_{B'}\non
&=(\calR\otimes\id_{B'R_A})\circ(\calU\otimes\id_{R_AR_B})\left(\rho_{j,AR_A}\otimes\phi_{BR_B}\right).\label{10main}
}
where $\rho_{j,AR_A}:=M_{j,R_A}\psi_{AR_A}M^\dagger_{j,R_A}/q_j$ and $q_j:=\Tr[M^\dagger_jM_j\psi_{AR_A}]$.
Because the recovery map $\calR$ is a CPTP map from $A'R_B$ to $A$, we can remove it from \eqref{10main} through the partial tracing of  $AR_A$ on both sides of the equation.
We then obtain
\eq{
\rho_{B'}=\Tr_{A'}[U\left(\rho_{j,A}\otimes\rho_{B}\right)U^\dagger].\label{11main}
}
Here $\rho_{j,A}:=\Tr_{R_A}[\rho_{j,AR_A}]$ and $\rho_B:=\Tr_{R_B}[\phi_{BR_B}]$.
Noting that $\rho_{j,A}$ is the resultant state on $A$ obtained when the measurement outcome of $\{M_j\}$ on $R_A$ is $j$, we can see that no matter what measurement is applied to the initial state of $R_A$, the final state of $B'$ will be independent of the results of the measurement.

\begin{figure}[tb]
		\centering
		\includegraphics[width=.4\textwidth]{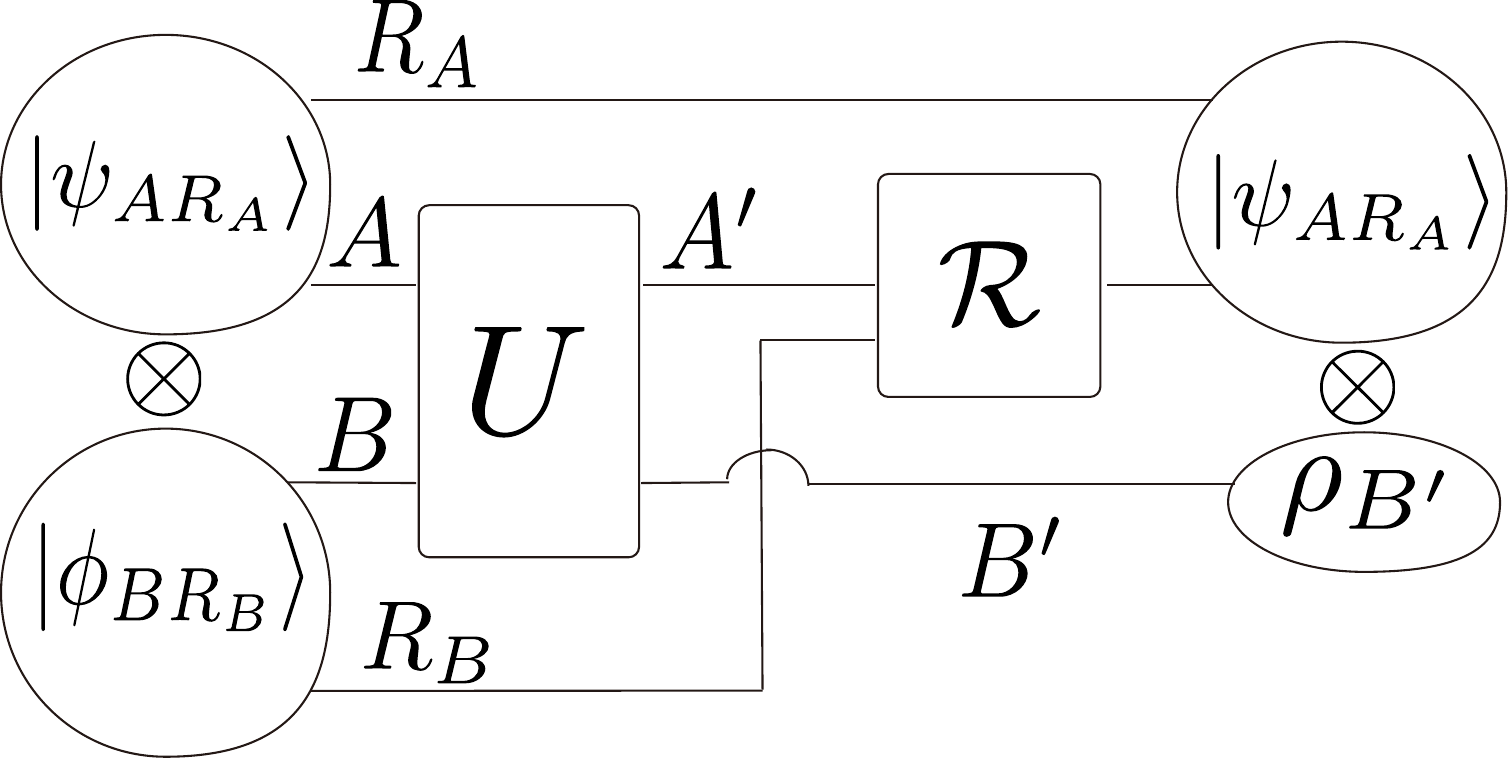}
		\caption{Schematic of a fully successful quantum information recovery. In this case, the error $\delta$ is equal to zero, and the final state of $AR_A$ is equal to the initial state $\ket{\psi_{AR_A}}$. Moreover, the final state of $AR_A$ is completely uncorrelated with $B'$, because pure states do not have any correlation with other systems. Therefore, when $\delta=0$, any measurement on $R_A$ cannot affect the state of $B'$. However, when a conservation law exists, and when $\calA>0$ holds, the unitary $U$ correlates $AR_A$ and $B'$, and hence measurements on $R_A$ affect the final state of $B'$. Therefore, under the conservation law, the perfect information recovery from the scrambling processes satisfying $\calA>0$ cannot be attained.}
		\label{delta_0_case}
\end{figure}

However, when there is a conservation law and a local conserved quantity is changed by the local dynamics (i.e., when $\calA>0$ is valid), the above never holds. This is because by performing a measurement on the initial state of $R_A$, the state of $A$ changes according to the results of the measurement, and the expectation value of the conserved quantity in $B'$ also changes. 
A simple example to understand this argument is when the conserved quantity $X$ is energy and the initial state $\ket{\psi_{AR_A}}$ of $AR_A$ is $(\ket{00}_{AR_A}+\ket{11}_{AR_A})/\sqrt{2}$ such that the strict inequalities $\ex{X_A}_{\ket {0}\bra{0}_A}-\ex{X_{A'}}_{\calE(\ket{0}\bra{0}_A)}>0$ and $\ex{X_A}_{\ket{1}\bra{1}_A}-\ex{X_{A'}}_{\calE(\ket{1}\bra{1}_A)}<0$ hold.
This example indicates the change in expectation value of energy from $A$ to $A'$ is positive if the initial state of $A$ is $\ket{0}_A$ and negative if $\ket{1}_A$.
Because energy conservation holds, the change in energy from $B$ to $B'$ is negative (positive) if the state of $A$ is $\ket{0}_A$ ($\ket{1}_A$). 
In other words, the expectation value of $X_{B'}$ is different depending on whether the state of $A$ is $\ket{0}_A$ or $\ket{1}_A$.
Therefore, in this example, the measurement $\{\ket{0}\bra{0}_{R_A},\ket{1}\bra{1}_{R_A}\}$ on $R_A$ affects the final state of $B’$ because, depending on whether the measurement result of $R_A$ is 0 or 1, the initial state for $A$ of $\ket{0}_A$ or $\ket{1}_A$ changes, and therefore the expectation value of $X_{B'}$ also changes.
However, when $\delta=0$, any measurement on $R_A$ cannot affect the final state of $B’$.
This is a contradiction, and therefore, $\delta=0$ and the conservation law are incompatible, at least when $\calA>0$.

The above argument intuitively shows how symmetry hinders quantum information recovery. For finite values of $\delta$, this argument can be refined to derive quantitative trade-off relations. The derivation of \eqref{SIQ1} and \eqref{SIQ2} follows similarly (see Appendix \ref{derivation_main} for details).

Also, note that the above argument holds only when $U$ satisfies the conservation law. When there is no conservation law, the change in energy for $B$ is not necessarily linked to that for $A$. Therefore, in this case, the perfect recovery is possible even if $\ex{X_A}_{\ket {0}\bra{0}_A}-\ex{X_{A'}}_{\calE(\ket{0}\bra{0}_A)}>0$ and $\ex{X_A}_{\ket{1}\bra{1}_A}-\ex{X_{A'}}_{\calE(\ket{1}\bra{1}_A)}<0$ hold.  Hence our inequalities \eqref{SIQ1} and \eqref{SIQ2} are weakened when $U$ violates the conservation law (see Supplementary Material Supp.$\rm V\hspace{-.1em}II$). An example is known in the context of black hole information recovery when $U$ is a general Haar random unitary with no conservation laws, allowing almost perfect recovery independent of the state of $A$. This example is presented in the next section.

\section{Application to the Hayden-Preskill model with a conservation law}\label{HPwithX}

\begin{figure}[b]
		\centering
		\includegraphics[width=.45\textwidth]{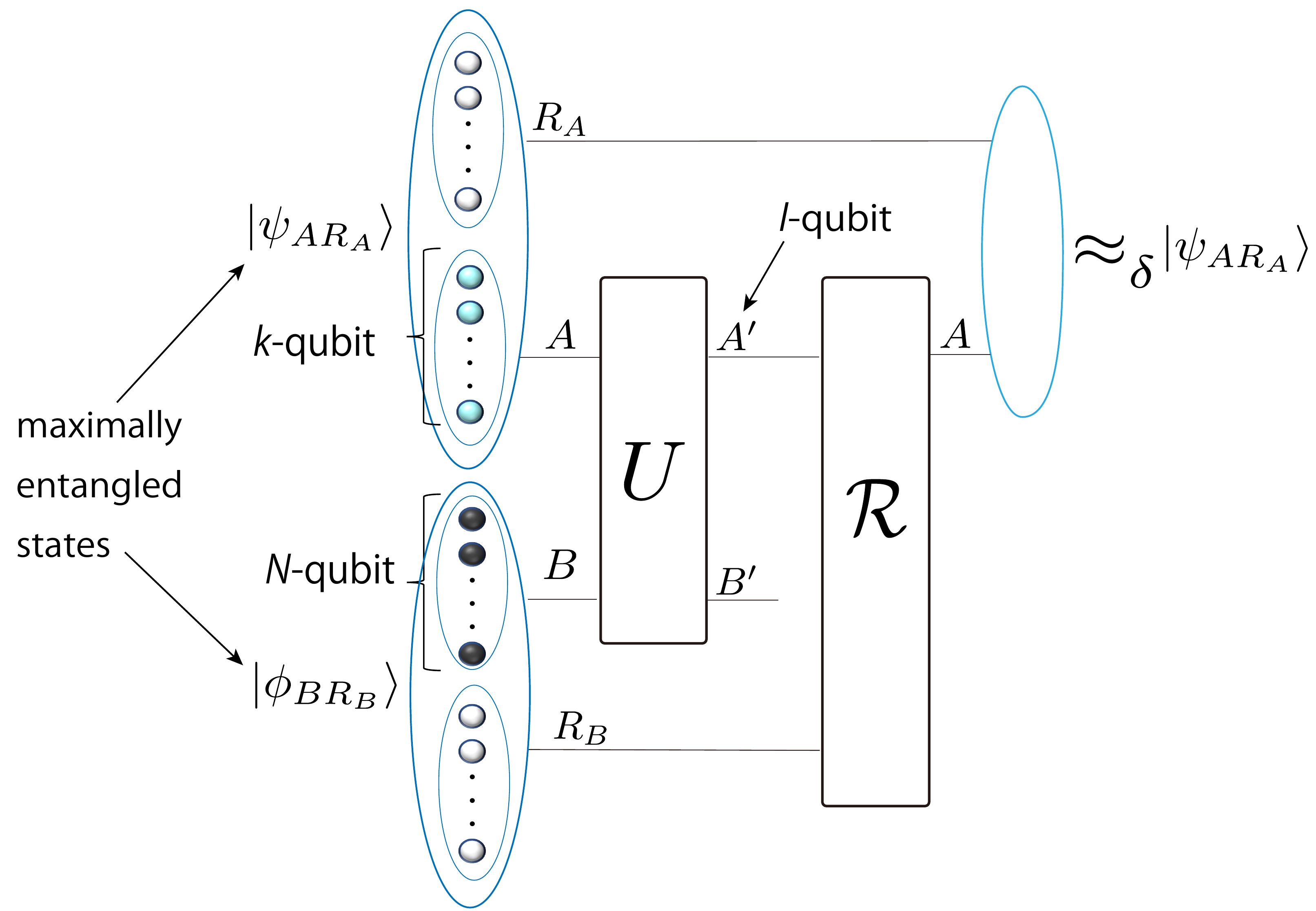}
		\caption{Schematic diagram of the HP black hole model, which is almost a special case of our setup illustrated in Fig.~\ref{setupSR}.}
		\label{HPmodel}
	\end{figure}

Our results are applicable to the black hole information recovery problems with a conservation law.

\g{Here, we briefly review} the Hayden-Preskill (HP) model \cite{Hayden-Preskill} (Fig.~\ref{HPmodel}). The HP model is a quantum mechanical model where Alice trashes her diary $A$ into a black hole $B$, and Bob tries to recover the contents of the diary through \g{Hawking radiation}, assuming that the dynamics of the black hole is unitary. 
The diary $A$ contains $k$-qubit \g{quantum information, and} is initially maximally entangled with another system $R_A$. The black hole is assumed to contain $N$-qubit quantum information, where \g{ $N:=S_{BH}$ is interpreted} as the Bekenstein-Hawking entropy. 
After throwing the diary into the black hole, the HP model assumes \g{a Haar random unitary operation that} scrambles the quantum information  \cite{Hayden-Preskill,Lashkari,kuwa-sai}. 
Another assumption is that the black hole $B$ is sufficiently old\g{, and} is maximally entangled with another system $R_B$\g{,} which is the Hawking radiation emitted from $B$ before the diary $A$ is trashed. 
Bob can use the information in $R_B$\g{, and} can capture and use the Hawking radiation emitted after $A$ is trashed, \g{denoted by} $A'$. The quantum information of $A'$ is assumed to be \g{of} $l$-qubits. Then, we perform a quantum operation $\calR$ from $A'R_B$ to $A$, and \g{recover} the initial maximally entangled state of $AR_A$.
We remark that recently realization of this recovery setup through laboratory experiment is proposed \cite{p1,p2,p3,siddiqi,lab1}.

Under this setup, Hayden and Preskill \g{established} the following \textit{upper bound} of the recovery error \cite{Hayden-Preskill}:
\begin{align}
\delta\le \mathrm{const.}\times2^{-(l-k)/2}.\label{mirrorineq}
\end{align}
\g{A remarkable aspect of this result} is that the recovery error decreases exponentially \g{with increasing $l$, and that} only a few more qubits than $k$ \g{are required to recover} the initial state with good accuracy.

	\begin{figure}[tb]
		\centering
		\includegraphics[width=.5\textwidth]{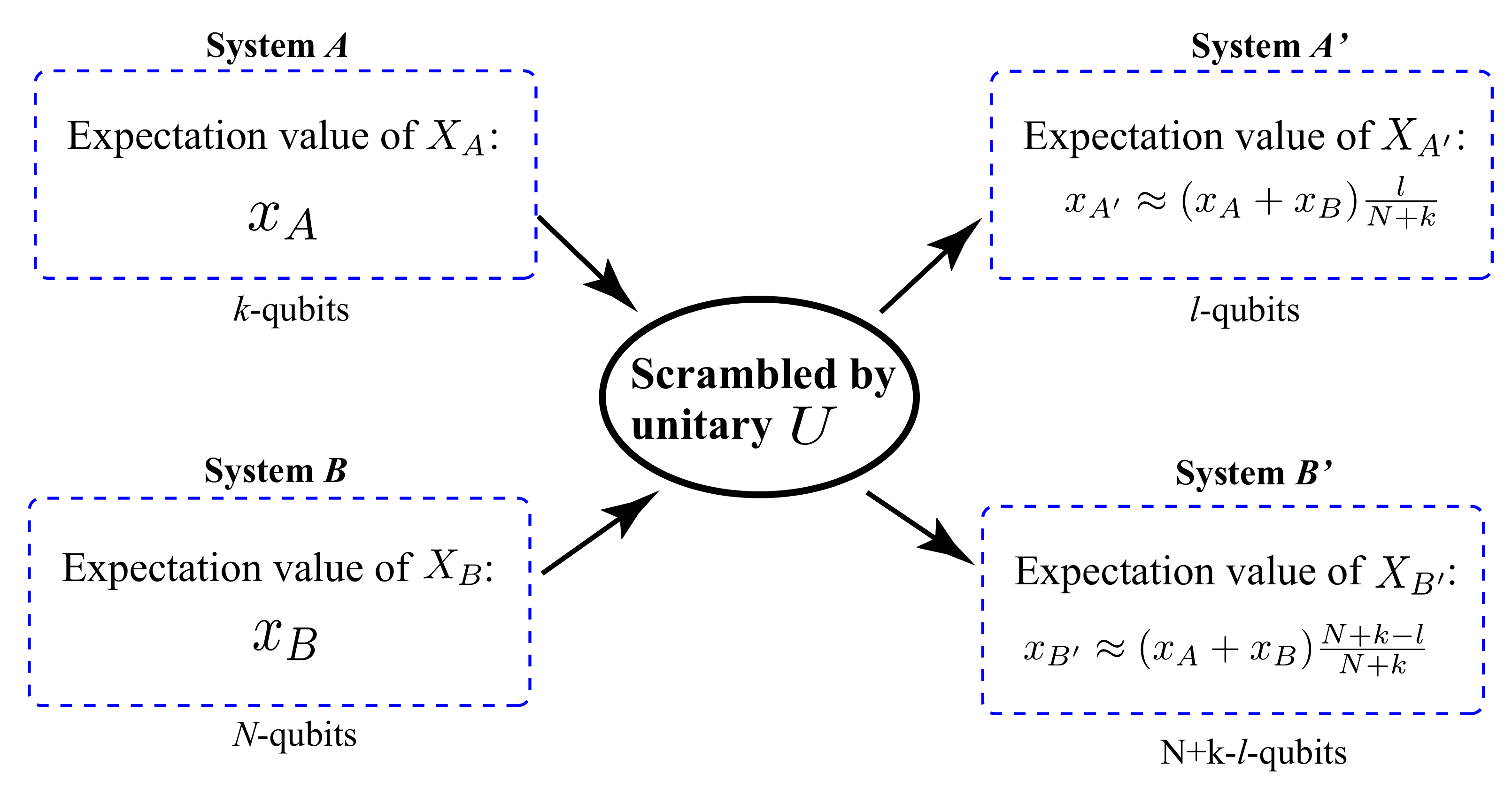}
		\caption{Schematic diagram of the assumption of how the expectation value of \g{the} conserved quantity $X$ is distributed.  In this diagram, we refer to the expectation values of $X$ in $\alpha$ as $x_\alpha$ ($\alpha=A,B,A',\mbox{ and }B'$). 
We assume that the expectation value is given through the equidistribution. 
Precisely, we assume that after the unitary time evolution $U$, the expectation values of the conserved quantity $X$ are divided \g{among} $A'$ and $B'$ in proportion to \g{the corresponding number of qubits}.}
		\label{ex-scrambling}
	\end{figure}

Note that the setup for the HP model is similar to \g{the setup described in Section} \ref{sec2}. 
The important difference is that the unitary operation of the HP model is described by the Haar random unitary without any conservation law \eqref{c-law}, whereas the dynamics of our setup has symmetry. 
We discuss the effect of this symmetry that generates a conserved quantity $X$, e.g., energy.
For simplicity, we also set the difference between minimum and the maximum eigenvalue of $X_{\alpha}$ (= $\calD_{X_{\alpha}}$ for $\alpha=A,B,A',B'$) to be equal to the number of particles of the system $\alpha$ (= $k$, $N$, $l$, $N+k-l$ for $\alpha=A,B,A',B'$). 
We do not use the Haar random unitary, but impose two weaker assumptions that are provided by typicality \cite{cano_typi}.
First, the expectation values of $X_{A'}$ and $X_{B'}$ are equi-distributed (see Fig.~\ref{ex-scrambling}).
Second, the black hole dynamics $U$ satisfies the following inequality for arbitrary eigenstates $\ket{i,a}$ and $\ket{j,b}$ of $X_A$ and $X_B$, unless the sum of the eigenvalues of $\ket{i,a}$ and $\ket{j,b}$ is too close to the maximum or minimum eigenvalues of $X_A+X_B$:
\eq{
V_{\rho'_{\alpha'|i,a,j,b,U}}(X_{\alpha'})\le\frac{1+\epsilon}{4}\min\{l,\gamma(N+k)\},\label{Rstar}
}
where $\epsilon$ is a negligibly small number describing the error of the equidistribution on the expectation value and where $\rho'_{\alpha'|i,a,j,b,U}:=\Tr_{\neg\alpha'}[U(\ket{i,a}\bra{i,a}\otimes\ket{j,b}\bra{j,b})U^\dagger]$ where $(\alpha',\neg\alpha')=(A',B') $ or $(B',A')$.
These two assumptions hold whenever $U$ thermalizes the state of the subsystem sufficiently.
Indeed, \g{when} $U$ is a typical Haar random unitary satisfying \eqref{c-law}, these two assumptions are rigorously shown to be satisfied (see Supplementary Material Supp.$\rm I\hspace{-.1em}I\hspace{-.1em}I$).
Additionally, to increase the generality of the results, we do not restrict the initial states $\ket{\psi_{AR_A}}$ and $\ket{\phi_{BR_B}}$ to the maximally entangled states.
Instead of that, we only assume that the initial state $\rho_B:=\Tr_{R_B}[\phi_{BR_B}]$ of the black hole $B$ satisfies $V_{\rho_B}(X_B)\le  \calD_{X_B}/4(=N/4)$. 
We remark that this assumption is satisfied when $X$ is energy, $B$ is a natural thermodynamic system, and $\rho_B$ is a non-zero temperature Gibbs state including the maximally mixed state.

Under the above conditions, we now use the result \eqref{SIQ2}.
In particular, when $\rho_A$ commutes with $X_A$, we can evaluate $\SA$, $\calF_f$, and $\Delta_{\max}$ in \eqref{SIQ2} as follows (for details, see Supplementary Material Supp.$\rm I\hspace{-.1em}I\hspace{-.1em}I$):
\begin{align}
\SA&\ge \gamma M(1-\epsilon) \, , \label{A-bound}\\
\sqrt{\calF_{f}}&\le2(1+\epsilon)\sqrt{\gamma(N+k^2)} \, \label{Rstar2}, \\
\Delta_{\max}&\le\gamma k(1+\epsilon) \, , \label{D-upp}
\end{align}
where $\gamma:=\left(1-l/(N+k)\right)$, and $M:=\ex{|X_A-\ex{X_A}|}_{\rho_A}$ is the mean deviation of $X_A$ in $\rho_A$.
Due to \eqref{A-bound}--\eqref{D-upp}, when $N+k>l$, we can convert \eqref{SIQ2} into the following form:
\eq{
\frac{1-\epsilon}{1+\epsilon}\frac{\gamma M}{2(\gamma k+2\sqrt{\gamma(N+k^2)})}\le\delta
}
To \g{interpret} this inequality, we set $k=\sqrt{N}$ and $M=k/2$ (\g{we can take such an} $M$ by considering a relevant $\rho_A$ and its decomposition, e.g., $\rho_A=(\rho^{\max}_{k}+\rho^{\max}_{0})/2$, where $\rho^{\max}_{x}$  is the maximally mixed state of the eigenspace of $X_A$ corresponding to eigenvalue $x$).
We then obtain the following \textit{lower bound} of the recovery error:
\eq{
\frac{\mathrm{const.}}{1+\frac{2\sqrt{2}}{\sqrt{\gamma}}}\le\delta,\label{doda}
}
where $\mathrm{const.}$ is a real number larger than $0.24$.
This inequality rigorously restricts recovery of quantum information from the black hole. 
Since this inequality depends only on the ratio between the remaining part of the black hole $B'$ and the total amount of qubits, i.e. $\gamma=1-l/(N+k)$, we see that the recovery error remains non-negligible, even after a considerable evaporation of the black hole.

\begin{figure}[tb]
		\centering
		\includegraphics[width=.45\textwidth]{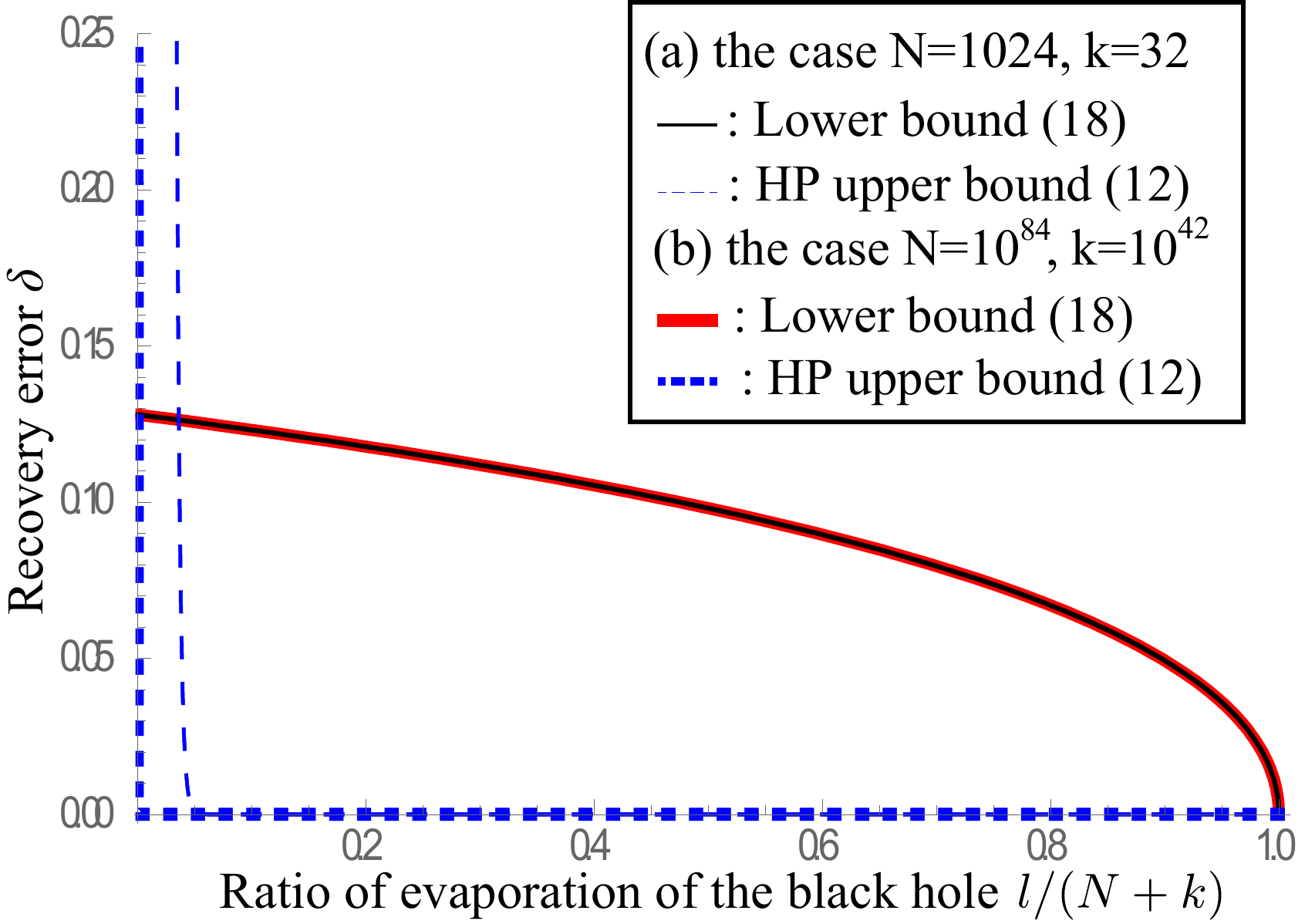}
		\caption{
Comparisons between the lower bound \eqref{doda} and the original Hayden-Preskill (HP) prediction \eqref{mirrorineq}, the former being a lower bound of the error obtained imposing the energy conservation law and the latter being an upper bound of the error without imposing any conservation law. 
Plots (a) and (b) are obtained under settings $(N=1024, k=32)$, and $(N=10^{84}, k=10^{42})$, respectively.
The horizontal axis corresponds to the ratio $l/(N+k)$ quantifying how much the black hole has evaporated after the diary $A$ is thrown in. 
No evaporation has yet occurred when the horizontal axis is 0; when the ratio is 1, the black hole has evaporated entirely.
As seen from the graphs, imposition of the energy conservation law changes the behavior of the recovery error considerably, even when $N\gg k$.
When no conservation laws apply, an almost-complete information recovery is possible when the amount of Hawking radiation emitted is equal to that of the object thrown in. 
In contrast, under the energy conservation law, there is an inevitable error that depends only on the fraction of the black hole evaporated.
The error remains even when the evaporation of the black hole is quite advanced. Here, we use a tighter version of the bound \eqref{SIQ2} for which the right-hand side is twice that of the original \eqref{SIQ2}. (see the footnote \cite{footnoteA2}).}
		\label{dodagraph} 
	\end{figure}

The restriction \eqref{doda} is important in two respects. First, this result differs at a qualitative level from the original prediction \eqref{mirrorineq}, which is given in the absence of energy conservation (see Fig.~\ref{dodagraph}).
The graphs (a) in Fig.~\ref{dodagraph} plot the lower bound \eqref{doda} of the recovery error when the energy conservation law holds, and the upper bound \eqref{mirrorineq} of the error, which holds in the absence of any conservation law.
The graphs deal with the case $k=32$ and $N=1024$.
As seen from the graphs, when there is no conservation law, the recovery of information is instantaneous, whereas a non-ignorable error remains when the energy conservation applies, even when 90 percent of the black hole evaporates.

Second, this result works even when ``Alice's diary'' $A$ is negligibly small compared with the black hole $B$.
Note that the qubit number $N$ of the black hole is considered to correspond to the Bekenstein-Hawking entropy of the black hole, and thus it is a very large number. For this reason, $k=\sqrt{N}$ is usually compatible with $k$ being much smaller than $N$.
For example, the Bekenstein-Hawking entropy of Sagittarius A (the black hole at the center of the Milky Way) is approximately equal to $10^{85}$.
In this case, $\sqrt{N}=10^{42.5}$, and thus \eqref{doda} is valid when $k/N=10^{-42.5}\ll1$.
Even in this scenario, the recovery error remains non-negligible until the 90 percent of the black hole evaporates (see the graph (b) in Fig.~\ref{dodagraph}).
Therefore, the energy conservation law provides a clear limit on the escape of information from a black hole, even if the object thrown into the black hole is much smaller than the black hole itself.

\begin{figure}[tb]
		\centering
		\includegraphics[width=.5\textwidth]{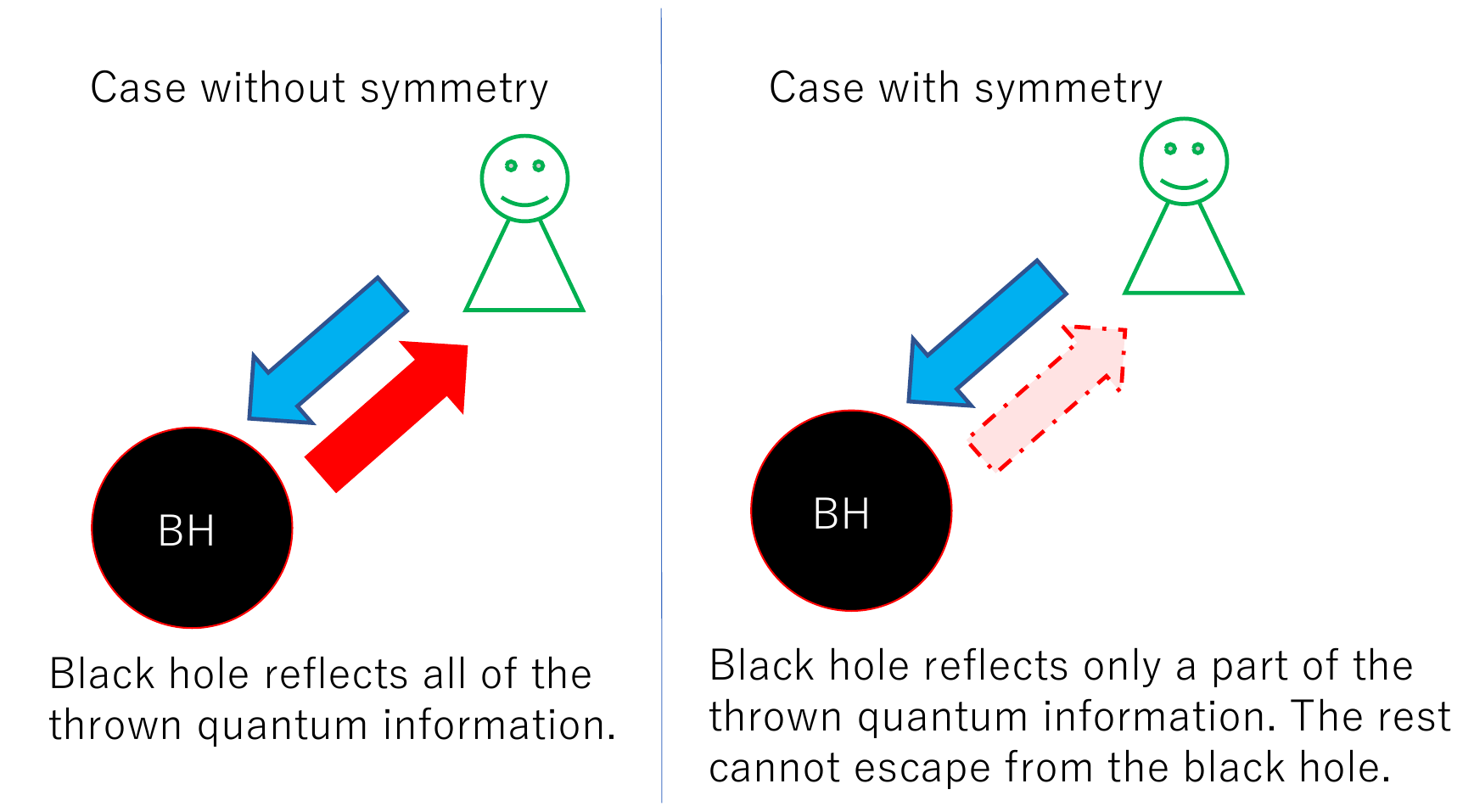}
		\caption{Schematic highlighting the difference between the original HP prediction \eqref{mirrorineq} and our results \eqref{doda} and \eqref{foggyineq}.
The original prediction treats the case of no symmetry and predicts that  to recover the original information within $\delta$, we only have to collect $k+O(\log \delta)$ Hawking radiation particles. Therefore, we can interpret black holes as information mirrors.
According to the bounds \eqref{doda} and \eqref{foggyineq}, when there is a conservation law, the situation \g{changes radically} \cite{footnote-contribution}. 
Both bounds predict that the error $\delta$ remains large, even if one collects much more information than $k$-qubits from Hawking radiation.
The first bound \eqref{doda} guarantees that this effect works even when the black hole is much larger than Alice's diary.
The second bound \eqref{foggyineq} also predicts that when the size of the black hole is comparable to that of the diary, the error cannot be smaller than $const/(1+N/k)$ until the black hole completely evaporates.
In other words, a part of the quantum information is not reflected, and it cannot escape from the black hole.}
		\label{foggyBH} 
	\end{figure}

The above bound restricts general scenarios, including those for which the black hole is much larger than the diary.
When the size of the black hole is comparable to the size of the diary, \eqref{SIQ2} provides another restriction.
Since $\sqrt{F_f}$ is always smaller than the square of the qubit number of $B'$, we obtain
\eq{
\sqrt{\calF_f}\le\gamma (N+k).\label{Ffbound}
}
Combining \eqref{A-bound}, \eqref{D-upp}, and \eqref{Ffbound}, we obtain
\begin{align}
\frac{1-\epsilon}{1+\epsilon}\times\frac{M}{2(N+2k)}\le \delta  \, .
\end{align}
Similar to the derivation of \eqref{doda}, we set $M= k/2$ and obtain the following lower bound of the recovery error:
\begin{align}
\frac{\rm const.}{1+N/2k}\le\delta,\label{foggyineq}
\end{align}
where the constant on the left-hand side is larger than $1/9$.
Unlike the inequality \eqref{doda}, the inequality \eqref{foggyineq} becomes trivial when $N\gg k$.
However, when $N$ is comparable to $k$, the inequality \eqref{foggyineq} gives a non-negligible lower bound for the error $\delta$ that is independent of $l$. The lower bound is valid whenever $l<N+k$ holds. Therefore, when the ratio $N/k$ is not so large, the recovery error cannot be small until $l=N+k$ holds. In other words, the recovery of the quantum information associated with Alice's diary does not finish until the black hole completely evaporates.

Our results \eqref{doda} and \eqref{foggyineq} show that the behavior of the quantum information recovery under the energy conservation law is qualitatively different from the original analysis in the HP model without energy conservation \eqref{mirrorineq} as evident in Fig.~\ref{foggyBH}.

\section{Applications to quantum information processing with symmetry}\label{APPnonsc}

Our \g{formulae} \eqref{SIQ1} and \eqref{SIQ2} are applicable to various phenomena other than scrambling. Below, we apply our bounds to quantum error correction (QEC) and implementation of unitary gates as examples of application.

\subsection{Application 1: quantum error correcting codes with symmetry}

In QEC, we encode quantum information in a {\it logical} system $A$ into a \textit{physical} system $A'$ which is a composite system of $N$ subsystems $\{A'_j\}^{N}_{j=1}$ by an \textit{encoding channel} $\calC$, which is a CPTP map. After the encoding, noise occurs on the physical system $A'$, which is described by a CPTP-map $\calN$.
Finally, we \g{recover} the initial state by performing a recovery CPTP map $\calR$ from $A'$ to $A$.
Then, the recovery error is defined as 
\begin{align}
\delta_C &:=\!\! \min_{\mbox{$\calR $ } \atop (A' \to A ) } \!\!\!
\max_{\rho_{AR_A}}D_F(\rho_{AR_A},\calR\circ\calN\circ\calC(\rho_{AR_A})) \, .
\end{align}

Here we focus on the case where the channel $\calC$ \textit{transversal} with respect to a unitary representation $\{U_{A,t}\}_{t\in \mathbb{R}}$, i.e.
\begin{align}
\calC\circ\calU^{A}_t(...)=\calU^{A'}_t\circ\calC(...), \enskip\forall t\in \mathbb{R},
\end{align}
where $\calU^{\alpha}_t(...)=e^{iX_{\alpha}t} (...) e^{-iX_{\alpha}t}$ ($\alpha=A,A'$) and $X_{A'}$ is described as $X_{A'}:=\sum_{j}X_{A'_j}$ with operators $\{X_{A'_{j}}\}^{N}_{j=1}$ on $A'_{j}$.

\g{The limitations} of the transversal codes is a critical issue \cite{Eastin-Knill,e-EKFaist,e-EKKubica,e-EKZhou,e-EKYang}.
It is shown that the code $\calC$ cannot make $\delta_C=0$ for local noise by the Eastin-Knill theorem \cite{Eastin-Knill}.
Recently, the Eastin-Knill theorem were extended to the cases where $\delta_C$ is finite \cite{e-EKFaist,e-EKKubica,e-EKZhou,e-EKYang}.
These approximate Eastin-Knill theorems show that the size $N$ of the physical system must be inversely proportional to $\delta_C$.

From \eqref{SIQ2}, we can derive a variant of the approximate Eastin-Knill theorem as a corollary (see Supplementary Material Supp.$\rm V$):
\begin{align}
\frac{\calD_{X_A}}{4 \calD_{\max} ( N+\calD_{X_A}/(4\calD_{\max}) )}\le \delta_C \, . 
\label{almostFaist-T}
\end{align}
Here $\calD_{\max}:=\max_{i}\calD_{X_{A'_i}}$. 
Our bounds \eqref{SIQ1} and \eqref{SIQ2} are also applicable to cases where $\calN$ is non-local, and more general covariant codes with general Lie group \g{symmetries} (see Supplementary Materials Supp.$\rm V\hspace{-.1em}I$).

\subsection{Application 2: Implementation of unitary dynamics}\label{APPA}

The last application is on the implementation of the unitary dynamics on the subsystem $A$ through the unitary time-evolution of the isolated total system \cite{TSS,TSS2}. This subject has a long history in the context of the limitation on the quantum computation imposed by conservation laws \cite{ozawa1,ozawa2,Karasawa2007,Karasawa2009,TSS,TSS2}, which is considered as an extension of the Wigner-Araki-Yanase theorem \cite{Wigner1952,Araki-Yanase1960,Yanase1961,OzawaWAY} to quantum computation. Suppose that we try to approximately realize a desired unitary dynamics $U_A$ on a system $A$ as a result of the interaction between another system $B$. We assume that the interaction satisfies a conservation law: $[U,X_A+X_B]=0$. We then define the implementation error $\delta_U$ as:
\begin{align}
\delta_U & :=\max_{\rho_{AR_A}:\mathrm{pure}}D_F(\rho_{AR_A},\mathrm{id}_{R_A}\otimes\calU^{\dagger}_{A}\circ\calE(\rho_{AR_A})).
\end{align}
Here $\calU^{\dagger}_{A}(...):=U^{\dagger}_A(...)U_A$. The quantum operation $\calE$ is the CPTP-map where $A'$ is equal to $A$.
Then, by definition, the inequality $\delta_U\ge\max_{\rho_{AR_A}}\tilde{\delta}\ge\max_{\rho_{AR_A}}\delta$ holds.
Therefore, we can directly apply \eqref{SIQ1'-Smain} and \eqref{SIQ2} to this problem.
In particular, we obtain the following inequality from \eqref{SIQ1'-Smain}:
\begin{align}
  \frac{\SA}{2(\sqrt{\calF_{B}}+4\Delta_{+})}\le \delta_U 
    \label{C-WAY}
\end{align}
This inequality gives a trade-off between the implementation error and the coherence cost of implementation of unitary gates.
The physical message is that the implementation of the desired 
  unitary operator requires the quantum coherence inversely proportional to the square of the implementation error.
We remark that several similar bounds for the coherence cost were already given in Refs. \cite{TSS,TSS2}. However, we stress that \eqref{C-WAY} is given as a corollary of a more general relation \eqref{SIQ1}.
Moreover, as we pointed out several times, our results can be extended to the cases of general Lie group symmetries. In Supplementary Materials \ref{g-symmetry}, we show a generalized version of \eqref{C-WAY} for such cases.

\section{Numerical check of the main inequality \eqref{SIQ2}}\label{numerical check}

So far, we have applied our main result \eqref{SIQ2} to information scrambling and quantum information processing. Our bound works regardless of the size of systems $A$ and $B$ and is especially tight when system B is large. For example, as shown in the previous section, the bounds \eqref{almostFaist-T} and \eqref{C-WAY} obtained from \eqref{SIQ2} become optimal when $\calF$ is very large. 

We next give a numerical check of \eqref{SIQ2} for situations in which the system $B$ is small to show that the bound \eqref{SIQ2} also works rigorously.
For this purpose, we prepare a concrete model.
Let us consider four qubits, $A$, $R_A$, $B_1$, and $R_{B_1}$.
We also take a natural number $b$ and a $b+2$-dimensional system $B_2$.
For $A$, $B_1$, and $B_2$, we define Hermitian operators $X_A$, $X_{B_1}$ and $X_{B_2}$ as follows:
\eq{
X_A&:=\ket{1}\bra{1}_{A},\\
X_{B_1}&:=\ket{1}\bra{1}_{B_1},\\
X_{B_2}&:=\sum^{b+2}_{x=1}2x\ket{x}\bra{x}_{B_2},
}
where $\{\ket{x}_{A}\}_{x=0,1}$, $\{\ket{x}_{B_1}\}_{x=0,1}$, $\{\ket{x}_{B_2}\}_{x=0,...,d+2}$ are orthogonal basis on $A$, $B_1$, and $B_2$, respectively.

\begin{figure}[tb]
		\centering
		\includegraphics[width=.45\textwidth]{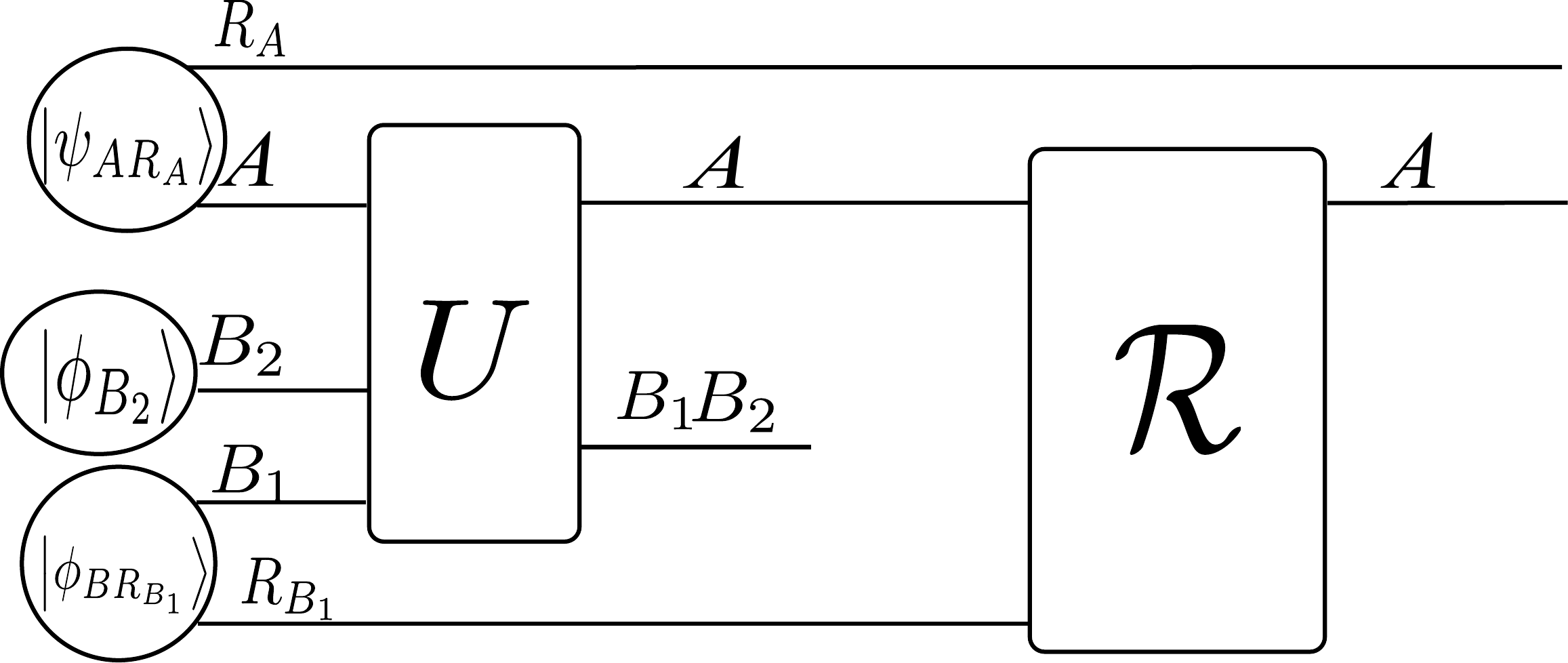}
		\caption{Schematic diagram of the concrete model for numerical check.}
		\label{check_fig}
	\end{figure}

For the above system, we prepare the following initial states:
\eq{
\ket{\psi_{AR_A}}&:=\frac{\ket{00}_{AR_A}+\ket{11}_{AR_A}}{\sqrt{2}},\\
\ket{\phi_{B_1R_{B_1}}}&:=\frac{\ket{00}_{B_1R_{B_1}}+\ket{11}_{B_1R_{B_1}}}{\sqrt{2}},\\
\ket{\phi_{B_2}}&:=\frac{1}{\sqrt{b}}\sum^{b}_{x=1}\ket{x}_{B_2}.
} 
We prepare the following unitary $U_{AB_1B_2}$ on $AB_1B_2$:
\eq{
&U_{AB_1B_2}:=\sum_{1\le k\le b+1}
 \big(\ket{11}\bra{00}_{AB_1}\otimes\ket{k-1}\bra{k}\non
&+\ket{10}\bra{01}_{AB_1}\otimes\ket{k}\bra{k}
+\ket{01}\bra{10}_{AB_1}\otimes\ket{k}\bra{k}\non
&+\ket{00}\bra{11}_{AB_1}\otimes\ket{k}\bra{k-1}\big)\nonumber\\
&+\ket{00}\bra{00}_{AB_1}\otimes\ket{0}\bra{0}
+\ket{01}\bra{01}_{AB_1}\otimes\ket{0}\bra{0}\non
&+\ket{10}\bra{10}_{AB_1}\otimes\ket{0}\bra{0}
+\ket{11}\bra{11}_{AB_1}\otimes\ket{b+1}\bra{b+1}.
}
After a unitary operation $U$, we perform a CPTP map on $AR_{B_1}$ to recover  the initial state $\ket{\psi_{AR_A}}$ on $AR_A$.
Following our framework, the minimum recovery error is defined as follows:
\eq{
\delta=\min_{\calR}D_F(\psi_{AR_A},id_{R_A}\otimes\calR\circ\calE_{A\rightarrow AR_{B_1}}(\psi_{AR_A})).
}
Here $\calE_{A\rightarrow AR_{B_1}}(...):=\Tr_{B_1B_2}[U_{AB_1B_2}(...\otimes\phi_{B_2}\otimes\phi_{BR_{B_{1}}}) U^\dagger_{AB_1B_2}]$.

Since $U_{AB_1B_2}$ conserves $X_{A}+X_{B_1}+X_{B_2}$ and $\rho_A:=\Tr_{R_A}[\psi_{AR_A}]=\frac{\ket{0}\bra{0}+\ket{1}\bra{1}}{2}$, our bound \eqref{SIQ2} is applicable to the above model.
In this case, each term on the right-hand side of \eqref{SIQ2} is evaluated as follows:
\eq{
\calA&\ge1,\label{check_ex1}\\
\Delta_{\max}&\le1,\label{check_ex2}\\
\calF_f&=\frac{b^2+14}{3}.\label{check_ex3}
}
Thus, for each $b$ $(=2,....,7)$, the inequality \eqref{SIQ2} predicts that there is no recovery that can make the recovery error smaller than the black dot in Fig.~\ref{figXXX}.
In Fig.~\ref{figXXX}, we confirm the prediction by numerical calculation.
We randomly generate a CPTP-map from $AR_{B_1}$ to $A$ as a recovery map $10^8$ times for each $b$, and plot the smallest recovery error obtained by the random recoveries (blue point in Fig.~\ref{figXXX}).
Fig.~\ref{figXXX} clearly shows that all of the random recoveries satisfy the bound \eqref{SIQ2}.

\begin{figure}[tb]
		\centering
		\includegraphics[width=0.5\textwidth]{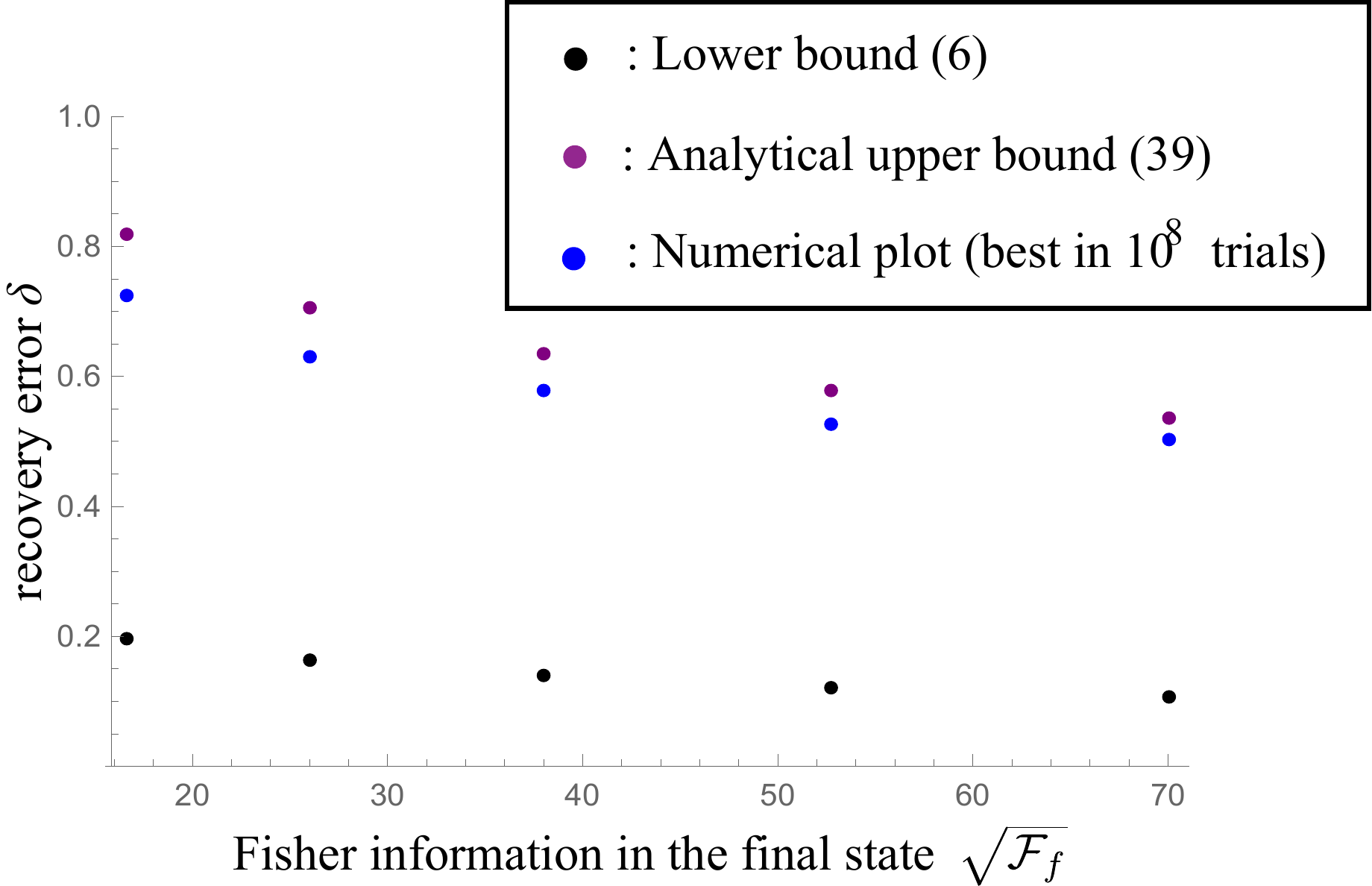}
		\caption{Comparisons between lower bound, analytical upper bound, and numerical upper bound from plots of recovery error versus $\calF_f$. The data points correspond to $b=3,4,5,6,7$. Here, we use a tighter version of the bound \eqref{SIQ2} for which the right-hand side is twice  that of the original \eqref{SIQ2} (see the footnote \cite{footnoteA2}).}
		\label{figXXX}
	\end{figure}

Furthermore, to ensure that a sufficiently large number of trials is obtained, we give the following specific upper bound of the minimum recovery error by extending the \r{A}berg protocol \cite{catalyst} to the model:
\eq{
\delta\le\sqrt{\frac{2}{b}}.\label{upper_ex}
}
Combined with \eqref{check_ex3}, \eqref{upper_ex} gives the following upper bound for each $b$ (purple points in Fig.~\ref{figXXX}):
\eq{
\delta\le\sqrt{\frac{2}{\sqrt{3\calF_f-14}}}.
}
Clearly, the best of the randomly generated recoveries performs better than this upper bound. Therefore, our number of trials is considered sufficient.
In addition, the lower bound \eqref{SIQ2} and the upper bound \eqref{upper_ex} are off by a factor of at most 2 or 3 (Fig. \ref{figXXX}). This suggests that even for a small system, the evaluation of the recovery error by the bound \eqref{SIQ2} is acceptable.

\section{Summary}
In summary, we have clarified fundamental limitations to information recovery from dynamics with general Lie group symmetry. 
As demonstrated in Appendix \ref{AppC}, all results in this paper are given as corollaries of \eqref{SIQ2}. 
It is remarkable that one single inequality \eqref{SIQ2} provides a unifying limit for black holes, quantum error correcting codes and unitary gates.
In particular, the HP model with the energy conservation, some of the information thrown into the black hole cannot escape to the end. 
We also remark that our prediction may be \g{validated in} laboratory experiments that mimic the HP model with symmetry \cite{p1,p2,p3,siddiqi,lab1}.
Moreover, an intriguing topic to consider is the relationship between our relations and a recent argument on the weak violation of global symmetries in quantum gravity \cite{Banks-Seiberg,Harlow-Ooguri1,Harlow-Ooguri2}.
The effect of symmetry on the OTOC decay is another intresting future direction from our results.


\begin{acknowledgments}
	The present work was supported by JSPS Grants-in-Aid for Scientific Research No. JP19K14610 (HT), No. JP22H05250 (HT), No. JP25103003 (KS), and No. JP16H02211 (KS), and JST PRESTO No. JPMJPR2014 (HT), and JST MOONSHOT (H. T. Grant No. JPMJMS2061). 
We thank Richard Haase, PhD, from Edanz (https://jp.edanz.com/ac) for editing a draft of this manuscript.
\end{acknowledgments}

\appendix

\section{Tips for resource theory of asymmetry and quantum Fisher information}\label{AppA}
For \g{convenience, we discuss} the resource theory of asymmetry and the quantum Fisher information briefly.
The resource theory of asymmetry is a resource theory \cite{Bartlett,Gour,Marvian,Marvian-thesis,Marvian2018,Lostaglio2018,TSS,TSS2,TN-WAY,Takagi2018,Marvian distillation} \g{that handles the symmetries of the} dynamics.
In the main text, we consider the simplest case for which the symmetry is $R$ or $U(1)$ and the dynamics obeys a conservation law.
\g{More} general cases are introduced in Supplementary Material Supp.I.

We firstly introduce covariant operations, which are free operations of the resource theory of asymmetry.
If a CPTP map $\calC$ from $S$ to $S'$ and Hermite operators $X_S$ and $X_{S'}$ on $S$ and $S'$ satisfy the following relation, we call $\calC$ a \textit{covariant} operation with respect to $X_S$ and $X_{S'}$:
\begin{align}
\calC(e^{iX_St}...e^{-iX_St})=e^{iX_{S'}t}\calC(...)e^{-iX_{S'}t},\enskip\forall t.\label{covariantcond}
\end{align}
A very important property of covariant operations is that we can implement any covariant operation using a unitary operation satisfying a conservation law and a quantum state which commutes with the conserved quantity.
To be specific, let us \g{consider} a covariant operation $\calC$ with respect to $X_S$ and $X_{S'}$.
Then, there exist quantum systems $E$ and $E'$ satisfying $SE=S'E'$,  Hermite operators $X_{E}$ and $X_{E'}$ on $E$ and $E'$, a unitary operation $U$ on $SE$ satisfying $U(X_S+X_{E})U^{\dagger}=X_{S'}+X_{E'}$, and a \textit{symmetric} state $\mu_E$ on $E$ satisfying $[\mu_{E},X_E]=0$ such that \cite{Marvian distillation}
\begin{align}
\calC(...)=\Tr_{E'}[U(...\otimes\mu_E) U^{\dagger}].
\end{align}

The $SLD$-quantum Fisher information for the family $\{e^{-iXt}\rho e^{iXt}\}_{t\in\mathbb{R}}$, described as $\calF_{\rho_S}(X_S)$, is a standard resource measure in the resource theory of asymmetry \cite{Takagi2018, Marvian distillation}.
It is also known as a standard measure of \g{quantum fluctuation}, since it is related to the variance $V_{\rho_S}(X_S):=\ex{X^2_S}_{\rho_S}-\ex{X_S}^2_{\rho_S}$ as follows \cite{Q-Fisher=Q-fluctuation1, Q-Fisher=Q-fluctuation2, Marvian distillation}:
\begin{align}
\calF_{\rho_S}(X_S)&=4\min_{\{q_i,\phi_i\}}\sum_{i}q_iV_{\phi_i}(X_S)\label{F=4V}\\
&=4\min_{\ket{\Psi_{SR}},X_R}V_{\Psi_{SR}}(X_S+X_R)\label{F=4V2}
\end{align}
where $\{q_i,\phi_i\}$ runs over the ensembles satisfying $\rho=\sum_{i}q_i\phi_i$ and each $\phi_i$ is pure, and $\{\ket{\Psi_{SR}},X_R\}$ runs over purifications of $\rho_S$ and Hermitian operators on $R$.
The equality of \eqref{F=4V} shows that $\calF_{\rho}(X)$ is the minimum average of the fluctuation arising from quantum superposition.
Note that it also means that if $\rho$ is pure, $\calF_{\rho}(X)=4V_{\rho}(X)$ holds.
The $\ket{\Psi_{SR}}$ and $X_{R}$ achieving the minimum of $V_{\Psi_{SR}}(X_S+X_R)$ in \eqref{F=4V2} are $\ket{\Psi_{SR}}:=\sum_{l}\sqrt{r_l}\ket{l_S}\ket{l_R}$ and 
\begin{align}
X_{R}:=\sum_{ll'}\frac{2\sqrt{r_{l}r_{l'}}}{r_{l}+r_{l'}}\bra{l_{S}}X_{S}\ket{l_{S}}\ket{l'_{R}}\bra{l_{R}},\label{bestXR}
\end{align}
where $\{r_l\}$ and $\{\ket{l_{S}}\}$ denote the eigenvalues and eigenvectors of $\rho_S$ \cite{Marvian distillation}.

\section{Note on entanglement fidelity and average gate fidelity}\label{AppB}

In this subsection, we show that the recovery error $\delta$ can approximate the average of the recovery error which is averaged thorough pure states on the entire Hilbert space of $A$ or on its subspace using special initial states as $\ket{\psi_{AR_A}}$ \cite{3Horodecki}.

For explanation, let us introduce the average fidelity and the entanglement fidelity.
For a CPTP map $\calC$ from a quantum state $Q$ to $Q$,  these two quantities are defined as follows:
\begin{align}
F^{(2)}_{\mathrm{avg}}(\calC)&:=\int d\psi_Q F(\ket{\psi_Q},\calC(\psi_Q))^2,\\
F^{(2)}_{\mathrm{ent}}(\calC)&:=F(\ket{\psi_{QR_Q}},1_{R_Q}\otimes\calE(\psi_{QR_Q}))^2,
\end{align}
where $\ket{\psi_{QR_Q}}$ is a maximally entangled state  between $Q$ and $R_Q$, and the integral is taken with the uniform (Haar) measure on the state space of $Q$.
For these two quantities, the following relation is known \cite{3Horodecki}:
\begin{align}
F^{(2)}_{\mathrm{avg}}(\calC)=\frac{d_QF^{(2)}_{\mathrm{ent}}(\calC)+1}{d_Q+1}.\label{avg-ent}
\end{align}
Let us take a subspace $\calS$ of the state space of $A$, and define the following average recovery error:
\begin{align}
&\delta^{(2)}_{\mathrm{avg},\calS}
:=\nonumber\\
&\min_{\mbox{$\calR$ on $A'R_B$}}\int_{\calS} d\psi_AD_F(\ket{\psi_{A}},\calR(\Tr_{B'}U(\psi_{A}\otimes\phi_{BR_B})U^\dagger))^2.
\end{align}
Then, due to \eqref{avg-ent}, when we set $\ket{\psi_{AR_A,\calS}}=\frac{\sum_{i}\ket{i}_A\ket{i}_{R_A}}{\sqrt{d_{\calS}}}$ where $\{\ket{i}_A\}$ is an arbitrary orthonormal basis of $\calS$ and $d_{\calS}$ is the dimension of $\calS$, the recovery error $\delta_{\calS}:=\delta(\ket{\psi_{AR_A,\calS}},\ket{\phi_{BR_B}},U)$ satisfies the following relation:
\begin{align}
\delta^{(2)}_{\mathrm{avg},\calS}=\frac{d_{\calS}}{d_\calS+1}\delta^2_{\calS}.
\end{align}
Therefore, when we use a maximally entangled state between a subspace of $A$ and $R_B$ as $\ket{\psi_{AR_A}}$, the recovery error $\delta$ for the $\ket{\psi_{AR_A}}$ approximates the average of recovery error which is averaged through all pure states of the subspace of $A$.

\begin{figure*}[tb]
		\centering
		\includegraphics[width=.8\textwidth]{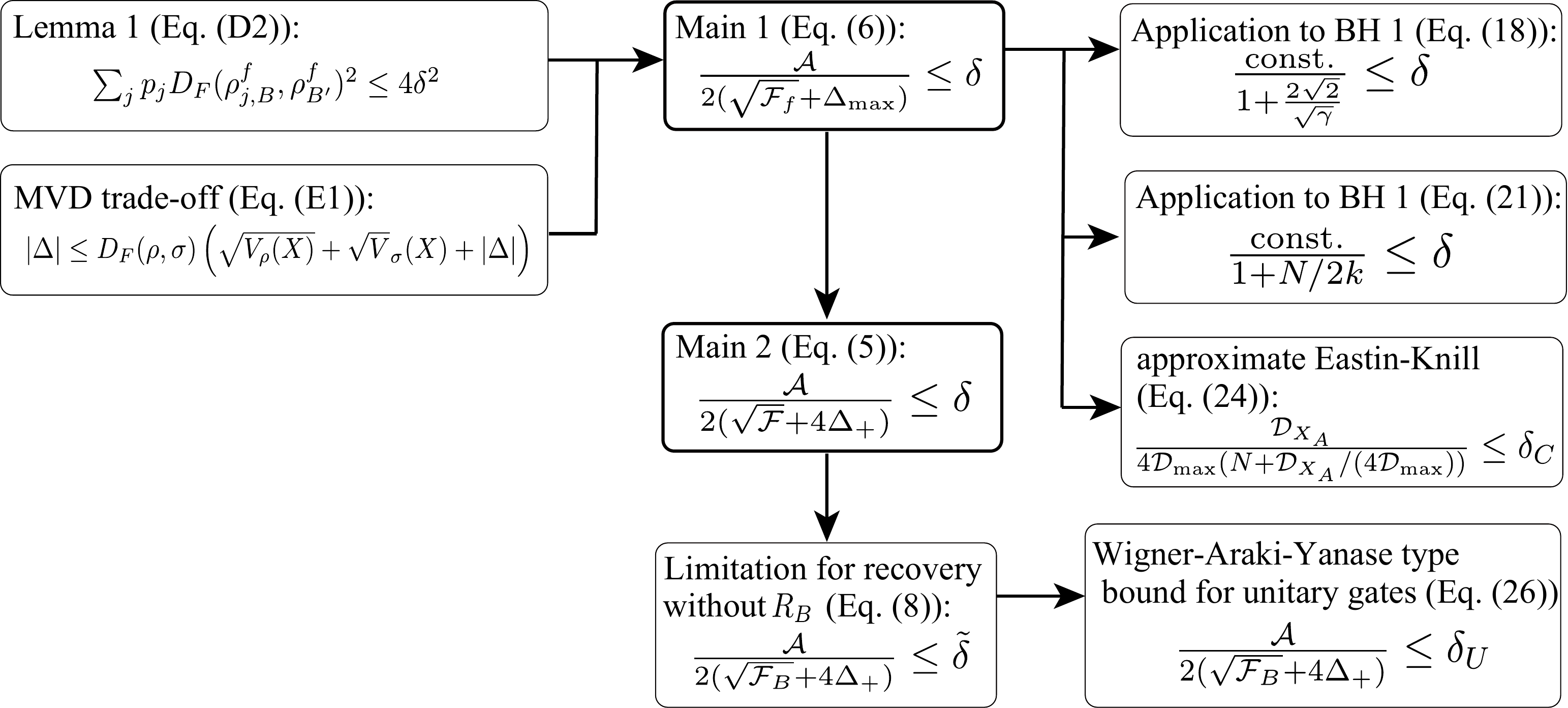}
		\caption{Schematic depicting the relationship between the main results and applications.}
		\label{methods_flow} 
	\end{figure*}

\section{Relations between main results and applications in this paper}\label{AppC}
Next, we show the relation between the main results and applications in this paper (Fig.~\ref{methods_flow}).
We derive \eqref{SIQ2} from two lemmas which we give in the next two subsections.
All of the physical results in this paper including \eqref{SIQ1} and \eqref{foggyineq} are given as corollaries of \eqref{SIQ2}.
In that sense, \eqref{SIQ2} is a universal restriction on information recovery from dynamics with Lie group symmetry.
In addition to what is described in the main text, various results can be given in a similar way. 
For instance, we can derive the Wigner-Araki-Yanase theorem for unitary gates from \eqref{SIQ1'-Smain}. 
We also derive another restriction on HP model with symmetry from \eqref{SIQ1}.

We remark that there exist several variations and generalizations of the results in Fig.~\ref{methods_flow}.
For instance, in Appendix \ref{tighterSIQ}, we derive tighter variations of \eqref{SIQ1} and \eqref{SIQ2}.
We also extend \eqref{SIQ1} and \eqref{SIQ2} to general Lie group symmetries in Supplementary Material Supp.$\rm V\hspace{-.1em}I$.

\section{Trade-off relation between irreversibility and back-reaction}

In the derivation of \eqref{SIQ1} and \eqref{SIQ2}, we use the following lemma:
\begin{lemma}\label{SCL}
In the setup of \g{Section} 2, let us \g{consider} an arbitrary decomposition of the initial state of $A$ as $\rho_A=\sum_{j}p_j\rho_j$.
We also refer to the final states of $B'$ for the cases where the initial states of $A$ are $\rho_j$ and $\rho_A$ as $\rho^{f}_{j,B'}$ and $\rho^{f}_{B'}$, respectively.
Namely, $\rho^{f}_{j,B'}:=\Tr_{A'}[U(\rho_j\otimes\rho_B)U^\dagger]$ and $\rho^{f}_{B'}:=\Tr_{A'}[U(\rho_A\otimes\rho_B)U^\dagger]$ where $\rho_B:=\Tr_{R_B}[\rho_{BR_B}]$.
Then, there exists a state $\sigma_{B'}$ such that
\begin{align}
\sum_{j}p_jD_{F}(\rho^{f}_{j,B'},\sigma_{B'})^2\le\delta^2.\label{SCL-S1}
\end{align}
Moreover, the following inequality holds:
\begin{align}
\sum_{j}p_jD_{F}(\rho^{f}_{j,B'},\rho^{f}_{B'})^2\le4\delta^2.\label{SCL-S2}
\end{align}
\end{lemma}
Lemma \ref{SCL} holds even when $U(X_A+X_B)U^\dagger\ne X_{A'}+X_{B'}$.
The proof of this lemma is given in Appendix \ref{S-SCL}.
Roughly speaking, this lemma means that when the recovery error $\delta$ is small (i.e. the realized CPTP map $\calE$ is approximately reversible), then the final state of $B'$ becomes almost independent of the initial state of $A$.

This lemma is a generalized version of (16) in Ref. \cite{TSS} and Lemma 3 in Ref. \cite{TSS2}. The original lemmas are given for the implementation error of unitary gates, and used for lower bounds of resource costs to implement desired unitary gates in the resource theory of asymmetry \cite{TSS,TSS2} and in the general resource theory \cite{TT}.

\section{mean-variance-distance trade-off relation}

For an arbitrary Hermite operator $X$ and arbitrary states $\rho$ and $\sigma$, there is a trade-off relation between the difference of expectation values $\Delta:=\ex{X}_{\rho}-\ex{X}_{\sigma}$, the variances $V_{\rho}(X)$ and $V_{\sigma}(X)$, and the distance between $\rho$ and $\sigma$ \cite{Katsube}:
\begin{align}
|\Delta|\le D_F(\rho,\sigma)(\sqrt{V_{\rho}(X)}+\sqrt{V_{\sigma}(X)}+|\Delta|),\label{RCR2}
\end{align}
This is an improved version of the original inequality (15) in Ref. \cite{TSS}.
In the original inequality, the purified distance $D_F(\rho,\sigma)$ is replaced by the Bures distance $L(\rho,\sigma):=\sqrt{2(1-F(\rho,\sigma))}$.
\g{These} inequalities mean that if two states have different expectation values and are close to each other, then at least one of the two states \g{exhibits} large fluctuation.

\section{Properties of variance and expectation value of the conserved quantity $X$}

We use several properties of variance and expectation value of the conserved quantity $X$.
In our setup described in Section \ref{sec2}, we have assumed that the unitary dynamics $U$ satisfies the conservation law of $X$: $U(X_A+X_B)U^\dagger=X_{A'}+X_{B'}$.
Under this assumption, for arbitrary states $\xi_A$ and $\xi_B$ on $A$ and $B$, the following relations hold:
\begin{align}
\ex{X_A}_{\xi_A}-\ex{X_{A'}}_{\xi^{f}_{A'}}=&\ex{X_{B'}}_{\xi^{f}_{B'}}-\ex{X_{B}}_{\xi_{B}}.\label{Eeq}\\
\sqrt{V_{\xi^{f}_{B'}}(X_{B'})}\le&\sqrt{V_{\xi^f_{A'}}(X_{A'})}\non
&+\sqrt{V_{\xi_A}(X_A)}+\sqrt{V_{\xi_B}(X_B)}\nonumber\\
\le&\sqrt{V_{\xi_{B}}(X_{B})}+\Delta_{+},\label{Vineq}\\
\sqrt{V_{\xi_{B}}(X_{B})}\le&\sqrt{V_{\xi^f_{A'}}(X_{A'})}\non
&+\sqrt{V_{\xi_A}(X_A)}+\sqrt{V_{\xi^f_{B'}}(X_{B'})}\nonumber\\
\le&\sqrt{V_{\xi^{f}_{B'}}(X_{B'})}+\Delta_{+},\label{Vineq2-S}
\end{align}
where $\xi^{f}_{A'}:=\calE(\xi_A)=\Tr_{B'}[U(\xi_A\otimes\xi_B)U^{\dagger}]$ and $\xi^{f}_{B'}:=\Tr_{A'}[U(\xi_A\otimes\xi_B)U^{\dagger}]$.
We show these two relations in Appendix \ref{S-VE}.

\section{Derivation of the limitations of information recovery error (case of single conserved quantity)}\label{derivation_main}

Combining the above three methods, we can derive our main results \eqref{SIQ1} and \eqref{SIQ2}.
We first decompose $\rho_A=\sum_{j}p_{j}\rho_{j}$ such that $\SA=\sum_{j}p_j|\Delta_j|$. 
Then, due to \eqref{Eeq}, we obtain
\begin{align}
|\Delta_j|=|\ex{X_{B'}}_{\rho^{f}_{j,B'}}-\ex{X_{B'}}_{\rho^{f}_{B'}}|.\label{Eeq2}
\end{align}
Now, we derive \eqref{SIQ2} as follows:
\begin{align}
\SA&\stackrel{(a)}{=}\sum_{j}p_j|\ex{X_{B'}}_{\rho^{f}_{j,B'}}-\ex{X_{B'}}_{\rho^{f}_{B'}}|\nonumber\\
&\stackrel{(b)}{\le}\sum_{j}p_jD_{F}(\rho^{f}_{j,B'},\rho^{f}_{B'})\non
&\times\left(\sqrt{V_{\rho^{f}_{j,B'}}(X_{B'})}+\sqrt{V_{\rho^{f}_{B'}}(X_{B'})}+|\Delta_{j}|\right)\nonumber\\
&\stackrel{(c)}{\le}\sqrt{\sum_{j}p_jD_{F}(\rho^{f}_{j,B'},\rho^{f}_{B'})^2}\sqrt{\sum_jp_jV_{\rho^{f}_{j,B'}}(X_{B})}\nonumber\\
&+2\delta\left(\sqrt{V_{\rho^{f}_{B'}}(X_{B'})}+\Delta_{\max}\right)\nonumber\\
&\stackrel{(d)}{\le}2\delta\left(2\sqrt{V_{\rho^{f}_{B'}}(X_{B'})}+\Delta_{\max}\right)\nonumber\\
&\stackrel{(e)}{=}2\delta\left(\sqrt{\calF_f}+\Delta_{\max}\right).
\end{align}
Here we use \eqref{Eeq2} in (a), \eqref{RCR2} in (b), the Cauchy-Schwartz inequality, Lemma \ref{SCL} and $|\Delta_j|\le\Delta_{\max}$ in (c), Lemma \ref{SCL} and the concavity of the variance in (d), and $\calF_f=4V_{\rho^{f}_{B'}}(X_{B'})$ in (e).

We also derive \eqref{SIQ1} from \eqref{SIQ2}:
\begin{align}
\SA&\le2\delta\left(\sqrt{\calF_f}+\Delta_{\max}\right)\nonumber\\
&\stackrel{(a)}{=}2\delta\left(2\sqrt{V_{\rho^{f}_{B'}}(X_{B'})}+\Delta_{\max}\right)\nonumber\\
&\stackrel{(b)}{\le}2\delta\left(2\sqrt{V_{\rho_{B}}(X_{B})}+4\Delta_{+}\right)\nonumber\\
&\stackrel{(c)}{=}2\delta\left(\sqrt{\calF}+4\Delta_{+}\right).
\end{align}
Here we use  $\calF_f=4V_{\rho^{f}_{B'}}(X_{B'})$ in (a), \eqref{Vineq} in (b), and $\calF=4V_{\phi_{BR_B}}(X_B\otimes1_{R_B})=4V_{\rho_{B}}(X_{B})$ in (c).

\subsection{The case where $\rho_A=(\rho_0+\rho_1)/2$ is possible}\label{tighterSIQ}
Let us consider the case that $\ket{\psi_{AR_A}}$ satisfies $\rho_A=(\rho_0+\rho_1)/2$ with some proper density matrices $\rho_0$ and $\rho_1$.
In this case, we can define a variation of $\calA$ as follows:
\begin{align}
\SA_2:=\max_{\rho_0,\rho_1}\sum^{1}_{j=0}\frac{1}{2}|\Delta_j|.
\end{align}
where $\{\rho_0,\rho_1\}$ runs over $\rho_A=\frac{\rho_0+\rho_1}{2}$.
For $\SA_2$, we can obtain the following relations:
\begin{align}
\frac{\SA_2}{\sqrt{\calF}+4\Delta_{+}}&\le\delta,\label{R-SIQ1}\\
\frac{\SA_2}{\sqrt{\calF_f}+\Delta_{\max}}&\le\delta\label{R-SIQ2}.
\end{align}

Let us prove \eqref{R-SIQ1} and \eqref{R-SIQ2}.
We can derive \eqref{R-SIQ1} from \eqref{R-SIQ2} in the same manner as the derivation \eqref{SIQ1} from \eqref{SIQ2}.
Therefore, we only have to prove \eqref{R-SIQ2}.
From \eqref{SCL-S1}, we obtain
\begin{align}
\sum^{1}_{j=0}\frac{1}{2}D_{F}(\rho^{f}_{j,B'},\sigma_{B'})^2\le\delta^2.
\end{align}
Therefore, using $(x+y)^2/4\le(x^2+y^2)/2$ and the triangle inequality for $D_F$, we obtain
\begin{align}
D_{F}(\rho^{f}_{0,B'},\rho^{f}_{1,B'})\le2\delta.\label{D2decom}
\end{align}
Let us take a decomposition $\rho_A=\frac{\sum^{1}_{j=0}\rho_j}{2}$ satisfying $\SA_2=\sum^{1}_{j=0}\frac{1}{2}|\Delta_{j}|$.
Then, due to \eqref{Eeq}, we obtain the following relation for both $j=0$ and $j=1$:
\begin{align}
|\Delta_j|&=|\ex{X_{B'}}_{\rho^{f}_{j,B'}}-\ex{X_{B'}}_{\rho^{f}_{B'}}|\nonumber\\
&=|\ex{X_{B'}}_{\rho^{f}_{j,B'}}-\ex{X_{B'}}_{\frac{\rho^{f}_{0,B'}+\rho^{f}_{1,B'}}{2}}|\nonumber\\
&=\frac{|\ex{X_{B'}}_{\rho^{f}_{0,B'}}-\ex{X_{B'}}_{\rho^{f}_{1,B'}}|}{2}\label{A2decom}
\end{align}

Then, we derive \eqref{R-SIQ1} as follows:
\begin{align}
\SA_2&\stackrel{(a)}{=}\sum^{1}_{j=0}\frac{|\ex{X_{B'}}_{\rho^{f}_{0,B'}}-\ex{X_{B'}}_{\rho^{f}_{1,B'}}|}{4}\nonumber\\
&=\frac{1}{2}|\ex{X_{B'}}_{\rho^{f}_{0,B'}}-\ex{X_{B'}}_{\rho^{f}_{1,B'}}|\nonumber\\
&\stackrel{(b)}{\le} \frac{1}{2}D_{F}(\rho^{f}_{0,B'},\rho^{f}_{1,B'})\non
&\times\left(\sqrt{V_{\rho^{f}_{0,B'}}(X_{B'})}+\sqrt{V_{\rho^{f}_{1,B'}}(X_{B'})}+\Delta_{0,1}\right)\nonumber\\
&\stackrel{(c)}{\le} \delta \left(\sqrt{V_{\rho^{f}_{0,B'}}(X_{B'})}+\sqrt{V_{\rho^{f}_{1,B'}}(X_{B'})}+\Delta_{\max}\right)\nonumber\\
&\stackrel{(d)}{\le}\delta \left(2\sqrt{\frac{V_{\rho^{f}_{0,B'}}(X_{B'})+V_{\rho^{f}_{1,B'}}(X_{B'})}{2}}+\Delta_{\max}\right)\nonumber\\
&\stackrel{(e)}{\le}\delta \left(2\sqrt{V_{\rho^{f}_{B'}}(X_{B'})}+\Delta_{\max}\right)\nonumber\\
&\stackrel{(f)}{\le}\delta \left(\sqrt{\calF_f}+\Delta_{\max}\right),
\end{align}
Here, we use \eqref{A2decom} in (a), \eqref{RCR2} in (b), \eqref{D2decom} and $\Delta_{0,1}:=|\ex{X_{B'}}_{\rho^{f}_{0,B'}}-\ex{X_{B'}}_{\rho^{f}_{1,B'}}|\le\Delta_{\max}$ in (c), $\sqrt{x}+\sqrt{y}\le2\sqrt{(x+y)/2}$ in (d), the concavity of the variance in (e), and $\calF_f=4V_{\rho^{f}_{B'}}(X_{B'})$ in (f).

\begin{figure*}[tb]
		\centering
		\includegraphics[width=0.85\textwidth]{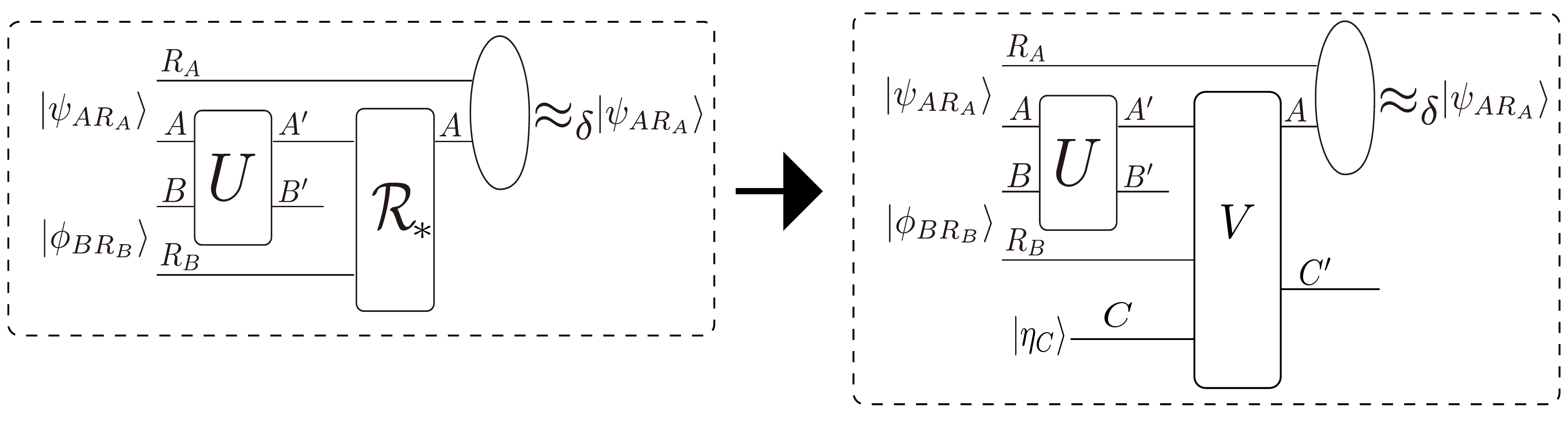}
		\caption{}
		\label{SCLproof}
	\end{figure*}

\section{Derivation of Lemma \ref{SCL}}\label{S-SCL}

In this section, we prove Lemma \ref{SCL}

\begin{proofof}{Lemma \ref{SCL}}
We denote by $\calR_{*}$ the best recovery operation which achieves $\delta$ and take its Steinspring representation $(V,\ket{\eta_C})$ (Fig.~\ref{SCLproof}).
Here, $V$ is a unitary operation on $A'R_BC$, and $\ket{\eta_C}$ is a pure state on $C$.
Since $\calR_*$ is a CPTP-map from $A'R_B$ to $A$, we can take another system $C'$ satisfying $A'R_BC=AC'$.
We refer to the initial and final state of the total system as $\ket{\psi_{\tot}}$ and $\ket{\psi^{f}_{\tot}}$. Then, these two states are expressed as follows:
\begin{align}
\ket{\psi_{\tot}}&:=\ket{\psi_{AR_A}}\otimes\ket{\phi_{BR_B}}\otimes\ket{\eta_C},\\
\ket{\psi^{f}_{\tot}}&:=(1_{R_A}\otimes V\otimes 1_{B'})(1_{R_A}\otimes U\otimes 1_{R_BC})\ket{\psi_{\tot}}
\end{align}
Due to the definitions of $\delta$ and $\calR_*$, for $\psi^f_{AR_A}:=\Tr_{B'C'}[\psi^f_{\tot}]$,
\begin{align}
D_F(\psi^f_{AR_A},\ket{\psi_{AR_A}})=\delta.
\end{align}
Therefore, due to the Uhlmann theorem and the fact that $\ket{\psi_{AR_A}}$ is pure, there exists a pure state $\ket{\phi^f_{B'C'}}$ such that
\begin{align}
D_F(\ket{\psi^f_{\tot}},\ket{\psi_{AR_A}}\otimes\ket{\phi^f_{B'C'}})=\delta.
\end{align}
Since the purified distance $D_F$ is not increased by the partial trace, we obtain
\begin{align}
D_F(\psi^f_{B'C'},\ket{\phi^f_{B'C'}})\le\delta.\label{star-1}
\end{align}
where $\psi^f_{B'C'}:=\Tr_{AR_A}[\psi^f_{\tot}]$.
Let us define $\sigma_{B'}$ as $\sigma_{B'}:=\Tr_{C'}[\phi^f_{B'C'}]$.
Then, due to $\Tr_{C'}[\psi^{f}_{B'C'}]=\rho^{f}_{B'}$ and \eqref{star-1},
\begin{align}
D_F(\rho^{f}_{B'},\sigma_{B'})\le\delta.\label{star0}
\end{align}

Here, we assume that there are states $\{\tilde{\psi}^f_{j,B'C'}\}$ on $B'C'$ such that
\begin{align}
\psi^{f}_{B'C'}&=\sum_{j}p_j\tilde{\psi}^f_{j,B'C'},\label{star1}\\
\Tr_{C'}[\tilde{\psi}^f_{j,B'C'}]&=\rho^{f}_{j,B'}.\label{star2}
\end{align}
Below, we firstly prove \eqref{SCL-S1} and \eqref{SCL-S2} under the assumption of the existence of $\{\tilde{\psi}^f_{j,B'C'}\}$.
We demonstrate the existence of $\{\tilde{\psi}^f_{j,B'C'}\}$ at the end of the proof.

Combining \eqref{star-1} and \eqref{star1}, we obtain 
\begin{align}
D_F(\sum_{j}p_j\tilde{\psi}^f_{j,B'C'},\ket{\phi^f_{B'C'}})\le\delta.
\end{align}
From $D_F(\rho,\sigma)=\sqrt{1-F(\rho,\sigma)^2}$ and $F(\rho,\ket{\phi})^2=\bra{\phi}\rho\ket{\phi}$, we obtain
\begin{align}
1-\delta^2&\le\sum_{j}p_j\bra{\phi^f_{B'C'}}\tilde{\psi}^f_{j,B'C'}\ket{\phi^f_{B'C'}}\nonumber\\
&=1-\sum_{j}p_jD_F(\tilde{\psi}^f_{j,B'C'},\ket{\phi^f_{B'C'}})^2.\label{star3pre}
\end{align}
Due to \eqref{star2}, \eqref{star3pre} and the monotonicity of $D_F$, we obtain the \eqref{SCL-S1}:
\begin{align}
\sum_jp_jD_F(\rho^f_{j,B'},\sigma_{B'})^2\le\delta^2.\label{star3}
\end{align}
Since the root mean square is greater than the average, we also obtain
\begin{align}
\sum_jp_jD_F(\rho^f_{j,B'},\sigma_{B'})\le\delta.\label{star4}
\end{align}
Since the purified distance satisfies the triangle inequality \cite{Tomamichel}, we obtain \eqref{SCL-S2} as follows:
\begin{align}
\sum_jp_jD_F(\rho^f_{j,B'},\rho^f_{B'})^2&\le\sum_jp_j(D_F(\rho^f_{j,B'},\sigma_{B'})\nonumber\\
&\enskip\enskip\enskip\enskip
\enskip\enskip\enskip\enskip\enskip+D_F(\sigma_{B'},\rho^f_{B'}))^2\nonumber\\
&\stackrel{(a)}{\le}\sum_{j}p_j(D_F(\rho^f_{j,B'},\sigma_{B'})+\delta)^2\nonumber\\
&\stackrel{(b)}{\le}4\delta^2.
\end{align}
Here we use \eqref{star0} in (a) and \eqref{star3} and \eqref{star4} in (b).

Finally, we show the existence of $\{\tilde{\psi}^f_{j,B'C'}\}$ satisfying \eqref{star1} and \eqref{star2}.
We firstly take a partial isometry $W_{R_A}$ from $R_A$ to $R'_{A1}R'_{A2}$ such that
\begin{align}
1_A\otimes W_{R_A}\ket{\psi_{AR_A}}&=\sum_{j}\sqrt{p_j}\ket{\psi_{j,AR'_{A1}}}\otimes\ket{j_{R'_{A2}}},\\
1_A\otimes W^{\dagger}_{R_A}W_{R_A}\ket{\psi_{AR_A}}&=\ket{\psi_{AR_A}}.\label{WRA}
\end{align}
Here $\{\ket{j_{R'_{A2}}}\}$ are orthonormal and $\ket{\psi_{j,AR'_{A1}}}$ is a purification of $\rho_j$. We abbreviates $R'_{A1}R'_{A2}$ as $R'_{A}$.
The existence of $W_{R_A}$ is guaranteed as follows.
We firstly note that there exists a ``minimal'' purification $\ket{\psi_{AR^{*}_A}}$ of $\rho_A$, for which we can take isometries $W^{(1)}$ from $R^{*}_A$ to $R_A$ and $W^{(2)}$ from $R^{*}_A$ to $R'_{A}$ such that \cite{Paulsen}
\begin{align}
(1_{A}\otimes W^{(1)})\ket{\psi_{AR^{*}_A}}&=\ket{\psi_{AR_A}},\\
(1_{A}\otimes W^{(2)})\ket{\psi_{AR^{*}_A}}&=\sum_{j}\sqrt{p_j}\ket{\psi_{j,AR'_{A1}}}\otimes\ket{j_{R'_{A2}}}.
\end{align}
The desired $W_{R_A}$ is defined as $W_{R_A}:=W^{(2)}W^{(1)\dagger}$.
Since $W^{(2)}$ and $W^{(1)}$ are isometry, $W_{R_A}$ is a partial isometry.
Using $W^{(2)\dagger}W^{(2)}=W^{(1)\dagger}W^{(1)}=1_{R^{*}_{A}}$, we  obtain \eqref{WRA} as follows:
\begin{align}
&1_A\otimes W^{\dagger}_{R_A}W_{R_A}\ket{\psi_{AR_A}}\non
&=1_A\otimes W^{(1)}W^{(2)\dagger}W^{(2)}W^{(1)\dagger}\ket{\psi_{AR_A}}\nonumber\\
&=1_A\otimes W^{(1)}W^{(2)\dagger}W^{(2)}W^{(1)\dagger}W^{(1)}\ket{\psi_{AR^{*}_A}}\nonumber\\
&=\ket{\psi_{AR_A}}.
\end{align}

Since the partial isometry $W_{R_A}$ works only on $R_A$, we obtain 
\begin{align}
&(W_{R_A}\otimes1_{AB'C'})(1_{R_A}\otimes V\otimes 1_{B'})(1_{R_A}\otimes U\otimes 1_{R_BC})\non
&=
(1_{R'_A}\otimes V\otimes 1_{B'})(1_{R'_A}\otimes U\otimes 1_{R_BC})(W_{R_A}\otimes1_{ABR_BC})
\end{align}
Therefore, for $\ket{\tilde{\psi}^f_{\tot'}}:=(W_{R_A}\otimes1_{AB'C'})\ket{\psi^f_{\tot}}$,
\begin{align}
\ket{\tilde{\psi}^f_{\tot'}}=&(1_{R'_A}\otimes V\otimes 1_{B'})(1_{R'_A}\otimes U\otimes 1_{R_BC})\non
&\sum_{j}\sqrt{p_j}\ket{\psi_{j,AR'_{A1}}}\otimes\ket{j_{R'_{A2}}}\otimes\ket{\phi_{BR_B}}\otimes\ket{\eta_C}\nonumber\\
=&\sum_j\sqrt{p_j}\ket{\tilde{\psi}^f_{j,AR'_{A1}B'C'}}\otimes\ket{j_{R'_{A2}}},
\end{align}
where $\ket{\tilde{\psi}^f_{j,AR'_{A1}B'C'}}:=(1_{R'_{A1}}\otimes V\otimes 1_{B'})(1_{R'_{A1}}\otimes U\otimes 1_{R_BC})\ket{\psi_{j,AR'_{A1}}}\otimes\ket{\phi_{BR_B}}\otimes\ket{\eta_C}$.

Now, we define the desired $\tilde{\psi}^{f}_{j,B'C'}$ as $\tilde{\psi}^{f}_{j,B'C'}:=\Tr_{AR'_{A1}}[\tilde{\psi}^f_{j,AR'_{A1}B'C'}]$. 
Then, since $\{\ket{j_{R'_{A2}}}\}$ are orthonormal, for $\tilde{\psi}^{f}_{B'C'}:=\Tr_{AR'_A}[\tilde{\psi}^f_{\tot'}]$,
\begin{align}
\tilde{\psi}^{f}_{B'C'}=\sum_jp_j\tilde{\psi}^{f}_{j,B'C'}\label{star6}
\end{align}
We establish $\tilde{\psi}^{f}_{B'C'}=\psi^f_{B'C'}$ as follows:
\begin{align}
\tilde{\psi}^{f}_{B'C'}=&\Tr_{AR'_A}[\tilde{\psi}^f_{\tot'}]\nonumber\\
=&\Tr_{AR'_A}[W_{R_A}\otimes1_{AB'C'}\psi^f_{\tot}W^\dagger_{R_A}\otimes1_{AB'C'}]\nonumber\\
=&\Tr_{AR_A}[(W^\dagger_{R_A}W_{R_A}\otimes1_{AB'C'})(1_{R_A}\otimes V\otimes 1_{B'})\non
&(1_{R_A}\otimes U\otimes 1_{R_BC})\non
&\ket{\psi_{\tot}}\bra{\psi_{\tot}}(1_{R_A}\otimes U^\dagger\otimes 1_{R_BC})(1_{R_A}\otimes V^\dagger\otimes 1_{B'})]
\nonumber\\
=&\Tr_{AR_A}[(1_{R_A}\otimes V\otimes 1_{B'})(1_{R_A}\otimes U\otimes 1_{R_BC})\non
&(W^\dagger_{R_A}W_{R_A}\otimes1_{AB'C'})\non
&\ket{\psi_{\tot}}\bra{\psi_{\tot}}(1_{R_A}\otimes U^\dagger\otimes 1_{R_BC})(1_{R_A}\otimes V^\dagger\otimes 1_{B'})]
\nonumber\\
\stackrel{(a)}{=}&\Tr_{AR_A}[(1_{R_A}\otimes V\otimes 1_{B'})(1_{R_A}\otimes U\otimes 1_{R_BC})
\non
&\ket{\psi_{\tot}}\bra{\psi_{\tot}}(1_{R_A}\otimes U^\dagger\otimes 1_{R_BC})(1_{R_A}\otimes V^\dagger\otimes 1_{B'})]
\nonumber\\
=&\Tr_{AR_A}[\psi^f_{\tot}]\nonumber\\
=&\psi^{f}_{B'C'}.\label{star7}
\end{align}
Here we use \eqref{WRA} in (a).
Combining \eqref{star6} and \eqref{star7}, we obtain \eqref{star1}.

Similarly, we can obtain \eqref{star2} as follows:
\begin{align}
\Tr_{C'}[\tilde{\psi}^f_{j,B'C'}]
=&\Tr_{AR'_{A1}C'}[\tilde{\psi}^f_{j,AR'_{A1}B'C'}]\nonumber\\
=&\Tr_{AR'_{A1}C'}[(1_{R'_{A1}}\otimes V\otimes 1_{B'})\non
&(1_{R'_{A1}}\otimes U\otimes 1_{R_BC})\psi_{j,AR'_{A1}}\otimes\phi_{BR_B}\otimes\eta_C\non
&(1_{R'_{A1}}\otimes U^\dagger\otimes 1_{R_BC})(1_{R'_{A1}}\otimes V^\dagger\otimes 1_{B'})]\nonumber\\
=&\Tr_{AC'}[(V\otimes 1_{B'})(U\otimes 1_{R_BC})\non
&\rho_{j}\otimes\phi_{BR_B}\otimes\eta_C( U^\dagger\otimes 1_{R_BC})(V^\dagger\otimes 1_{B'})]\nonumber\\
=&\rho^f_{j,B'}.
\end{align}
Therefore, $\{\tilde{\psi}^{f}_{j,B'C'}\}$ actually satisfy \eqref{star1} and \eqref{star2}.
\end{proofof}

\section{Derivation of the properties of the variance and expectation values of the conserved quantity $X$}\label{S-VE}

In this section, we prove \eqref{Eeq}--\eqref{Vineq2-S}.
We first derive \eqref{Eeq}. We evaluate the difference between the left-hand side and the right-hand side of \eqref{Eeq} as follows:
\begin{align}
&\left(\ex{X_A}_{\xi_A}-\ex{X_{A'}}_{\xi^{f}_{A'}}\right)-\left(\ex{X_{B'}}_{\xi^{f}_{B'}}-\ex{X_{B}}_{\xi_{B}}\right)\non
&=\left(\ex{X_A}_{\xi_A}+\ex{X_{B}}_{\xi_{B}}\right)-\left(\ex{X_{A'}}_{\xi^{f}_{A'}}+\ex{X_{B'}}_{\xi^{f}_{B'}}\right)\nonumber\\
&=\Tr[(X_{A}+X_{B})\xi_{A}\otimes\xi_{B}]-\Tr[(X_{A'}+X_{B'})U\xi_{A}\otimes\xi_{B}U^\dagger]\nonumber\\
&\stackrel{(a)}{=}0
\end{align}
Here we use $U(X_A+X_B)U^\dagger=X_{A'}+X_{B'}$ in (a).

We next show \eqref{Vineq}.
Note that
\begin{align}
\ex{(X_{A'}+X_{B'})^2}_{U\xi_{A}\otimes\xi_{B}U^\dagger}&=\Tr[(X_{A'}+X_{B'})^2U\xi_{A}\otimes\xi_{B}U^{\dagger}]\nonumber\\
&=\Tr[U^\dagger(X_{A'}+X_{B'})^2U\xi_{A}\otimes\xi_{B}]\nonumber\\
&=\Tr[(X_{A}+X_{B})^2\xi_{A}\otimes\xi_{B}]\nonumber\\
&=\ex{(X_A+X_B)^2}_{\xi_{A}\otimes\xi_{B}}.
\end{align}
Combining this and $\ex{X_A+X_B}_{\xi_{A}\otimes\xi_{B}}=\ex{X_{A'}+X_{B'}}_{U\xi_{A'}\otimes\xi_{B'}U^\dagger}$ which is easily obtained from \eqref{Eeq}, we obtain 
\begin{align}
V_{\xi_A\otimes\xi_B}(X_A+X_B)=V_{U(\xi_A\otimes\xi_B)U^\dagger}(X_{A'}+X_{B'}).\label{VP1}
\end{align}
From \eqref{VP1}, the lower bound for $V_{\xi_A}(X_A)+V_{\xi_B}(X_B)$ is
\begin{align}
&V_{\xi_A}(X_A)+V_{\xi_B}(X_B)\non
&=V_{\xi_A\otimes\xi_B}(X_A+X_B)\nonumber\\
&=V_{U(\xi_A\otimes\xi_B)U^\dagger}(X_{A'}+X_{B'})\nonumber\\
&=V_{\xi^f_{A'}}(X_{A'})+V_{\xi^f_{B'}}(X_{B'})+2Cov_{U(\xi_A\otimes\xi_B)U^\dagger}(X_{A'}:X_{B'})\nonumber\\
&\ge V_{\xi^f_{A'}}(X_{A'})+V_{\xi^f_{B'}}(X_{B'})-2\sqrt{V_{\xi^f_{A'}}(X_{A'})V_{\xi^f_{B'}}(X_{B'})}\nonumber\\
&=\left(\sqrt{V_{\xi^f_{A'}}(X_{A'})}-\sqrt{V_{\xi^f_{B'}}(X_{B'})}\right)^2,
\end{align}
where $Cov_{\xi}(X:Y):=\ex{\{X-\ex{X}_{\xi},Y-\ex{Y}_{\xi}\}}_{\xi}/2$ and $\{X,Y\}:=XY+YX$.
Taking the square root of both sides and applying $\sqrt{x}+\sqrt{y}\ge\sqrt{x+y}$ to the left-hand side, we obtain
\begin{align}
\sqrt{V_{\xi^f_{B'}}(X_{B'})}&\le\sqrt{V_{\xi^f_{A'}}(X_{A'})}+\sqrt{V_{\xi_A}(X_A)}+\sqrt{V_{\xi_B}(X_B)}\nonumber\\
&\le\sqrt{V_{\xi_B}(X_B)}+\Delta_{+}.
\end{align}
We can derive \eqref{Vineq2-S} in the same way as \eqref{Vineq}.

\clearpage

\begin{widetext}
\begin{center}
{\large \bf Supplementary information for \protect \\ 
``Universal limitation of quantum information recovery: symmetry versus coherence''}\\
\vspace*{0.3cm}
Hiroyasu Tajima$^{1,2}$ and Keiji Saito$^{3}$ \\
\vspace*{0.1cm}
$^{1}${\small \em Graduate School of Informatics and Engineering,
The University of Electro-Communications,
1-5-1 Chofugaoka, Chofu, Tokyo 182-8585, Japan}
\\
$^{2}${\small \em JST, PRESTO, 4-1-8 Honcho, Kawaguchi, Saitama, 332-0012, Japan}
\\
$^{3}${\small \em Department of Physics, Keio University, 3-14-1 Hiyoshi, Yokohama, 223-8522, Japan}
\end{center}

\setcounter{equation}{0}
\setcounter{lemma}{0}
\setcounter{page}{1}
\setcounter{section}{0}
\setcounter{figure}{0}
\renewcommand{\theequation}{\Roman{section}.\arabic{equation}}
\renewcommand{\thesection}{Supp.\Roman{section}}
\renewcommand{\thefigure}{S.\arabic{figure}}

The supplementary material is organized as follows.
In Sec.~\ref{TipsRToA}, we introduce several useful tips about the resource theory of asymmetry. The tips is a generalized version of tips in Appendix \ref{AppA}. 
In Sec. \ref{alleviation}, we give a concrete example that quantum coherence alleviates the recovery error.
In Sec.~\ref{HPTips}, we introduce several tips about the Hayden-Preskill model with the conservation law of $X$.
In Sec.~\ref{S-woRB}, we show the universal limitation of information recovery without using $R_B$.
In Sec.~\ref{covariant-codes}, we show that the approximate Eastin-Knill theorem is given as corollary of \eqref{SIQ2}.
In Sec.~\ref{g-symmetry}, we generalize the results in the main text to the case of general Lie group symmetries.
In Sec.~\ref{v-symmetry}, we generalize the results in the main text to the case of weakly violated symmetry.
Finally, in Sec.~\ref{Stightness}, we derive \eqref{check_ex1}--\eqref{check_ex3} and \eqref{upper_ex} in the main text.

For the readers' convenience, here we present our basic setup which we use in this paper.
Our setup is shown in Fig.~\ref{Setup1}.
We prepare four systems $A$, $B$, $R_A$ and $R_B$ and two pure states $\ket{\psi_{AR_A}}$ and $\ket{\phi_{BR_B}}$ on $AR_A$ and $BR_B$. 
After preparation, we perform a unitary operation $U$ on $AB$ and divide $AB$ into $A'$ and $B'$.
Then, we try to recover the initial state $\ket{\psi_{AR_A}}$ on $AR_A$ by performing a recovery operation $\calR$ which is a CPTP map from $A'R_B$ to $A$ while keeping $R_A$ as is.
And we define the minimum recovery error of the above process as $\delta$:
\begin{align}
 \delta(\psi_{AR_A},\calI) &:=\!\! \min_{\mbox{$\calR $ } \atop  (A'R_B \to A )     }  \!\!\!\!\!\!
           D_F\!\left( \psi_{AR_A},\mathrm{id}_{R_A}\otimes\calR[\Tr_{B'}(U\psi_{AR_A}\otimes\phi_{BR_B}U^\dagger) ] \right) \,.
\end{align}
Here we use the purified distance $D_F(\rho,\sigma):=\sqrt{1-F^2(\rho,\sigma)}=\sqrt{1-\Tr[\sqrt{\sqrt{\sigma}\rho\sqrt{\sigma}}]^2}$ \cite{Tomamichel-S} and abbreviations  $\psi_{AR_A}:=\ket{\psi_{AR_A}}\bra{\psi_{AR_A}}$, $\phi_{BR_B}:=\ket{\phi_{BR_B}}\bra{\phi_{BR_B}}$ and $\calI:=(\phi_{BR_B},U)$.
Without special notice, we abbreviates $\delta(\psi_{AR_A},\calI)$ as $\delta$ as the main text.
We also use abbreviations for density operators of pure states like $\eta=\ket{\eta}\bra{\eta}$.
Hereafter, we refer to this setup as ``Setup 1.''
In each section of this Supplementary Material, we use several different additional assumptions with Setup 1.
When we use such additional assumptions, we mention them.
Note that Setup 1 does not contain the conservation law of $X$.
When we assume the conservation law of $X$, i.e. $U(X_A+X_B)U^\dagger=X_{A'}+X_{B'}$ for Hermite operators $X_\alpha$ on $\alpha$ ($\alpha=A,B,A',B'$), we say ``Setup 1 with the conservation law of $X$.''

\begin{figure}[h]
		\centering
		\includegraphics[width=0.5\textwidth]{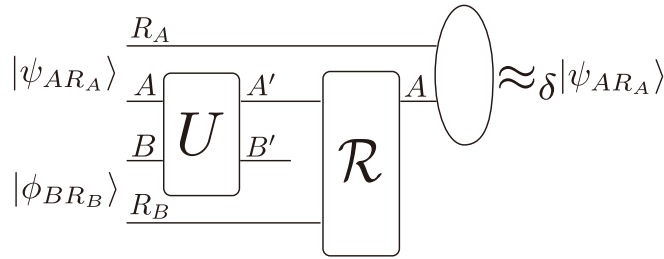}
		\caption{Schematic diagram of Setup 1.}
		\label{Setup1}
	\end{figure}

\section{Tips for resource theory of asymmetry for the case of general symmetry}\label{TipsRToA}

In this section, we give a very basic information about the resource theory of asymmetry (RToA) \cite{Gour-S,Marvian-S,Marvian-thesis-S,Keyl} for the case of general symmetry.

We firstly introduce covariant operations that are free operations in RToA.
Let us consider a CPTP map $\calE$ from a system $A$ to another system $A'$ and unitary representations $\{U_{g,A}\}_{g\in G}$ on $A$ and  $\{U_{g,A'}\}_{g\in G}$ on $A'$ of a group $G$.
The CPTP $\calE$ is said to be covariant with respect to $\{U_{g,A}\}_{g\in G}$ and  $\{V_{g,A'}\}_{g\in G}$, when the following relation holds:
\begin{align}
\calV_{g,A'}\circ\calE(...)=\calE\circ\calU_{g,A}(...),\enskip\forall g\in G,
\end{align}
where $\calU_{g,A}(...):=U_{g,A}(...)U^{\dagger}_{g,A}$ and $\calV_{g,A'}(...):=V_{g,A'}(...)V^\dagger_{g,A'}$.
Similarly, a unitary operation $U_A$ on $A$ is said to be invariant with respect to $\{U_{g,A}\}_{g\in G}$ and $\{V_{g,A}\}_{g\in G}$, when the following relation holds:
\begin{align}
\calV_{g,A}\circ\calU(...)=\calU\circ\calU_{g,A}(...),\enskip\forall g\in G,
\end{align}
where $\calU(...):=U(...)U^{\dagger}$.

Next, we introduce symmetric states that are free states of resource theory of asymmetry. 
A state $\rho$ on $A$ is said to be a symmetric state when it satisfies the following relation:
\begin{align}
\rho=\calU_{g,A}(\rho),\enskip\forall g\in G.
\end{align}

When a CPTP-map $\calE$ is covariant, it can be realized by invariant unitary and symmetric state \cite{Marvian-thesis-S,Keyl}. 
To be concrete, when a CPTP map $\calE$: $A\rightarrow A'$ is covariant with respect to $\{U_{g,A}\}_{g\in G}$ and  $\{U_{g,A'}\}_{g\in G}$, there exist another system $B$,  unitary representations $\{U_{g,B}\}_{g\in G}$ and $\{V_{g,B'}\}_{g\in G}$ on $B$ and $B'$ ($AB=A'B'$), a unitary $U_{AB}$ which is invariant with respect to $\{U_{g,A}\otimes U_{g,B}\}_{g\in G}$ and $\{V_{g,A'}\otimes V_{g,B'}\}_{g\in G}$ , and a symmetric state $\mu_B$ with respect to $\{U_{g,B}\}_{g\in G}$ such that
\begin{align}
\calE(...)=\Tr_{B'}[U_{AB}(...\otimes\mu_B)U^{\dagger}_{AB}].
\end{align}

\section{An example of the error suppression by quantum coherence in information recovery}\label{alleviation}
In this section, we give a concrete example that large $\calF$ actually enables the recovery error $\delta$ to be smaller than $\SA/8\Delta_{+}$.
We consider Setup 1 with the conservation law of $X$, i.e., $U(X_A+X_B)U^\dagger=X_{A'}+X_{B'}$.
We set $A$ to be a qubit system and $B$ to be a $6M+1$-level system, where $M$ is a natural number that we can choose freely.
We also set $R$ and $R_B$ as copies of $A$ and $B$, respectively.
We take $X_A$ and $X_B$ as follows:
\begin{align}
X_{A}&:=\ket{1}_{A}\bra{1}_{A},\label{XAeg}\\
X_{B}&:=\sum^{3M}_{k=-3M}k\ket{k}_{B}\bra{k}_{B}.
\end{align}
where $\{\ket{k}_A\}^{1}_{k=0}$ and $\{\ket{k}_B\}^{3M}_{k=-3M}$ are orthonormal basis of $A$ and $B$.

Under this setup, we consider the case where $A=A'$, $B=B'$, $X_{A}=X_{A'}$ and $X_{B}=X_{B'}$.
In this case, due to \eqref{XAeg} and $X_A=X_{A'}$, the equality $\Delta_{+}=1$ holds.
Therefore, \eqref{SIQ1} becomes the following inequality:
\begin{align}
\frac{\SA}{2(\sqrt{\calF}+4)}\le\delta.
\end{align} 
Therefore, when $\calF=0$, the error $\delta$ can not be smaller than $\SA/8$.
Here, we show that when $\calF$ is large enough, the error $\delta$ actually becomes smaller than $\SA/8$.
Let us take $\ket{\psi_{AR_A}}$, $\ket{\phi_{BR_B}}$ and $U$ as 
\begin{align}
\ket{\psi_{AR_A}}&=\frac{\ket{0}_{A}\ket{0}_{R_A}+\ket{1}_{A}\ket{1}_{R_A}}{\sqrt{2}}\\
\ket{\phi_{BR_B}}&=\frac{\sum^{M}_{k=-M}\ket{k}_{B}\ket{k}_{R_B}}{\sqrt{2M+1}},\\
U&=\sum_{-2M\le k\le 2M}\ket{1}_{A}\bra{0}_{A}\otimes\ket{k-1}_{B}\bra{k}_{B}
+\sum_{-2M-1\le k\le 2M-1}\ket{0}_{A}\bra{1}_{A}\otimes\ket{k+1}_{B}\bra{k}_{B}\nonumber\\
&+\sum_{k<-2M, 2M<k}\ket{0}_{A}\bra{0}_{A}\otimes\ket{k}_{B}\bra{k}_{B}
+\sum_{k<-2M-1,2M-1<k}\ket{1}_{A}\bra{1}_{A}\otimes\ket{k}_{B}\bra{k}_{B}.
\end{align}
Then, $U$ is a unitary satisfying $U(X_{A}+X_{B})=X_{A}+X_{B}$, and the CPTP-map $\calE$ implemented by $(U,\ket{\phi_{BR_B}})$ is expressed as 
\begin{align}
\calE(...)=\ket{1}_A\bra{0}_A(...)\ket{0}_A\bra{1}_A+\ket{0}_A\bra{1}_A(...)\ket{1}_A\bra{0}_A.\label{Eexam}
\end{align}
Due to \eqref{Eexam} and $\rho_A:=\Tr_{R_A}[\psi_{AR_A}]=\frac{\ket{0}\bra{0}_A+\ket{1}\bra{1}_A}{2}$, the quantity $\SA$ is equal to 1/2.
Here, let us define a recovery CPTP-map $\calR_V$ as
\begin{align}
\calR_V(...):=\Tr_{R_BB}[V_{AR_B}(...)V^\dagger_{AR_B}]
\end{align}
where $V_{AR_B}$ is a unitary operator on $AR_B$ defined as
\begin{align}
V_{AR_B}&:=\sum_{-3M+1\le k\le 3M}\ket{0}\bra{1}_{A}\otimes\ket{k-1}\bra{k}_{R_B}
+\sum_{-3M\le k\le3M-1}\ket{1}\bra{0}_{A}\otimes\ket{k+1}\bra{k}_{R_B}\nonumber\\
&+\ket{0}\bra{1}_{A}\otimes\ket{3M}\bra{-3M}_{R_B}
+\ket{1}\bra{0}_{A}\otimes\ket{-3M}\bra{3M}_{R_B}.
\end{align}
(Note that the recovery $V_{AR_B}$ is not required to satisfy the conservation law).
Then, after $V_{AR_B}$, the total system is in 
\begin{align}
&(V_{AR_B}\otimes1_{BR_A})(U_{AB}\otimes1_{R_AR_B})(\ket{\psi_{AR_A}}\otimes\ket{\phi_{BR_B}})\nonumber\\
&=\frac{1}{\sqrt{2(2M+1)}}\sum^{M}_{k=-M}(\ket{0}_A\ket{0}_{R_A}\ket{k-1}_{B}\ket{k-1}_{R_B}+\ket{1}_A\ket{1}_{R_A}\ket{k+1}_{B}\ket{k+1}_{R_B})\nonumber\\
&=\frac{\sqrt{2M-1}}{\sqrt{2M+1}}\ket{\psi_{AR_A}}\otimes\ket{\tilde{\phi}_{BR_B}}+\frac{1}{\sqrt{2M+1}}\ket{00}_{AR_A}\frac{\ket{-M,-M}_{BR_B}+\ket{-M-1,-M-1}_{BR_B}}{\sqrt{2}}\nonumber\\
&+\frac{1}{\sqrt{2M+1}}\ket{11}_{AR_A}\frac{\ket{M,M}_{BR_B}+\ket{M+1,M+1}_{BR_B}}{\sqrt{2}},
\end{align}
where $\ket{\tilde{\phi}_{BR_B}}:=\frac{1}{\sqrt{2M-1}}\sum^{M-1}_{k=-M+1}\ket{k,k}_{BR_B}$.
By partial trace of $BR_B$, we obtain the final state of $AR_A$ as follows:
\begin{align}
\psi^{f}_{AR_A}=\frac{2M-1}{2M+1}\psi_{AR_A}+\frac{1}{2M+1}\ket{00}\bra{00}_{AR_A}+\frac{1}{2M+1}\ket{11}\bra{11}_{AR_A}.
\end{align}
Therefore, 
\begin{align}
D_F(\psi^f_{AR_A},\psi_{AR_A})^2=1-\bra{\psi_{AR_A}}\psi^{f}_{AR_A}\ket{\psi_{AR_A}}=\frac{2}{2M+1}.
\end{align}
Thus,  we obtain
\begin{align}
\delta\le\sqrt{\frac{2}{2M+1}}.
\end{align}
Hence, when $M$ is large enough, we can make $\delta$ strictly smaller than $\SA/8=1/16$.  
Since $\calF=4V_{\rho_B}(X_B)$ ($\rho_B:=\Tr_{R_B}[\phi_{BR_B}]$), large $M$ means large $\calF$.
Therefore, when $\calF$ is large enough, we can make $\delta$ smaller than $1/16$.

\section{Tips for the application to Hayden-Preskill model with a conservation law}\label{HPTips}

\begin{figure}[tb]
		\centering
		\includegraphics[width=.45\textwidth]{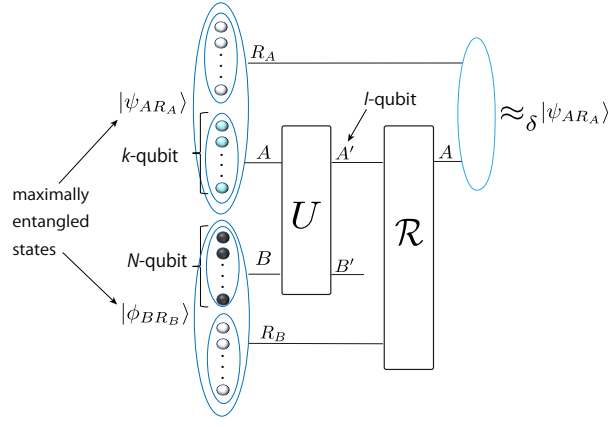}
		\caption{Schematic diagram of the Hayden-Preskill black hole model. It is almost a special case of our setup illustrated in Fig.~\ref{setupSR}.}
		\label{HPmodel-S}
	\end{figure}

\subsection{Derivation of \eqref{A-bound}, \eqref{D-upp}, and \eqref{Ffbound} in the main text}
In this subsection, we give the detailed description of the scrambling of the expectation values and derivation of  \eqref{A-bound}, \eqref{D-upp}, and \eqref{Ffbound} in the main text.

For the readers' convenience, we firstly review the Hayden-Preskill model with the conservation law of $X$ which is introduced in the section \ref{HPwithX} in the main text. (Fig.~\ref{HPmodel-S})
The model is a specialized version of Setup 1 with the conservation law of $X$.
The specialized points are as follows:
1. $A$, $B$, $A'$ and $B'$ are $k$-, $N$-, $l$- and $k+N-l$-qubit systems, respectively.
2. We assume that the difference between minimum and the maximum eigenvalue of $X_{\alpha}$ (= $\calD_{X_{\alpha}}$ for $\alpha=A,B,A',B'$) to be equal to the number of particles $\alpha$ (= $k$, $N$, $l$, $N+k-l$ for $\alpha=A,B,A',B'$).

Under the above setup, when the conserved quantities are scrambled in the sense of the expectation values, we can derive  \eqref{A-bound}, \eqref{D-upp}, and \eqref{Ffbound}. Below, we show the derivation.
For simplicity of the description, we will use the following expression for real numbers $x$ and $y$:
\begin{align}
x\approx_{\epsilon}y\Leftrightarrow_{\mbox{def}}|x-y|\le\epsilon.
\end{align}
We also express the expectation values of $X_{\alpha}$ ($\alpha=A,B\mbox{ and }A'$) as follows:
\begin{align}
x_{A}(\rho_A)&:=\ex{X_A}_{\rho_A},\\
x_{B}(\rho_B)&:=\ex{X_{B}}_{\rho_B},\\
x_{A'}(\rho_A,\rho_B,U)&:=\ex{X_{A'}}_{\rho^{f}_{A'}}.
\end{align}
We show  \eqref{A-bound}, \eqref{D-upp}, and \eqref{Ffbound} as the following theorem:
\begin{theorem}\label{ex-sc-t1}
Let us take a real positive number $\epsilon$ which is lower than $1/(N+k)^2$, and the set of $(\ket{\psi_{AR_A}}, \ket{\phi_{BR_B}}, U)$. We refer to the initial state of $A$ as $\rho_A:=\Tr_{R_A}[\psi_{AR_A}]$, and assume that $[\rho_A,X_A]=0$ and $M:=M_{\rho_A}(X_A)>0$ holds.
We also assume that $(\ket{\phi_{BR_B}},U)$ satisfies the following relation for an arbitrary state $\rho$ on the support of $\rho_A$:
\begin{align}
x_{A'}(\rho,\rho_B,U)\approx_{\frac{1}{2}\epsilon M\gamma}(x_{A}(\rho)+x_{B}(\rho_B))\times\frac{l}{N+k},\label{ex-sc-eq}
\end{align}
where $\gamma:=\left(1-\frac{l}{N+k}\right)$.
Then, the following three inequalities hold:
\begin{align}
\SA&\ge M\gamma (1-\epsilon)\label{Aineq}\\
\sqrt{\calF_{f}}&\le\gamma(N+k)\label{Fineq}\\
\Delta_{\max}&\le\gamma k(1+\epsilon)\label{Dineq}
\end{align}
And when $\rho_A$ can be decomposed as $\rho_A=\frac{\rho_{0,A}+\rho_{1,A}}{2}$ such that $\rho_{0,A}$ and $\rho_{1,A}$ are eigenstates of $X_A$, the following inequality also holds:
\eq{
\calA_2\ge M\gamma (1-\epsilon)\label{A2ineq}
}
\end{theorem}

\begin{proof}
We firstly point out \eqref{Fineq} is easily derived by noting that $\calF_f=4V_{\rho^{f}_{B'}}(X_{B'})$ and that the number of qubits in $B'$ is $N+k-l$, which is equal to $(N+k)\gamma$.

To show \eqref{Aineq}, \eqref{Dineq} and \eqref{A2ineq}, let us take an arbitrary decomposition $\rho_A=\sum_jp_j\rho_{j,A}$, and evaluate $|\Delta_j|$ for the decomposition as follows:
\begin{align}
|\Delta_j|&=\left|\left(x_A(\rho_{j,A})-x_{A'}(\rho_{j,A},\rho_B,U)\right)-\left(\sum_jp_jx_{A}(\rho_{j,A})-\sum_jp_jx_{A'}(\rho_{j,A},\rho_B,U)\right)\right|\nonumber\\
&\approx_{\frac{1}{2}\epsilon M\gamma}
\left|x_A(\rho_{j,A})-(x_{A}(\rho_{j,A})+x_{B}(\rho_B))\frac{l}{N+k}-\sum_{j}p_jx_{A}(\rho_{j,A})+\sum_{j}p_j(x_{A}(\rho_{j,A})+x_{B}(\rho_B))\frac{l}{N+k}\right|\nonumber\\
&=\left|x_A(\rho_{j,A})-\sum_{j}p_jx_{A}(\rho_{j,A})\right|\gamma.\label{ev-Deltaj}
\end{align}

To derive \eqref{Aineq} from the above evaluation, let us choose a decomposition $\rho_A=\sum_jp_j\rho_{j,A}$ where each $\rho_{j,A}$ is in eigenspace of $X_A$. 
We can choose such a decomposition due to $[\rho_A,X_A]=0$.
Then, $\sum_jp_j\left|x_A(\rho_{j,A})-\sum_{j}p_jx_{A}(\rho_{j,A})\right|=M$ holds. 
Applying \eqref{ev-Deltaj} to this decomposition, we obtain \eqref{Aineq}:
\begin{align}
\SA&\ge\sum_jp_j|\Delta_j|\ge M\gamma-M\gamma\epsilon.
\end{align}
And when $\rho_A$ can be decomposed as $\rho_A=\frac{\rho_0+\rho_1}{2}$ where $\rho_0$ and $\rho_1$ are eigenstates of $X_A$, applying \eqref{ev-Deltaj} to the decomposition $\rho_A=\frac{\rho_0+\rho_1}{2}$, we obtain \eqref{A2ineq} because of $\sum_{j=0,1}\frac{1}{2}\left|x_A(\rho_{j,A})-\sum_{j=0,1}\frac{1}{2}x_{A}(\rho_{j,A})\right|=M$:
\begin{align}
\SA_2&\ge\sum_{j=0,1}\frac{1}{2}|\Delta_j|\ge M\gamma-M\gamma\epsilon.
\end{align}

Similarly, we can derive \eqref{Dineq} as follows
\begin{align}
\Delta_{\max}&=\max_{\{p_j,\rho_{j,A}\}}|\Delta_j|\nonumber\\
&\le \max_{\{p_j,\rho_{j,A}\}}\left(\left|x_A(\rho_{j,A})-\sum_{j}p_jx_{A}(\rho_{j,A})\right|\right)\gamma+M\gamma\epsilon,\nonumber\\
&\le\max_{\{p_j,\rho_{j,A}\}}\left(\left|x_A(\rho_{j,A})-\sum_{j}p_jx_{A}(\rho_{j,A})\right|\right)\gamma(1+\epsilon),\nonumber\\
&\le \calD_{X_A}\gamma(1+\epsilon),\nonumber\\
&\le k\gamma(1+\epsilon).
\end{align}
where $\{p_j,\rho_{j,A}\}$ runs over all possible decompositions of $\rho_A$.

\end{proof}

\subsection{Derivation of (\ref{Rstar2}) in the main text}
In this subsection, we derive \eqref{Rstar2} based on the assumptions introduced in the main text. We show that the assumptions actually hold in the Haar random unitary with  the conservation law of $X$ in the next subsection.
We show \eqref{Rstar2} as the following theorem:
\begin{theorem}\label{va-sc-t2}
Let us take a real positive number $\epsilon$ which is lower than $1/(N+k)^2$, and a real non-negative number $x$. 
We define projections $P^{AB}_x$, $P^{A}_x$ and $P^{B}_{x}$ on $AB$ and $B$ as follows:
\eq{
P^{AB}_x&:=\sum_{\lambda_{\min}(X_A+X_B)+x\le x_A(i)+x_B(j)\le\lambda_{\min}(X_A+X_B)-x}\ket{i,a}\bra{i,a}\otimes\ket{j,b}\bra{j,b},\\
P^{A}_x&:=\sum_{\lambda_{\min}(X_A)+x\le x_A(i)\le\lambda_{\min}(X_A)-x}\ket{i,a}\bra{i,a},\\
P^{B}_x&:=\sum_{\lambda_{\min}(X_B)+x\le x_B(j)\le\lambda_{\min}(X_B)-x}\ket{j,b}\bra{j,b},
}
where $\lambda_{\min}(X_A+X_B)$ and $\lambda_{\max}(X_A+X_B)$ are the minimum and maximum eigenvalues of $X_A+X_B$, and $\ket{i,a}$ and $\ket{j,b}$ are eigenvectors of $X_A$ and $X_B$ whose eigenvalues are $x_{A}(i)$ and $x_{B}(j)$.
We also take pure states $\ket{\psi_{AR_A}}$ and $\ket{\phi_{BR_B}}$ on $AR_A$ and $BR_B$. 
We assume that $\rho_A:=\Tr_{R_A}[\ket{\psi_{AR_A}}\bra{\psi_{AR_A}}]$  satisfies $[X_A,\rho_A]=0$ and $\rho_B:=\Tr_{R_B}[\ket{\phi_{BR_B}}\bra{\phi_{BR_B}}]$ satisfies $[X_B,\rho_B]=0$ and 
\eq{
V_{\rho_B}(X_B)\le N/4.\label{assumVB}
}
We also assume that $\rho_A$ and $\rho_B$ satisfy at least one of the following two inequalities:
\eq{
\left\|\frac{P^{A}_x\rho_AP^A_x}{\Tr[(\rho_A)P^{A}_x]}-\rho_A\right\|_1&\le\frac{\epsilon^2}{3(N+k)^2},\label{Rstar1.75SA}\\
\left\|\frac{P^{B}_x\rho_BP^B_x}{\Tr[(\rho_B)P^{B}_x]}-\rho_B\right\|_1&\le\frac{\epsilon^2}{3(N+k)^2}.\label{Rstar1.75S}
}
Let us take a unitary operation $U$ on $AB$ satisfying the conservation law of $X$.
We assume that $U$ satisfies the following two relations for arbitrary $\ket{i,a}\otimes\ket{j,b}$ satisfying $P^{AB}_x\ket{i,a}\otimes\ket{j,b}=\ket{i,a}\otimes\ket{j,b}$:
\eq{
V_{\rho'_{\alpha'|i,a,j,b,U}}(X_{\alpha'})&\le\frac{1+\epsilon}{4}\min\{l,\gamma(N+k)\},\label{RstarS}\\
x_{A'}(\ket{i,1},\ket{j,b},U)&\approx_{\epsilon}\frac{l}{N+k}(x_A(\ket{i,a})+x_B(\ket{j,b})).\label{Rstar1.5S}
}
Then, the following inequality hold:
\eq{
\sqrt{\calF_f}&\le2(1+\epsilon)\sqrt{\gamma(N+k^2)}.\label{Rstar2S}
}
\end{theorem}

\begin{proofof}{\eqref{Rstar2S}(=\eqref{Rstar2} in the main text)}
We use the following equation which is valid for arbitrary probability $\{q_k\}$ and operator $W$:
\begin{align}
V_{\sum_kq_k\sigma_k}(W)&=\sum_kq_kV_{\sigma_k}(W)+V_{\{q_k\}}(\{\ex{W}_{\sigma_k}\}).\label{keyB}
\end{align}
where $V_{\{q_k\}}(\{\ex{W}_{\sigma_k}\}):=\sum_{k}q_k\ex{W}^2_{\sigma_k}-(\sum_kq_k\ex{W}_{\sigma_k})^2$ is the variance of the expectation values $\{\ex{W}_{\sigma_k}\}$ with the probability $\{q_k\}$.
We prove \eqref{keyB} as follows:
\begin{align}
V_{\sum_kq_k\sigma_k}(W)&=\sum_{k}q_k\ex{W^2}_{\sigma_k}-(\ex{W}_{\sum_kq_k\sigma_k})^2
\nonumber\\
&=\sum_{k}q_k\ex{W^2}_{\sigma_k}-\sum_{k}q_k\ex{W}^2_{\sigma_k}+\sum_{k}q_k\ex{W}^2_{\sigma_k}-(\ex{W}_{\sum_kq_k\sigma_k})^2\nonumber\\
&=\sum_{k}q_k(\ex{W^2}_{\sigma_k}-\ex{W}^2_{\sigma_k})+\sum_{k}q_k\ex{W}^2_{\sigma_k}-(\sum_kq_k\ex{W}_{\sigma_k})^2\nonumber\\
&=\sum_kq_kV_{\sigma_k}(W)+V_{\{q_k\}}(\{\ex{W}_{\sigma_k}\}).
\end{align}

We also use the following relation for arbitrary states $\rho$ and $\tilde{\rho}$ and an arbitrary operator $Y$:
\eq{
V_{\rho}(Y)\approx_{3\|Y\|^2_\infty\|\rho-\tilde{\rho}\|_1}V_{\tilde{\rho}}(Y).\label{III23}
}
We prove \eqref{III23} as follows:
\eq{
|V_{\rho}(Y)-V_{\tilde{\rho}}(Y)|&\le|\ex{Y^2}_{\rho}-\ex{Y^2}_{\tilde{\rho}}|+|\ex{Y}^2_{\rho}-\ex{Y}^2_{\tilde{\rho}}|\non
&\le\|Y\|^2_\infty\|\rho-\tilde{\rho}\|_1
+2\max\{|\ex{Y}_\rho|,|\ex{Y}_{\tilde{\rho}}|\}\|Y\|_\infty\|\rho-\tilde{\rho}\|_1\non
&\le3\|Y\|^2_\infty\|\rho-\tilde{\rho}\|_1.
}

Let us derive \eqref{Rstar2S}.
First we consider the case when \eqref{Rstar1.75S} holds.
Hereafter we use the abbreviations $\tilde{\rho}_B:=\frac{P^{B}_x\rho_BP^B_x}{\Tr[(\rho_B)P^{B}_x]}$ and $\tilde{\rho}'_{B'}:=\Tr_{A'}[U(\rho_A\otimes\tilde{\rho}_B)U^\dagger]$.
Then, we obtain 
\eq{
\|\rho'_{B'}-\tilde{\rho}'_{B'}\|_1\le\|\rho_A\otimes\rho_B-\rho_A\otimes\tilde{\rho}_B\|_1\le\frac{\epsilon^2}{3(N+k)^2}.\label{III25.5}
}
Combining \eqref{III23}, \eqref{III25.5} and $\|X_{B'}\|_{\infty}=\gamma(N+k)$, we obtain
\eq{
V_{\rho'_{B'}}(X_{B'})\approx_{\gamma^2\epsilon^2}V_{\tilde{\rho}'_{B'}}(X_{B'})\label{III26}
}

Note that the state $\rho_{A}\otimes\tilde{\rho}_{B}$ on $AB$ can be written in the following form:
\begin{align}
\rho_{A}\otimes\tilde{\rho}_{B}=\sum_{i,a,j,b}r_{i,a,A}r_{j,b,B}\ket{i,a}\bra{i,a}\otimes\ket{j,b}\bra{j,b}.
\end{align}
Therefore, we can divide $V_{\tilde{\rho}'_{B'}}(X_{B'})$ into two parts:
\eq{
V_{\tilde{\rho}'_{B'}}(X_{B'})=\sum_{i,a,j,b}r_{i,a,A}r_{j,b,B}V_{\rho'_{B'|i,a,j,b}}(X_{B'})+V_{\{r_{i,a,A}r_{j,b,B}\}}(\{\ex{X_{B'}}_{\rho'_{B'|i,a,j,b}}\}).\label{divide}
}
Due to \eqref{Rstar1.5S} and
$P^{AB}_x\rho_{A}\otimes\tilde{\rho}_{B}P^{AB}_x=\rho_{A}\otimes\tilde{\rho}_{B}$, $\ex{X_{B'}}_{\rho'_{B'|i,a,j,b}}$ satisfies
\eq{
\ex{X_{B'}}_{\rho'_{B'|i,a,j,b}}=\gamma(x_{A}(i)+x_{B}(j))+\epsilon_{i,a,j,b}
}
where $|\epsilon_{i,a,j,b}|\le\epsilon$.
Therefore, due to $\sqrt{V_{\{q_k\}}(z_k+z'_{k})}\le\sqrt{V_{\{q_k\}}(\{z_k\})}+\sqrt{V_{\{q_k\}}(\{z'_{k}\})}$, we obtain
\eq{
\sqrt{V_{\{r_{i,a,A}r_{j,b,B}\}}(\{\ex{X_{B'}}_{\rho'_{B'|i,a,j,b}}\})}&\le\sqrt{V_{\{r_{i,a,A}r_{j,b,B}\}}(\{\gamma(x_{A}(i)+x_{B}(j))\})}+\epsilon\non
&\le\gamma\sqrt{V_{\{r_{i,a,A}\}}(\{x_{A}(i)\})+V_{\{r_{j,b,B}\}}(\{x_{B}(j)\})}+\epsilon \non
&\le\gamma\frac{\sqrt{N+k^2}}{2}+\gamma^2\epsilon+\epsilon.
\label{keyC}
}
Here we used $V_{\{r_{i,a,A}\}}(\{x_{A}(i)\})\le k^2/4$ and $V_{\{r_{j,b,B}\}}(\{x_{B}(j)\})=V_{\tilde{\rho}_B}(X_B)\le N/4+\gamma^2\epsilon^2$, which is given by \eqref{assumVB} and \eqref{III26}, in the last line.

Combining \eqref{RstarS}, \eqref{III26}, \eqref{divide} and \eqref{keyC}, we obtain \eqref{Rstar2S} as follows
\eq{
\sqrt{\calF_f}&=2\sqrt{V_{\rho'_{B'}}(X_{B'})}\non
&\le2\sqrt{V_{\tilde{\rho}'_{B'}}(X_{B'})}+2\gamma\epsilon\non
&\le2\left(\sqrt{\sum_{i,a,j,b}r_{i,a,A}r_{j,b,B}V_{\rho'_{B'|i,a,j,b}}(X_{B'})}+\sqrt{V_{\{r_{i,a,A}r_{j,b,B}\}}(\{\ex{X_{B'}}_{\rho'_{B'|i,a,j,b}}\})}\right)+2\gamma\epsilon
\non
&\le2\sqrt{\frac{1+\epsilon}{4}\min\{l,\gamma(N+k)\}}+\gamma\sqrt{N+k^2}+2\gamma^2\epsilon+2\epsilon+2\gamma\epsilon
\non
&\le2(1+\epsilon)\sqrt{\gamma(N+k^2)}.
}
Here we use $\epsilon\le1/(N+k)^2$, $M\le k$ and $\frac{1}{N+k}\le\gamma\le 1$ (since now we treat the case of $l<N+k$) in the last line.

Next, we consider the case when \eqref{Rstar1.75SA} holds.
In this case we use the abbreviations $\hat{\rho}_A:=\frac{P^{A}_x\rho_AP^A_x}{\Tr[(\rho_A)P^{A}_x]}$ and $\hat{\rho}'_{B'}:=\Tr_{A'}[U(\hat{\rho}_A\otimes\rho_B)U^\dagger]$.
Then, we obtain 
\eq{
\|\rho'_{B'}-\hat{\rho}'_{B'}\|_1\le\|\rho_A\otimes\rho_B-\hat{\rho}_A\otimes\rho_B\|_1\le\frac{\epsilon}{3(N+k)^2}.
}
Therefore, from \eqref{III23} and $\|X_{B'}\|_{\infty}=\gamma(N+k)$, we obtain
\eq{
V_{\rho'_{B'}}(X_{B'})\approx_{\gamma^2\epsilon}V_{\hat{\rho}'_{B'}}(X_{B'})\label{III26'}
}

Note that the state $\hat{\rho}_{A}\otimes\rho_{B}$ on $AB$ can be written in the following form:
\begin{align}
\hat{\rho}_{A}\otimes\rho_{B}=\sum_{i,a,j,b}r'_{i,a,A}r'_{j,b,B}\ket{i,a}\bra{i,a}\otimes\ket{j,b}\bra{j,b}.
\end{align}
Therefore, we can divide $V_{\hat{\rho}'_{B'}}(X_{B'})$ into two parts:
\eq{
V_{\hat{\rho}'_{B'}}(X_{B'})=\sum_{i,a,j,b}r'_{i,a,A}r'_{j,b,B}V_{\rho'_{B'|i,a,j,b}}(X_{B'})+V_{\{r'_{i,a,A}r'_{j,b,B}\}}(\{\ex{X_{B'}}_{\rho'_{B'|i,a,j,b}}\}).\label{divide'}
}
Due to $P^{AB}_x\hat{\rho}_{A}\otimes\rho_{B}P^{AB}_x=\hat{\rho}_{A}\otimes\rho_{B}$, we can derive the following inequality in the same manner as the derivation of \eqref{keyC}:
\eq{
\sqrt{V_{\{r'_{i,a,A}r'_{j,b,B}\}}(\{\ex{X_{B'}}_{\rho'_{B'|i,a,j,b}}\})}\le\gamma\frac{\sqrt{N+k^2}}{2}+\gamma^2\epsilon+\epsilon.
\label{keyC'}
}
Combining \eqref{RstarS}, \eqref{III26'}, \eqref{divide'} and \eqref{keyC'}, we obtain \eqref{Rstar2S} again:
\eq{
\sqrt{\calF_f}\le2(1+\epsilon)\sqrt{\gamma(N+k^2)}.
}

\end{proofof}

\subsection{Derivation of a tighter variation of \eqref{doda}}
Here, let us derive a tighter variation of \eqref{doda} that we use in Figure \ref{dodagraph} in the main text.
Substituting \eqref{Dineq}, \eqref{A2ineq}, \eqref{Rstar2} into \eqref{R-SIQ2}, we obtain
\eq{
\frac{1-\epsilon}{1+\epsilon}\frac{\gamma M}{\gamma k+2\sqrt{\gamma(N+k^2)}}\le\delta
}
To \g{interpret} the meaning of \g{this inequality, we} consider the case of $k=\sqrt{N}$ and $M=k/2$ (\g{we can take such an} $M$ by considering a relevant $\rho_A$ and its decomposition, e.g., $\rho_A=(\rho^{\max}_{k}+\rho^{\max}_{0})/2$, where $\rho^{\max}_{x}$  is the maximally mixed state of the eigenspace of $X_A$ whose eigenvalues is $x$).
Then, we obtain
\eq{
\frac{\mathrm{const.}}{1+\frac{2\sqrt{2}}{\sqrt{\gamma}}}\le\delta,\label{R-doda}
}
where the $\mathrm{const.}$ is a real number larger than $0.49$.
This inequality is twice tigher than \eqref{doda} in the main text.

\subsection{Evaluation of expectation values and variance of conserved quantity in Haar random unitary with the conservation law}
In this subsection, we show that when $U$ is a typical Haar random unitary with the conservation law of $X$, the assumptions \eqref{ex-sc-eq}, \eqref{assumVB}, \eqref{RstarS} and \eqref{Rstar1.5S} which are the assumptions used in the previous subsection actually hold.
To prove them explicitly, we define the Haar random unitary with the conservation law of $X$ in the black hole model.
In this subsection, we assume that each operator $X_{i}$ on each $i$-th qubit \g{is} the same, and that $X_{\alpha}=\sum_{i\in\alpha}X_{i}$ ($\alpha=A,B,A'\mbox{ and }B'$.) We also assume that $\calD_{X_i}=1$ and the ground eigenvalue of each $X_i$ is 0.  
We remark that under these assumptions, \eqref{assumVB} clearly holds whenever $\rho_B$ is a non-zero temperature Gibbs state $e^{-\beta X_B}/\Tr[e^{-\beta X_B}]$ for $0\le\beta<\infty$.

Let us refer to the eigenspace of $X_A+X_B$ whose eigenvalue is $m$ as $\calH^{(m)}$. Then, the Hilbert space of $AB$ is written as
\begin{align}
\calH_{AB}=\oplus^{N+k}_{m=0}\calH^{(m)}.
\end{align}
In this model, $X_A+X_B=X_{A'}+X_{B'}=\sum_{h}X_h$ holds ($X_h$ is the operator of $X$ on the $h$-th qubit), and thus $U$ satisfying \eqref{c-law} is also written as
\begin{align}
U=\oplus^{N+k}_{m=0}U^{(m)},
\end{align}
where $U^{(m)}$ is a unitary operation on $\calH^{(m)}$.
We refer to the unitary group of all unitary operations on $\calH^{(m)}$ as $\calU^{(m)}$, and refer to the Haar measure on $\calU^{(m)}$ as $H^{(m)}$.
Then, we can define the product measure of the Haar measures $\{H^{(m)}\}^{N+k}_{m=0}$ as follows:
\begin{align}
H^{\calM_{all}}_{\times}:=\times^{N+k}_{m=0}H^{(m)},
\end{align}
where $\calM_{all}:=\{0,1,...,M+k\}$.
The measure $H^{\calM_{all}}_{\times}$ is a probabilistic measure on the following unitary group on $\calH_{AB}$:
\begin{align}
\calU^{\calM_{all}}_{\times}:=\times^{N+k}_{m=0}\calU^{(m)}.
\end{align}
Since every $U\in \calU^{\calM_{all}}_{\times}$ satisfies $U(X_A+X_B)U^\dagger=X_{A'}+X_{B'}$, we refer to $U$ chosen from $\calU^{\calM_{all}}_{\times}$ with the measure $H^{\calM_{all}}_{\times}$ as  ``the Haar random unitary with the conservation law of $X$.''

Additionally, for the later convenience, we also define the following subspace of $\calM_{all}$:
\begin{align}
\calM_{s}:=\{s,s+1,...,N+k-s\},
\end{align}
and the following products of Haar measures and unitary groups
\begin{align}
H^{\calM_{s}}_{\times}&:=\times_{m\in\calM_s}H^{(m)},\\
\calU^{\calM_{s}}_{\times}&:=\times_{m\in\calM_s}\calU^{(m)}.
\end{align}

In this subsection, hereafter we study the property of the Haar random unitaries with the conservation law of $X$.
In particular, we show that the assumptions used in Thereoms \ref{ex-sc-t1} and \ref{va-sc-t2} actually hold in the above Haar random unitary model.
For clarity, below we illustrate several propositions for specific parameter regions that are derived from the results obtained in this subsection. We stress that the results given in this subsection themselves are more general than the following propositions.
\begin{description}
    \item[Proposition 1] When the parameters $N$, $k$ and $\epsilon$ satisfy $N\ge10^3$, $k\le10N$ and $1/(N+k)^3\le\epsilon\le1/(N+k)^2$, and when the initial state $\rho_B$ of the black hole $B$ is the maximally mixed state, the following relation holds for an arbitrary state $\rho$ on $A$ satisfying $M\ge 1/(N+k)$:
\eq{
\mathrm{Prob}_{U\sim H^{\calM_{all}}_{\times}}\left[x_{A'}(\rho,\rho_B,U)\approx_{\frac{M\gamma\epsilon}{2}}(x_A(\rho)+x_B(\rho_B))\times(1-\gamma)\right]\ge 1-e^{-(N+k)^2}\label{item-1}
}
Therefore, the relation \eqref{ex-sc-eq}, which is the assumption used in Theorem \ref{ex-sc-t1}, holds with the probability larger than $1-e^{-(N+k)^2}$ in this case.
\item[Proposition 2] When the parameters $N$, $k$ and $\epsilon$ satisfy $N\ge10^3$, $47< k\le10N$, and $1/(N+k)^3\le\epsilon\le1/(N+k)^2$, and when $\rho_A$ satisfies $M\ge 1/(N+k)$ and its support is included by $\calH^{23\le x\le k-23}_A$ ($\calH^{23\le x\le k-23}_A$ is the subspace of $\calH_A$ spanned by eigenstates of $X_A$ whose eigenvalues are larger than 13 and lower than $k-13$), the following relation holds for an arbitrary state $\rho_B$ on $B$:
\eq{
\mathrm{Prob}_{U\sim H^{\calM_{all}}_{\times}}\left[x_{A'}(\rho_A,\rho_B,U)\approx_{\frac{M\gamma\epsilon}{4}}(x_A(\rho_A)+x_B(\rho_B))\times(1-\gamma)\right]\ge 1-e^{-(N+k)^2}\label{item-2}
}
Therefore, the relation \eqref{ex-sc-eq}, which is the assumption used in Theorem \ref{ex-sc-t1}, holds with the probability larger than $1-e^{-N^2}$ in this case.

\item[Remark on Proposition 2] Note that the support of $\rho_{A,\star}:=(\rho^{\max}_{k-24}+\rho^{\max}_{24})/2$ ($\rho^{\max}_{x}$ is the maximally mixed state of the eigenspace of $X_A$ corresponding to eigenvalue $x$) is included in $\calH^{23\le x\le k-23}_A$.
Therefore, the following inequality holds:
\eq{
\frac{1-\epsilon}{1+\epsilon}\frac{\frac{k}{2}-24}{2(N+2k)}\le\delta
}
Hence, when $N\ge10^3$, $10^3\le k\le10N$, regardless of the initial state of the black hole, the inequality \eqref{foggyineq} in the main text holds for a typical dynamics $U$ on $AB$:
\eq{
\frac{\rm const.}{1+N/2k}\le\delta
}
where ${\rm const.}$ is a positive number larger than $1/9$.

\item[Proposition 3] When the parameters $N$, $k$, and $\epsilon$ satisfy $N\ge 10^3$ and $1/(N+k)^3\le\epsilon\le1/(N+k)^2$, the following two relations for arbitrary  $\ket{i,a}$ and $\ket{j,b}$ of eigenstates of $X_A$ and $X_B$ whose eigenvalues are $i$ and $j$ satisfying $18\le i+j\le N+k-18$:
\eq{
\mathrm{Prob}_{U\sim H^{\calM_{all}}_{\times}}\left[V_{\rho'_{\alpha'|i,a,j,b,U}}(X_{\alpha'})\le(1+\epsilon)\frac{\min\{(1-\gamma)(N+k),\gamma(N+k)\}}{4} \right]
&\ge 1-4\exp\left(-10(N+k)\right),
\label{item-3A}\\
\mathrm{Prob}_{U\sim H^{\calM_{all}}_{\times}}\left[x_{A'}(\ket{i,a},\ket{j,b},U)\approx_{\epsilon}(x_A(i)+x_B(j))\times(1-\gamma)\right]&\ge 1-e^{-(N+k)^2}\label{item-3B}
}
where $\alpha'=A',\enskip B'$. Therefore, in the above parameter region of $N$, $k$ and $\epsilon$,  \eqref{RstarS} and \eqref{Rstar1.5S} which are the assumptions used in Theorem \ref{va-sc-t2} actually hold for $x=18$ with very high probability.

\item[Remark on Proposition 3] In the above parameter region, due to the inequalities \eqref{item-3A} and \eqref{item-3B}, whenever the initial states $\rho_A$ and $\rho_B$ satisfy \eqref{assumVB} and at least one of \eqref{Rstar1.75S} and \eqref{Rstar1.75SA} for $x=18$, the inequality \eqref{Rstar2S} (=\eqref{Rstar2} in the main text) holds.
We give two examples of such $\rho_A$ and $\rho_B$ and show that \eqref{doda} in the main text actually holds for the examples.
The first example is that $\rho_A$ is  $\rho_{A,\star2}:=(\rho^{\max}_{k-19}+\rho^{\max}_{19})/2$ and $\rho_B$ is an arbitrary non-zero temperature Gibbs state $e^{-\beta X_B}/\Tr[e^{-\beta X_B}]$ for $0\le\beta<\infty$.
Due to $X_{B}=\sum_{i\in B}X_{i}$, the Gibbs state $e^{-\beta X_B}/\Tr[e^{-\beta X_B}]$ for $0\le\beta<\infty$ clearly satisfies \eqref{assumVB}.
Also, $\rho_{A,\star2}$ satisfies \eqref{Rstar1.75SA} for $x=18$. 
Therefore, for these initial states the inequality \eqref{Rstar2S} holds, and thus we can derive the following inequality from \eqref{SIQ2}:
\eq{
\frac{1-\epsilon}{1+\epsilon}\frac{\gamma (\frac{k}{2}-19)}{2(\gamma k+2\sqrt{\gamma(N+k^2)})}\le\delta
}
By setting $k=\sqrt{N}$ and $N>10^{10}$, the inequality \eqref{doda} in the main text holds for typical dynamics $U$:
\eq{
\frac{\mathrm{const.}}{1+\frac{2\sqrt{2}}{\sqrt{\gamma}}}\le\delta,\label{doda_SS}
}
where $\mathrm{const.}$ is a positive constant larger than $0.24$. We can make this constant larger than $0.48$ by using \eqref{R-SIQ2} instead of \eqref{SIQ2}.

The second example is given for the case of $k\le 10N$ (Note that Proposition 3 does not require this restriction).
The example is that $\rho_A$ is  $\rho_{A,\star3}:=(\rho^{\max}_{k}+\rho^{\max}_{0})/2$ and $\rho_B$ is an enoughly-high temperature Gibbs state $e^{-\beta X_B}/\Tr[e^{-\beta X_B}]$ for $0\le\beta\le\log2$.
Again, due to $X_{B}=\sum_{i\in B}X_{i}$, the Gibbs state satisfies \eqref{assumVB}.
And $\rho_B$ satisifes \eqref{Rstar1.75S} for $x=18$.
(\textit{Proof:} Note the following inequalities:
\eq{
\left\|\frac{P^{B}_x\rho_BP^B_x}{\Tr[(\rho_B)P^{B}_x]}-\rho_B\right\|_1&
=\sum^{18}_{m=0}\frac{e^{-\beta m}+e^{-\beta(N-m)}}{(e^{-\beta}+1)^N}\times_{N}C_m,\\
\frac{\epsilon^2}{3(N+k)^2}&\ge\frac{1}{3\times11^8\times N^8}.
}
Here we used $k\le 10N$ and $1/(N+k)^3\le\epsilon$ in the second inequality.
Therefore, to show that $\rho_B$ satisifes \eqref{Rstar1.75S} for $x=18$, we only have to show
\eq{
3\times11^8\times N^8\sum^{18}_{m=0}\frac{e^{-\beta m}+e^{-\beta(N-m)}}{(e^{-\beta}+1)^N}\times_{N}C_m\le1.
}
Due to the left-hand side of the above inequality is monotonically decreasing with $N$ for $N\ge10^3$, and is monotonically increasing with $\beta$ for $0\le\beta\le\log2$. Therefore, because of 
\eq{
\left.\left(3\times11^8\times N^8\sum^{18}_{m=0}\frac{e^{-\beta m}+e^{-\beta(N-m)}}{(e^{-\beta}+1)^N}\times_{N}C_m\right)\right|_{\beta=\log2,N=10^3}\approx2\times10^{-111},
}
$\rho_B$ satisifes \eqref{Rstar1.75S} for $x=18$.
\textit{Proof end})
Hence, for these initial states the inequality \eqref{Rstar2S} holds, and thus we can derive the following inequality from \eqref{SIQ2}:
\eq{
\frac{1-\epsilon}{1+\epsilon}\frac{\gamma k}{4(\gamma k+2\sqrt{\gamma(N+k^2)})}\le\delta
}
By setting $k=\sqrt{N}$ and $N>10^{3}$, the inequality \eqref{doda} in the main text holds for typical dynamics $U$ again, where the positive constant ${\rm const.}$ is larger than $0.24$. Again, we can make this constant larger than $0.48$ by using \eqref{R-SIQ2} instead of \eqref{SIQ2}.
\end{description}

Now, we show our results giving the above propositions one by one. 
The first theorem guarantees that if we take average over all Haar random unitary, the expection values of  $X_{A'}$ and $X_{B'}$ are in proportion to the number of qubits in $A'$ and $B'$, respectively.
\begin{theorem}\label{ex-sc-t2}
For the quantity $x_{\alpha}$ $(\alpha=A,B,A')$ in Theorem \ref{ex-sc-t1} and arbitrary $\rho$ and $\rho_B$ on $A$ and $B$, the following equality holds:
\begin{align}
\overline{x_{A'}(\rho,\rho_B,U)}=(x_{A}(\rho)+x_{B}(\rho_B))\frac{l}{N+k},\label{alleq}
\end{align}
where $\overline{f(U)}$ is the average of the function $f$ with the product Haar measure $H^{\calM_{all}}_{\times}$.
Additionally, when the support of $\rho\otimes\rho_B$ is included in the subspace $\calH^{\calM_s}:=\otimes_{m\in\calM_s}\calH^{(m)}$, the following equality holds:
\begin{align}
\overline{x_{A'}(\rho,\rho_B,\tilde{U})}|_{H^{\calM_s}_{\times}}=(x_{A}(\rho)+x_{B}(\rho_B))\frac{l}{N+k},\label{seq}
\end{align}
where $\tilde{U}$ is a unitary which is described as $\tilde{U}=(\oplus_{m\in\calM_{s}}U^{(m)})\oplus(\oplus_{n\not\in\calM_s}I^{(m)})$ where $U^{(m)}\in\calU^{(m)}$, and $\overline{f(\tilde{U})}|_{H^{\calM_s}_{\times}}$ is the average of the function $f$ with the product Haar measure $H^{\calM_s}_{\times}$.
\end{theorem}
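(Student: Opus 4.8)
The plan is to reduce the Haar average to a single combinatorial identity by exploiting the block structure of the conserving unitary together with Schur's lemma. First I would rewrite the target quantity without the partial trace: since $\rho^{f}_{A'}=\Tr_{B'}[U(\rho\otimes\rho_B)U^\dagger]$,
\begin{align}
x_{A'}(\rho,\rho_B,U)=\Tr\!\left[(X_{A'}\otimes 1_{B'})\,U(\rho\otimes\rho_B)U^\dagger\right].
\end{align}
Because $H^{\calM_{all}}_{\times}$ acts independently on each charge sector and $U=\oplus_m U^{(m)}$, I would expand $U(\rho\otimes\rho_B)U^\dagger=\sum_{m,m'}U^{(m)}P^{(m)}(\rho\otimes\rho_B)P^{(m')}(U^{(m')})^\dagger$, where $P^{(m)}$ is the projector onto $\calH^{(m)}$, and average term by term.

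The two facts that do the work are: (i) since the sectors are averaged independently and $\int U^{(m)}\,dH^{(m)}=0$ for every sector — including the one-dimensional $m=0$ and $m=N+k$ sectors, where this is an ordinary $U(1)$ phase average — every off-diagonal ($m\neq m'$) term factorizes and vanishes; and (ii) on each sector Schur's lemma gives the twirl $\int U^{(m)}\sigma(U^{(m)})^\dagger\,dH^{(m)}=\tfrac{\Tr\sigma}{d_m}P^{(m)}$, with $d_m:=\dim\calH^{(m)}=\binom{N+k}{m}$. Writing $q_m:=\Tr[P^{(m)}(\rho\otimes\rho_B)]$, these collapse the averaged state to a mixture of maximally mixed sector states,
\begin{align}
\overline{U(\rho\otimes\rho_B)U^\dagger}=\sum_m\frac{q_m}{d_m}P^{(m)}.
\end{align}

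It then remains to evaluate the deterministic quantity $\Tr[(X_{A'}\otimes 1_{B'})P^{(m)}]/d_m$. Since each $X_i$ has eigenvalues $\{0,1\}$ (consistent with the sector labelling $m=0,\dots,N+k$ and $\calD_{X_i}=1$) and $X_{A'}=\sum_{i\in A'}X_i$ with $|A'|=l$, this is exactly the expected weight carried by a fixed set of $l$ bits in a uniformly random length-$(N+k)$ string of Hamming weight $m$, namely $l\binom{N+k-1}{m-1}/\binom{N+k}{m}=lm/(N+k)$. Substituting and using $X_A+X_B=\sum_m m\,P^{(m)}$,
\begin{align}
\overline{x_{A'}(\rho,\rho_B,U)}=\sum_m q_m\frac{lm}{N+k}=\frac{l}{N+k}\Tr[(X_A+X_B)(\rho\otimes\rho_B)],
\end{align}
which equals $(x_A(\rho)+x_B(\rho_B))\,l/(N+k)$ because $\rho$ and $\rho_B$ are supported on $A$ and $B$ respectively. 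This proves \eqref{alleq}. For the subspace statement \eqref{seq}, the support assumption forces $q_m=0$ for $m\notin\calM_s$, while $\tilde U$ acts as the identity on precisely those sectors and hence contributes nothing; the identical computation, with every sum restricted to $\calM_s$, reproduces the same right-hand side.

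The hard part here is bookkeeping rather than any deep difficulty: one must check that the off-diagonal sector terms genuinely cancel, so that the one-dimensional sectors are handled as honest $U(1)$ Haar averages rather than silently discarded, and that the counting identity $l\binom{N+k-1}{m-1}/\binom{N+k}{m}=lm/(N+k)$ is applied with the correct normalization $d_m$. The conceptual crux is recognizing that the per-sector Haar twirl erases all randomness by sending the post-unitary state to the maximally mixed state inside each charge sector, after which the whole average becomes a fixed linear-algebra and counting computation.
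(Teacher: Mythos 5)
Your proof is correct, but it follows a genuinely different route from the paper's. The paper never computes the averaged state: it observes that every swap gate $S_{h,h'}$ is itself block-diagonal over the charge sectors, hence left-multiplication by $S_{h,h'}$ preserves the group $\calU^{\calM_{all}}_{\times}$ and its product Haar measure, so all single-qubit marginals of the averaged output coincide; combining this with conservation of $\ex{X_{A'}+X_{B'}}$ immediately forces each of the $N+k$ qubits to carry an equal $1/(N+k)$ share of $x_A(\rho)+x_B(\rho_B)$, and $A'$ holds $l$ of them. That argument needs no information about the sector dimensions or the eigenvalues of the single-site $X_i$ beyond the assumption that all $X_i$ are identical. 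Your route instead twirls each sector by Schur's lemma, kills the cross-sector terms using the independence of the product measure together with $\int U^{(m)}\,dH^{(m)}=0$ (correctly including the one-dimensional $m=0$ and $m=N+k$ sectors as honest $U(1)$ averages), and lands on the explicit averaged state $\sum_m (q_m/d_m)P^{(m)}$, after which the counting identity $l\binom{N+k-1}{m-1}/\binom{N+k}{m}=lm/(N+k)$ finishes the job. This buys strictly more — the full form of the Haar-averaged density operator, not just its qubit marginals — at the cost of committing to the $\{0,1\}$ eigenvalue convention implicit in the sector labelling $m=0,\dots,N+k$ (one could avoid even that by noting $P^{(m)}$ is permutation-invariant, so $\Tr[X_{A'}P^{(m)}]=\tfrac{l}{N+k}\,m\,d_m$ without explicit counting). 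Your handling of the restricted case \eqref{seq} is also sound: the support assumption kills every matrix block $P^{(m)}(\rho\otimes\rho_B)P^{(m')}$ with an index outside $\calM_s$, so the identity action of $\tilde U$ there is harmless.
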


\begin{proof}
We refer to the state of the $h$-th qubit in $A'B'$ after $U$ as $\rho^{f}_{h}$.
The state $\rho^{f}_{h}$ satisfies
\begin{align}
\rho^{f}_{h}=\Tr_{\lnot h}[U(\rho\otimes\rho_B)U^\dagger],
\end{align}
where $\Tr_{\lnot h}$ is the partial trace of the qubits other than the $h$-th qubit.
Note that the following equality holds:
\begin{align}
\overline{x_{A'}(\rho,\rho_B,U)}=\sum_{h\in A'}\ex{X_h}_{\overline{\rho^{f}_{h}}},
\end{align}
where $X_h$ is the operator of $X$ on the $h$-th qubit.
Therefore, in order to show \eqref{alleq}, we only have to show
\begin{align}
\overline{\rho^{f}_{h}}=\overline{\rho^{f}_{h'}}\enskip\forall h,h'.\label{alleqpre}
\end{align}
To show \eqref{alleqpre}, we note that the swap gate $S_{h,h'}$ between the $h$-th and the $h'$-th qubits can be written in the following form:
\begin{align}
S_{h,h'}=\oplus_{m\in\calM_{all}}S^{(m)}_{h,h'},
\end{align}
where each $S^{(m)}_{h,h'}$ is a unitary on $\calH^{(m)}$.
Therefore, for any $U\in \calU^{\calM_{all}}_{\times}$, the unitary $S_{h,h'}U$ also satisfies $S_{h,h'}U\in \calU^{\calM_{all}}_{\times}$.
With using this fact and the invariance of the Haar measure, we can derive \eqref{alleqpre} as follows:
\begin{align}
\overline{\rho^{f}_{h}}&=\Tr_{\lnot h}[\int_{H^{\calM_{all}}_{\times}} d\mu U(\rho\otimes\rho_B)U^\dagger]
\nonumber\\
&=\Tr_{\lnot h'}[S_{h,h'}\int_{H^{\calM_{all}}_{\times}} d\mu U(\rho\otimes\rho_B)U^\dagger S^\dagger_{h,h'}]\nonumber\\
&=\Tr_{\lnot h'}[\int_{H^{\calM_{all}}_{\times}} d\mu S_{h,h'}U(\rho\otimes\rho_B)(S_{h,h'}U)^\dagger ]\nonumber\\
&=\Tr_{\lnot h'}[\int_{H^{\calM_{all}}_{\times}} d\mu U'(\rho\otimes\rho_B)U'^\dagger]\nonumber\\
&=\overline{\rho^{f}_{h'}}
\end{align}
Therefore, we have obtained \eqref{alleq}. 

We can also derive \eqref{seq} in a very similar way.
For an arbitrary unitary $V\in\calU^{\calM_{all}}_{\times}$, let us define $\tilde{V}\in\calU^{\calM_{s}}_{\times}$ as follows:
\begin{align}
\tilde{V}=\left(\oplus_{m\in\calM_s}V^{(m)}\right)\oplus\left(\oplus_{n\not\in\calM_s}I^{(m)}\right),\label{tilV}
\end{align}
where $\{V^{(m)}\}$ are defined as $V=\oplus_{m\in\calM_{all}}V^{(m)}$.
Note that when $\rho\otimes\rho_B$ is included in $\calH^{\calM_s}$, we can substitute $\tilde{S}_{h,h'}$ and $\tilde{U}$ for $S_{h,h'}$ and $U$ in the above derivation of \eqref{alleq}.
By performing this substitution, we obtain \eqref{seq}.
\end{proof}

In the next theorem, we show that under a natural assumption, the value of $x_{A'}(\rho,\rho_B,U)$ with a Haar random unitary $U$ is almost equal to its average with very high probability. 
\begin{theorem}\label{ev-prob-t}
For the quantity $x_{A'}$ in Theorem \ref{ex-sc-t1}, an arbitrary positive number $t$, an arbitrary non-negative integer $s$, and arbitrary states $\rho$ and $\rho_B$ on $A$ and $B$ which satisfy that the support of $\rho\otimes\rho_B$ is included in the subspace $\calH^{\calM_s}:=\otimes_{m\in\calM_s}\calH^{(m)}$ for $\calM_s:=\{s,s+1,...,N+k-s\}$, the following relation holds:
\begin{align}
\mathrm{Prob}_{U\sim H^{\calM_{all}}_{\times}}\left[|x_{A'}(\rho,\rho_B,U)-\overline{x_{A'}(\rho,\rho_B,U)}|>t\right]\le 2\exp\left(-\frac{(_{N+k}C_s-2)t^2}{48l^2}\right).\label{ev-prob}
\end{align} 
Here $\mathrm{Prob}_{U\sim H^{\calM_{all}}_{\times}}[...]$ is the probability that the event (...) happens when $U$ is chosen from $\calU^{\calM_{all}}_{\times}$ with the measure $H^{\calM_{all}}_{\times}$.
\end{theorem}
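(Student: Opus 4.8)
The plan is to read the stated quantity as a real-valued function on the unitary group and apply concentration of measure. Write
\[
f(U):=x_{A'}(\rho,\rho_B,U)=\Tr[(X_{A'}\otimes 1_{B'})\,U(\rho\otimes\rho_B)U^{\dagger}],
\]
a Hermitian expectation value, hence real. First I would exploit the support hypothesis. Since $\rho\otimes\rho_B$ is supported on $\calH^{\calM_s}$ (the span of the blocks $\calH^{(m)}$ with $m\in\calM_s$) and $U=\oplus_m U^{(m)}$ is block diagonal, the operator $U(\rho\otimes\rho_B)U^{\dagger}$ depends only on the blocks $U^{(m)}$ with $m\in\calM_s$. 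Thus $f$ may be regarded as a function on the product group $\calU^{\calM_s}_{\times}=\times_{m\in\calM_s}\calU^{(m)}$, where $\calU^{(m)}\cong U(d_m)$ with $d_m:=\dim\calH^{(m)}={}_{N+k}C_m$, and the average $\overline{f}$ over $H^{\calM_{all}}_{\times}$ coincides with the average over $H^{\calM_s}_{\times}$, because integrating over the inert blocks changes nothing.

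Second, I would bound the Lipschitz constant of $f$ with respect to the Hilbert--Schmidt metric $\|U-V\|_2$. Setting $\sigma:=\rho\otimes\rho_B$ and $Y:=X_{A'}\otimes 1_{B'}$ and using $U\sigma U^{\dagger}-V\sigma V^{\dagger}=(U-V)\sigma U^{\dagger}+V\sigma(U-V)^{\dagger}$ together with $|\Tr[Y(U-V)\sigma U^{\dagger}]|\le\|Y\|_{\infty}\|U-V\|_{\infty}\le\|Y\|_{\infty}\|U-V\|_2$ (and the analogous bound for the second term), I obtain $|f(U)-f(V)|\le 2\|X_{A'}\|_{\infty}\|U-V\|_2$. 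As $X_{A'}$ is a sum of $l$ qubit operators of unit spread, $\|X_{A'}\|_{\infty}\le l$, so $f$ is $2l$-Lipschitz.

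Third, I would invoke the concentration inequality for the unitary group \cite{Meckes}: an $L$-Lipschitz real function on $U(d)$ satisfies $\mathrm{Prob}[|g-\overline{g}|>t]\le 2\exp(-(d-2)t^2/(12L^2))$, i.e.\ Haar measure on $U(d)$ enjoys Gaussian concentration with variance proxy proportional to $1/(d-2)$. Because such concentration follows from a transportation-cost (equivalently log-Sobolev) inequality, it tensorizes: the product measure $H^{\calM_s}_{\times}$ on $\calU^{\calM_s}_{\times}$, equipped with the $\ell^2$-combination of the block Hilbert--Schmidt metrics, inherits concentration governed by the worst factor, namely the block of smallest dimension. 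On the index set $\calM_s=\{s,\dots,N+k-s\}$ the binomial $d_m={}_{N+k}C_m$ attains its minimum at the endpoints, giving $\min_{m\in\calM_s}d_m={}_{N+k}C_s$. Feeding the $2l$-Lipschitz $f$ into the tensorized bound then yields exactly
\[
\mathrm{Prob}_{U\sim H^{\calM_{all}}_{\times}}[\,|f-\overline{f}|>t\,]\le 2\exp\!\Big(-\frac{({}_{N+k}C_s-2)\,t^2}{12\cdot(2l)^2}\Big)=2\exp\!\Big(-\frac{({}_{N+k}C_s-2)\,t^2}{48\,l^2}\Big).
\]

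The hard part will be the tensorization step: a naive per-block union bound would replace the governing dimension by a sum over $m\in\calM_s$ and ruin the exponent. The correct route is to pass through a genuinely tensorizing functional inequality (Talagrand's $T_2$ transportation inequality, or the log-Sobolev inequality underlying Meckes' estimate), which combines with the maximum of the block constants, i.e.\ the minimum block dimension, so that the single effective dimension ${}_{N+k}C_s$ emerges and the constant $48$ is preserved. One must also check that the restriction of $f$ to each block remains $2l$-Lipschitz and that the $\ell^2$-product metric is the appropriate one for tensorization; both are immediate from the block-diagonal structure of $U$.
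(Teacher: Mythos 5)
Your proposal follows the same skeleton as the paper's proof: restrict $f$ to the product group $\calU^{\calM_s}_{\times}$ using the support hypothesis, show that $f$ is $O(l)$-Lipschitz for the $L_2$-sum of block Hilbert--Schmidt metrics, and feed this into a Meckes-type concentration estimate with $d_{\min}={}_{N+k}C_s$. Two remarks. First, the ``hard part'' you defer --- tensorizing the concentration inequality over the blocks so that the minimum block dimension governs the rate --- is not something you need to supply: the paper invokes Corollary 3.15 of Ref.~\cite{Meckes}, which is already stated for products $\times_i\,\calU(d_i)$ equipped with the $L_2$-sum metric and gives $\exp\left(-(d_{\min}-2)t^2/(12L^2)\right)$ directly; your explanation of why a per-block union bound would be wrong and why log-Sobolev/transportation tensorization is the right mechanism is precisely what that corollary packages. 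Second, your Lipschitz estimate is obtained by a different and more elementary route than the paper's: you telescope $U\sigma U^{\dagger}-V\sigma V^{\dagger}=(U-V)\sigma U^{\dagger}+V\sigma(U-V)^{\dagger}$ and apply H\"older with $\|X_{A'}\|_{\infty}$, whereas the paper passes through $\calD_{X_{A'}}\|U\sigma U^{\dagger}-V\sigma V^{\dagger}\|_1$ and then bounds the trace distance by $2\|U-V\|_2$ via a purification and fidelity argument. Your route works, but note one caveat: the paper normalizes only the spectral spread $\calD_{X_i}=1$, not the operator norm, so $\|X_{A'}\|_{\infty}\le l$ is not automatic. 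Since $f(U)-f(V)=\Tr\left[(X_{A'}\otimes 1_{B'}-\mu 1)\left(U\sigma U^{\dagger}-V\sigma V^{\dagger}\right)\right]$ for any constant $\mu$ (the operator in the second factor is traceless), you should center $X_{A'}$ at the midpoint of its spectrum, which gives $\|X_{A'}\otimes 1_{B'}-\mu 1\|_{\infty}=\calD_{X_{A'}}/2\le l/2$ and hence an $l$-Lipschitz bound --- more than enough to recover the stated constant $48$.
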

This theorem implies that when $U$ is a typical Haar random unitary with the conservation law of $X$, the assumption \eqref{ex-sc-eq} actually holds with very high probability.
To be concrete, we can derive the following corollary from Theorem \ref{ev-prob-t}:
\begin{corollary}\label{corollary:master_ex}
Let us take arbitrary states $\rho$ on $A$ and $\rho_B$ on $B$ and an arbitrary non-negative integer $s$.
We define the following $\rho^{\calM_s}_{B}$ from $\rho_B$ and refer to the distance between $\rho^{\calM_s}_{B}$ and $\rho_B$ as $\epsilon_{B,\calM_s}$:
\eq{
\rho^{\calM_s}_B&:=\frac{P^{\calM_s}\rho_BP^{\calM_s}}{\Tr[P^{\calM_s}\rho_B]},\\
\epsilon_{B,\calM_s}&:=\left\|\rho^{\calM_s}_B-\rho_B\right\|_1,
}
where $P^{\calM_s}$ is the projection to the subspace $\calH^{\calM_s}$.
Then, the following relation holds:
\eq{
\mathrm{Prob}_{U\sim H^{\calM_{all}}_{\times}}\left[x_{A'}(\rho,\rho_B,U)\approx_{t+(2N+k)\epsilon_{B,\calM_s}}(x_A(\rho)+x_B(\rho_B))\times(1-\gamma)\right]\ge 1-2\exp\left(-\frac{(_{N+k}C_s-2)t^2}{48(N+k)^2}\right)\label{C1-Z}
}
\end{corollary}

\begin{proofof}{Corollary \ref{corollary:master_ex}}
For an arbitrary unitary $U$ on $AB$ satisfying the conservation law, we obtain
\eq{
|x_{A'}(\rho,\rho_B,U)-x_{A'}(\rho,\rho^{\calM_s}_B,U)|&=|\Tr[X_{A'}(U\rho\otimes\rho_BU^\dagger-U\rho\otimes\rho^{\calM_s}_BU^\dagger)]|\non
&\le\|X_{A'}\|_\infty\epsilon_{B,\calM_s}\non
&\le(N+k)\times\epsilon_{B,\calM_s}\label{C1-A}
}
Therefore, $x_{A'}(\rho,\rho_B,U)\approx_{(N+k)\times\epsilon_{B,\calM_s}}x_{A'}(\rho,\rho^{\calM_s}_B,U)$ always holds.
We also obtain
\eq{
|x_{B}(\rho_B)-x_{B}(\rho^{\calM_s}_B)|&=|\Tr[X_B(\rho_B-\rho^{\calM_s}_B)]|\non
&\le\|X_B\|_\infty\times\epsilon_{B,\calM_s}\non
&\le N\times\epsilon_{B,\calM_s}.\label{C1-B}
}

Since the support of $\rho\otimes\rho^{\calM_s}_{B}$ is on $\calH^{\calM_s}$, Theorem \ref{ev-prob-t} and \eqref{alleq} imply
\eq{
\mathrm{Prob}_{U\sim H^{\calM_{all}}_{\times}}\left[x_{A'}(\rho,\rho^{\calM_s}_B,U)\approx_{t}(x_A(\rho)+x_B(\rho^{\calM_s}_B))\times(1-\gamma)\right]\ge 1-2\exp\left(-\frac{(_{N+k}C_s-2)t^2}{48(N+k)^2}\right).\label{C1-C}
}
Here we used $l^2\le(N+k)^2$.
Combining \eqref{C1-A}--\eqref{C1-C}, we obtain \eqref{C1-Z}.
\end{proofof}

\begin{derivationof}{Propositions 1 and 2}
As shown below, Theorem \ref{ev-prob-t} and Corollary \ref{corollary:master_ex} imply Propositions 1 and 2, i.e., \eqref{item-1} and \eqref{item-2}. In other words, under natural settings, for the Haar random unitary with the conservation law of $X$, the assumption \eqref{ex-sc-eq} used in Theorem \ref{ex-sc-t1} holds with very high probability.
First, we derive \eqref{item-1}.
We set the initial state $\rho_B$ on $B$ is the maximally mixed state and set the parameters $N$, $k$ and $\epsilon$ satisfying $N\ge10^3$, $k\le10N$ and $1/(N+k)^2\ge\epsilon\ge1/(N+k)^3$.
We also take an arbitrary state $\rho$ on $A$ satisfying $M\ge 1/(N+k)$.
To derive \eqref{item-1} from \eqref{C1-Z}, let us set $s$ and $t$ in \eqref{C1-Z} as $s=23$ and $t=\frac{M\gamma\epsilon}{4}$, and evaluate $\frac{(_{N+k}C_{s}-2)t^2}{48(N+k)^2}$ and $\epsilon_{B,\calM_s}$. We remark that to obtain \eqref{item-1}, we only have to derive the following inequalities:
\eq{
\epsilon_{B,\calM_s}&\le \frac{M\gamma\epsilon}{8(N+k)},\label{DP1-1}\\
\frac{(_{N+k}C_{s=23}-2)t^2}{48(N+k)^2}-\log 2&\ge(N+k)^2.\label{DP1-2}
}
Let us derive \eqref{DP1-1}. Due to the definition of $\epsilon_{B,\calM_s}$, $\epsilon_{B,\calM_s}=2\sum^{22}_{m=0}\frac{_{N}C_m}{2^{N}}$.
Because of $M\ge\frac{1}{N+k}$, $\gamma\ge\frac{1}{N+k}$, $\epsilon\ge\frac{1}{(N+k)^3}$ and $k\le10N$, we obtain $\frac{M\gamma\epsilon}{8(N+k)}\ge\frac{1}{8\times11^6\times N^6}$.
Therefore, to obtain \eqref{DP1-1}, it is sufficient to show 
\eq{
\sum^{22}_{m=0}\frac{_{N}C_m}{2^{N-1}}\le\frac{1}{8\times11^6\times N^6}.\label{DP1-3}
}
for $N\ge10^3$. Because $8\times11^6\times N^6\times\sum^{22}_{m=0}\frac{_{N}C_m}{2^{N-1}}$ is monotonically decreasing with $N$ for $N\ge10^3$, and because $(8\times11^6\times N^6\times\sum^{22}_{m=0}\frac{_{N}C_m}{2^{N-1}})|_{N=10^3}$ is approximately equal to $5\times10^{-230}$, we obtain \eqref{DP1-3}, and thus the inequality \eqref{DP1-1} holds.

Next, we derive \eqref{DP1-2}.
Because of $M\ge\frac{1}{N+k}$, $\gamma\ge\frac{1}{N+k}$, $\epsilon\ge\frac{1}{(N+k)^3}$, the inequality $\frac{(_{N+k}C_s-2)t^2}{48(N+k)^2}\ge\frac{{}_{N+k}C_s-2}{48\times16\times(N+k)^{12}}$ holds.
Therefore, to show \eqref{DP1-2}, it is sufficient to show
\eq{
1\le\frac{{}_{N+k}C_{s=23}-2}{48\times16\times(N+k)^{14}}-\frac{\log2}{(N+k)^2}.\label{DP1-4}
}
Because the right-hand side of the above is monotonically increasing with $N+k$ for $N+k\ge1001$, and because of $(\frac{{}_{N+k}C_s-2}{48\times16\times(N+k)^{14}}-\frac{\log2}{(N+k)^2})|_{N+k=1001}\approx39$, we obtain \eqref{DP1-4} and thus \eqref{DP1-2} holds.
Therefore, we have shown \eqref{item-1}.

Next, we show \eqref{item-2}. 
Let us set the parameters $N$, $k$ and $\epsilon$ satisfying $N\ge10^3$, $47< k\le10N$ and $1/(N+k)^3\le\epsilon\le1/(N+k)^2$.
We also take an arbitrary state $\rho_B$ on $B$ and an arbitrary state $\rho_A$ such that $M\ge\frac{1}{N+k}$ and the support of $\rho_A$ is included in $\calH^{23\le x\le k-23}_A$.
To show that \eqref{item-2} holds under this condition, we remark that when the support of $\rho_A$ is included in $\calH^{23\le x\le k-23}_A$, the support of $\rho_A\otimes\rho_B$ is included in $\calH^{\calM_{s=23}}$.
Therefore, to obtain \eqref{item-2} from Theorem \ref{ev-prob-t}, it is sufficient to set $t=M\gamma\epsilon/4$ and $s=23$ and show the inequality 
\eq{
\frac{(_{N+k}C_{s=23}-2)t^2}{48l^2}-\log 2&\ge(N+k)^2.
}
Due to $l\le N+k$, this inequality is easily obtained from \eqref{DP1-2}.
Therefore, we obtain \eqref{item-2}.
\end{derivationof}

Now, let us show Theorem \ref{ev-prob-t}.
To show it, we introduce two definitions and a theorem.
\begin{definition}
Let $f$ is a real-valued function on a metric space $(X,d)$.
When $f$ satisfies the following relation for a real positive constant $L$, then $f$ is called $L$-Lipschitz:
\begin{align}
L=\sup_{x\ne y \in X}\frac{|f(x)-f(y)|}{d(x,y)}.
\end{align}
\end{definition}

\begin{definition}
Let $\calU^{M}_{\times}$ be a product of unitary groups $\times^{M}_{i=1}\calU(d_i)$, where each $\calU(d_i)$ is the unitary group of all unitary operations on a $d_i$-dimensional Hilbert space.
For $U=\oplus^{M}_{i=1}U_{i}\in\calU^{M}_{\times}$ and $V=\oplus^{M}_{i=1}V_{i}\in\calU^{M}_{\times}$, the $L_2$-sum $D(U,V)$ of the Hilbert-Schmidt norms on $\calU^{M}_{\times}$ is defined as 
\begin{align}
D(U,V):=\sqrt{\sum^{M}_{i=1}\|U_i-V_i\|^2_2},
\end{align}
where $\|...\|_2:=\sqrt{\Tr[(...)(...)^\dagger]}$. 
\end{definition}

\begin{theorem}[Corollary 3.15 in Ref. \cite{Meckes}]\label{Eli}
Let $\calU^{M}_{\times}$ be a product of unitary groups $\times^{M}_{i=1}\calU(d_i)$, where each $\calU(d_i)$ is a unitary group of all unitary operations on a $d_i$-dimensional Hilbert space.
Let $\calU^{M}_{\times}$ be equipped with the $L_2$-sum of Hilbert-Schmidt norms, and $H^{M}_{\times}:=\times^{M}_{i=1}H_i$ where each $H_i$ is the Haar measure on $\calU(d_i)$.
Suppose that a real-valued function $f$ on $\calU^{M}_{\times}$ is $L$-Lipschitz.
Then, for arbitrary $t>0$,
\begin{align}
\mathrm{Prob}[f(U)\ge\overline{f(U)}+t]\le \exp\left(-\frac{(d_{\min}-2)t^2}{12L^2}\right),
\end{align} 
where $d_{\min}:=\min\{d_{1},...,d_{M}\}$.
\end{theorem}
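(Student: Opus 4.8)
The plan is to derive Theorem~\ref{Eli} from the standard geometry-of-measure machinery: a single-factor Gaussian concentration estimate for each $\calU(d_i)$, obtained from a Ricci curvature lower bound via the Bakry--\'Emery criterion and the Herbst argument, combined with a tensorization principle that promotes the single-factor bounds to the product $\calU^{M}_{\times}$ while extracting $d_{\min}$. Throughout, ``Lipschitz'' and all metrics are taken with respect to the Hilbert--Schmidt inner product, and by rescaling we may assume $L=1$.

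The first and most delicate step is the single factor $\calU(d)$. The bi-invariant Hilbert--Schmidt metric on $\calU(d)$ has a flat one-dimensional central (overall-phase) direction, so its Ricci tensor has a vanishing eigenvalue and one cannot apply Bakry--\'Emery to the full group directly. I would therefore use the splitting $\calU(d)\cong(\mathrm{SU}(d)\times\mathrm{U}(1))/\mathbb{Z}_d$: on the semisimple factor $\mathrm{SU}(d)$ the Ricci curvature is constant and bounded below by a positive multiple of $d$, which after normalization yields the effective constant $d-2$, while the compact $\mathrm{U}(1)$ phase has bounded diameter and contributes negligibly to the deviation of any Lipschitz function. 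Feeding the curvature bound on $\mathrm{SU}(d)$ into Bakry--\'Emery produces a logarithmic Sobolev inequality for the Haar measure with constant proportional to $(d-2)^{-1}$, and the Herbst argument converts this into the tail bound
\begin{align}
\mathrm{Prob}\left[f(U)\ge\overline{f(U)}+t\right]\le\exp\left(-\frac{(d-2)t^2}{12}\right),
\end{align}
where tracking the numerical factors through Bakry--\'Emery and Herbst fixes the constant $12$.

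The second step is tensorization. The $L_2$-sum metric $D$ on $\calU^{M}_{\times}$ is exactly the metric for which the squared norm of the gradient of a function splits additively across the factors, so the Dirichlet form of the product Haar measure $H^{M}_{\times}=\times_i H_i$ decomposes as the sum of the factorwise Dirichlet forms. The classical tensorization property of logarithmic Sobolev inequalities then shows that $H^{M}_{\times}$ satisfies a logarithmic Sobolev inequality whose constant is the largest (worst) of the factor constants, that is, the one controlled by $d_{\min}=\min_i d_i$. A single application of the Herbst argument on the product then reproduces the claimed bound with $d_{\min}-2$ in the exponent for every $D$-Lipschitz $f$.

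The main obstacle is the single-factor curvature analysis: one must both establish the sharp positive Ricci lower bound on $\mathrm{SU}(d)$ in the Hilbert--Schmidt normalization and control the flat determinant direction well enough that it only shifts $d$ to $d-2$ rather than destroying concentration, since it is this geometric input---not the essentially formal tensorization---that dictates the precise exponent. An alternative route that sidesteps the explicit logarithmic Sobolev constant is to invoke Gromov's L\'evy-type isoperimetric comparison directly from the Ricci bound; either way the quantitative heart of the proof is the curvature estimate for a single unitary group.
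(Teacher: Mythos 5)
You should first be aware that the paper does not prove this statement at all: it is imported verbatim (as the bracketed attribution indicates) from Corollary 3.15 of Meckes's lecture notes, so there is no in-paper argument to compare against; the relevant standard is whether your sketch would reconstruct the cited result. Your outline does follow the strategy of that source — a Ricci lower bound on each factor fed into the Bakry--\'Emery criterion to obtain a logarithmic Sobolev inequality, the Herbst argument to convert it into a subgaussian tail, and tensorization of the log-Sobolev inequality over the $L_2$-sum metric to extract $d_{\min}$ — and the tensorization step as you describe it is standard and correct.

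The genuine gap is in your treatment of a single factor $\calU(d)$, precisely at the step you yourself identify as the crux. The claim that the central $\mathrm{U}(1)$ ``has bounded diameter and contributes negligibly'' is false as stated: in the Hilbert--Schmidt metric the central circle $\{e^{i\theta}I\}$ has circumference $2\pi\sqrt{d}$, so its diameter grows with $d$, and a priori a $1$-Lipschitz function could vary by $\Theta(\sqrt{d})$ along this flat direction, which would destroy dimension-free concentration. What actually saves the argument is a different mechanism: to pass from the fiber $\det^{-1}(e^{i\alpha})$ to the fiber $\det^{-1}(e^{i\beta})$ one multiplies by the global phase $e^{i(\beta-\alpha)/d}$, at Hilbert--Schmidt cost $|e^{i(\beta-\alpha)/d}-1|\sqrt{d}\approx|\beta-\alpha|/\sqrt{d}$; hence the determinant fibers are within $O(1/\sqrt{d})$ of one another, a Lipschitz function is nearly constant across fibers, and one can condition on the determinant and apply the $\mathrm{SU}(d)$ concentration on each fiber. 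Relatedly, your assertion that controlling the flat direction is what ``shifts $d$ to $d-2$'' is not right: the Ricci lower bound for $\mathrm{SU}(d)$ in this normalization is of order $d/2$ and yields a strictly better exponent than $(d-2)/12$; the $d-2$ in Meckes's Corollary 3.15 originates from the orthogonal group $\mathrm{SO}(n)$, whose Ricci bound is $(n-2)/4$, the corollary being stated uniformly for products of $\mathrm{SO}$, $\mathrm{SU}$ and $\mathrm{Sp}$ factors. Note also that $\mathrm{U}(n)$ is in fact not among the groups covered by that corollary, so the reduction from $\calU(d)$ to $\mathrm{SU}(d)$ sketched above is a step the paper itself silently needs; getting it right, rather than waving at a ``bounded'' phase direction, is where the real work lies.
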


From Theorem \ref{Eli}, we can easily derive Theorem \ref{ev-prob-t}:
\begin{proofof}{Theorem \ref{ev-prob-t}}
Since the support of $\rho\otimes\rho_B$ is included in the subspace $\calH^{\calM_s}:=\otimes_{m\in\calM_s}\calH^{(m)}$, the following relation holds for arbitrary $U\in\calU^{\calM_{all}}_{\times}$:
\begin{align}
x_{A'}(\rho,\rho_B,U)=x_{A'}(\rho,\rho_B,\tilde{U}),
\end{align}
where $\tilde{U}$ defined from $U$ by \eqref{tilV}.
Therefore, we only have to show 
\begin{align}
\mathrm{Prob}_{\tilde{U}\sim H^{\calM_{s}}_{\times}}\left[|x_{A'}(\rho,\rho_B,\tilde{U})-\overline{x_{A'}(\rho,\rho_B,\tilde{U})}|_{H^{\calM_s}_{\times}}|>t\right]\le 2\exp\left(-\frac{(_{N+k}C_s-2)t^2}{48l^2}\right).\label{ev-prob'}
\end{align} 
Note that $\min_{m\in\calM_s}\mathrm{dim}\calH^{(m)}=_{N+k}C_s$.
Therefore, due to Theorem \ref{Eli}, to show \eqref{ev-prob'}, it is sufficient to show that $x_{A'}(\rho,\rho_B,\tilde{U})$ is $2l$-Lipschitz.

To show that $x_{A'}(\rho,\rho_B,\tilde{U})$ is $2l$-Lipschitz, let us take two unitary operations $\hat{U}\in\calU^{\calM_s}_{\times}$ and $\hat{V}\in\calU^{\calM_s}_{\times}$.
We evaluate $|x_{A'}(\rho,\rho_B,\hat{U})-x_{A'}(\rho,\rho_B,\hat{V})|$ as follows:
\begin{align}
|x_{A'}(\rho,\rho_B,\hat{U})-x_{A'}(\rho,\rho_B,\hat{V})|
&=|\Tr[X_{A'}(\hat{U}(\rho\otimes\rho_B)\hat{U}^{\dagger}-\hat{V}(\rho\otimes\rho_B)\hat{V}^{\dagger})]|\nonumber\\
&\le\calD_{X_{A'}}\|\hat{U}(\rho\otimes\rho_B)\hat{U}^{\dagger}-\hat{V}(\rho\otimes\rho_B)\hat{V}^{\dagger}\|_{1}\nonumber\\
&\le l\|\hat{U}(\rho\otimes\rho_B)\hat{U}^{\dagger}-\hat{V}(\rho\otimes\rho_B)\hat{V}^{\dagger}\|_{1}.
\end{align}
Therefore, in order to show that $x_{A'}(\rho,\rho_B,\tilde{U})$ is $2l$-Lipschitz, we only have to show 
\begin{align}
\|\hat{U}(\rho\otimes\rho_B)\hat{U}^{\dagger}-\hat{V}(\rho\otimes\rho_B)\hat{V}^{\dagger}\|_{1}\le2D(\hat{U},\hat{V}).\label{legofinal}
\end{align}
To show \eqref{legofinal}, we take a purification of $\rho\otimes\rho_B$, and refer to it as $\ket{\psi_{ABQ}}$.
Due to the monotonicity of the 1 norm and $\|\phi-\psi\|_1=2D_F(\phi,\psi)$ for any pure $\phi$ and $\psi$, 
\begin{align}
\|\hat{U}(\rho\otimes\rho_B)\hat{U}^{\dagger}-\hat{V}(\rho\otimes\rho_B)\hat{V}^{\dagger}\|_{1}&\le\|\hat{U}\psi_{ABQ}\hat{U}^{\dagger}-\hat{V}\psi_{ABQ}\hat{V}^{\dagger}\|_{1}\nonumber\\
&=2\sqrt{1-F^2((\hat{U}\otimes1_{Q})\psi_{ABQ}(\hat{U}\otimes1_{Q})^{\dagger},(\hat{V}\otimes1_{Q})\psi_{ABQ}(\hat{V}\otimes1_{Q})^{\dagger})}\nonumber\\
&\le2\sqrt{2(1-F((\hat{U}\otimes1_{Q})\psi_{ABQ}(\hat{U}\otimes1_{Q})^{\dagger},(\hat{V}\otimes1_{Q})\psi_{ABQ}(\hat{V}\otimes1_{Q})^{\dagger}))}\nonumber\\
&=2\sqrt{2-2|\bra{\psi_{ABQ}}(\hat{U}\otimes1_{Q})^{\dagger}(\hat{V}\otimes1_{Q})\ket{\psi_{ABQ}}|}\nonumber\\
&\le2\sqrt{2-\bra{\psi_{ABQ}}(\hat{U}\otimes1_{Q})^{\dagger}(\hat{V}\otimes1_{Q})\ket{\psi_{ABQ}}-\bra{\psi_{ABQ}}(\hat{V}\otimes1_{Q})^{\dagger}(\hat{U}\otimes1_{Q})\ket{\psi_{ABQ}}}\nonumber\\
&=2\sqrt{\bra{\psi_{ABQ}}((\hat{U}\otimes1_{Q})-(\hat{V}\otimes1_{Q}))^{\dagger}((\hat{U}\otimes1_{Q})-(\hat{V}\otimes1_{Q}))\ket{\psi_{ABQ}}}\nonumber\\
&\le\|((\hat{U}\otimes1_{Q})-(\hat{V}\otimes1_{Q}))\ket{\psi_{ABQ}}\|_{2}\nonumber\\
&\le2\|\hat{U}-\hat{V}\|_2\|\rho\otimes\rho_B\|^{1/2}_{\infty}.\label{lego1}
\end{align}
In the final line, we use
\begin{align}
\|((\hat{U}\otimes1_{Q})-(\hat{V}\otimes1_{Q}))\ket{\psi_{ABQ}}\|^2_{2}
&=\Tr[((\hat{U}\otimes1_{Q})-(\hat{V}\otimes1_{Q}))\ket{\psi_{ABQ}}\bra{\psi_{ABQ}}((\hat{U}\otimes1_{Q})-(\hat{V}\otimes1_{Q}))^\dagger]\nonumber\\
&=\Tr[(\hat{U}-\hat{V})(\rho\otimes\rho_B)(\hat{U}-\hat{V})^\dagger]\nonumber\\
&\le\|\rho\otimes\rho_B\|_{\infty}\|(\hat{U}-\hat{V})(\hat{U}-\hat{V})^\dagger\|_1\nonumber\\
&\le\|\rho\otimes\rho_B\|_{\infty}\|\hat{U}-\hat{V}\|^2_2,\label{lego2}
\end{align}
where we use the H\"{o}lder inequality in the final line.
Due to $\|M_1\oplus M_2-M'_1\oplus M'_2\|^2_2=\|M_1-M'_1\|^2_2+\|M_2-M'_2\|^2_2$ and the definition of $L_2$-sum, we can show $\|\hat{U}-\hat{V}\|_2=D(\hat{U},\hat{V})$ as follows:
\begin{align}
\|\hat{U}-\hat{V}\|^2_2&=\sum_{m\in\calM_s}\|\hat{U}^{(m)}-\hat{V}^{(m)}\|^2_2\nonumber\\
&=D(\hat{U},\hat{V})^2,\label{lego3}
\end{align}
where $\hat{U}^{(m)}$ and $\hat{V}^{(m)}$ are defined as $\hat{U}=(\oplus_{m\in\calM_s}\hat{U}^{(m)})\oplus(\oplus_{m\not\in\calM_s}I^{(m)})$ and $\hat{V}=(\oplus_{m\in\calM_s}\hat{V}^{(m)})\oplus(\oplus_{m\not\in\calM_s}I^{(m)})$.
Combining \eqref{lego1}, \eqref{lego3} and $\|\rho\otimes\rho_B\|_{\infty}\le1$, we obtain \eqref{legofinal}.

\end{proofof}

Next, we prove that the assumption \eqref{RstarS} (=\eqref{Rstar} in the main text) is valid for a typical Haar random unitary with the energy conservation.
To do so, we use the following theorems, which correspond to the second-order variation of Theorem \ref{ex-sc-t2} and Theorem \ref{ev-prob-t}.
\begin{theorem}\label{ex-sc-tsec}
For arbitrary eigenstates $\ket{i,a}$ and $\ket{j,b}$ of $X_A$ and $X_B$ whose eigenvalues are $i$ and $j$, the following inequality hold:
\begin{align}
V_{\overline{\rho'_{\alpha'|i,a,j,b,U}}}(X_{\alpha'})\le\frac{\min\{(1-\gamma)(N+k),\gamma(N+k)\}}{4} \enskip (\alpha'=A',\enskip B'),\label{alleq_sec}
\end{align}
where $\overline{f(U)}$ is the average of the function $f$ with the product Haar measure $H^{\calM_{all}}_{\times}$.
Additionally, when $\ket{i,a}\otimes\ket{j,b}$ is included in the subspace $\calH^{\calM_s}:=\otimes_{m\in\calM_s}\calH^{(m)}$, the following equality holds:
\begin{align}
V_{\overline{\rho'_{\alpha'|i,a,j,b,\tilde{U}}}|_{H^{\calM_s}_{\times}}}(X_{\alpha'})\le\frac{\min\{(1-\gamma)(N+k),\gamma(N+k)\}}{4} \enskip (\alpha'=A',\enskip B'),\label{seq_sec}
\end{align}
where $\tilde{U}$ is a unitary which is described as $\tilde{U}=(\oplus_{m\in\calM_{s}}U^{(m)})\oplus(\oplus_{n\not\in\calM_s}I^{(m)})$ where $U^{(m)}\in\calU^{(m)}$, and $\overline{f(\tilde{U})}|_{H^{\calM_s}_{\times}}$ is the average of the function $f$ with the product Haar measure $H^{\calM_s}_{\times}$.
\end{theorem}

\begin{theorem}\label{ev-prob-tsec}
For an arbitrary positive number $t$, and arbitrary states $\rho$ and $\rho_B$ on $A$ and $B$ which satisfy that the support of $\rho\otimes\rho_B$ is included in the subspace $\calH^{\calM_s}:=\otimes_{m \in \calM_{s}}\calH^{(m)}$ for $\calM_{s}:=\{s,s+1,...,N+k-s\}$, the following relation holds:
\begin{align}
\mathrm{Prob}_{U\sim H^{\calM_{all}}_{\times}}\left[|x^{(2)}_{A'}(\rho,\rho_B,U)-\overline{x^{(2)}_{A'}(\rho,\rho_B,U)}|>t\right]&\le 2\exp\left(-\frac{(_{N+k}C_s-2)t^2}{48l^4}\right),\label{ev-prob2}\\
\mathrm{Prob}_{U\sim H^{\calM_{all}}_{\times}}\left[|x^{(2)}_{B'}(\rho,\rho_B,U)-\overline{x^{(2)}_{B'}(\rho,\rho_B,U)}|>t\right]&\le 2\exp\left(-\frac{(_{N+k}C_s-2)t^2}{48(\gamma (N+k))^4}\right)\label{ev-prob2B}
\end{align} 
Here $x^{(2)}_{\alpha'}(\rho,\rho_B,U):=\Tr_{\neg \alpha'}[X^2_{\alpha'}U\rho\otimes \rho_BU^\dagger]$ ($(\alpha',\neg\alpha')=(A',B') \mbox{ or } (B',A')$), and $\mathrm{Prob}_{U\sim H^{\calM_{all}}_{\times}}[...]$ is the probability that the event (...) happens when $U$ is chosen from $\calU^{\calM_{all}}_{\times}$ with the measure $H^{\calM_{all}}_{\times}$.
\end{theorem}

Theorems \ref{ex-sc-t2}, \ref{ex-sc-tsec}, and \ref{ev-prob-tsec} guarantee that when $U$ is a typical Haar random unitary with the conservation law of $X$, the assumption \eqref{RstarS} actually holds with very high probability, in the same way as Theorems \ref{ev-prob-t} guarantees that when $U$ is a typical Haar random unitary with the conservation law of $X$, the assumption \eqref{ex-sc-eq} holds with very high probability.
To be concrete, we can derive the following corollary:
\begin{corollary}\label{corollary:master_var}
Let $\ket{i,a}$ and $\ket{j,b}$ be eigenstates of $X_A$ and $X_B$ whose eigenvalues are $i$ and $j$.
When $N\ge 10^3$ and $18\le i+j\le N+k-18$ hold, the following relation holds for an arbitrary real number $\epsilon$ satisfying $1/(N+k)^3\le \epsilon\le1/(N+k)^2$:
\eq{
\mathrm{Prob}_{U\sim H^{\calM_{all}}_{\times}}\left[V_{\rho'_{\alpha'|i,a,j,b,U}}(X_{\alpha'})\le(1+\epsilon)\frac{\min\{(1-\gamma)(N+k),\gamma(N+k)\}}{4} \right]
\ge 1-4\exp\left(-10(N+k)\right),\enskip (\alpha'=A',\enskip B').
\label{CV-Z}
}
\end{corollary}
\begin{derivationof}{Proposition 3}
Now, let us derive Proposition 3 from Theorem \ref{ev-prob-t} and Corollary \ref{corollary:master_var}.
First, Corollary \ref{corollary:master_var} directly implies \eqref{item-3A}.
Let us derive \eqref{item-3B} from Theorem \ref{ev-prob-t}. 
We take the parameters $N$, $k$ and $\epsilon$ satisfying $N\ge10^3$ and $1/(N+k)^3\le\epsilon\le1/(N+k)^2$.
We also set a number $s$ as $s=18$.
To derive \eqref{item-3B}, we only have to set $s$ and $t$ in \eqref{ev-prob} as $s=18$ and $t=\epsilon$ and show
\eq{
(N+k)^2\le\frac{(_{N+k}C_s-2)\epsilon^2}{48l^2}-\log2.\label{DP3-1}
}
Due to $\epsilon\ge1/(N+k)^3$ and $l\le N+k$, to show \eqref{DP3-1}, we only have to show
\eq{
1\le\frac{(_{N+k}C_{18}-2)}{48(N+k)^{10}}-\frac{\log2}{(N+k)^2}.\label{DP3-2}
}
For $N+k\ge1001$, the right-hand side of \eqref{DP3-2} is monotonically increasing with $N+k$. Due to $(\frac{(_{N+k}C_{18}-2)}{48(N+k)^{10}}-\frac{\log2}{(N+k)^2})|_{N+k=1001}\approx2.8\times10^6$, \eqref{DP3-2} holds, and thus \eqref{item-3B} also holds.
\end{derivationof}

\begin{proofof}{Corollary \ref{corollary:master_var}}
Due to $18\le i+j\le N+k-18$, the state $\ket{i,a}\otimes\ket{j,b}$ is in $\calH^{\calM_{s=18}}$.
Therefore, from Theorem \ref{ev-prob-t} and Theorem \ref{ev-prob-tsec}, we obtain the following relations for arbitrary $t_1$ and $t_2$:
\begin{align}
\mathrm{Prob}_{U\sim H^{\calM_{all}}_{\times}}\left[|x^{(2)}_{\alpha'}(\ket{i,a},\ket{j,b},U)-\overline{x^{(2)}_{\alpha'}(\ket{i,a},\ket{j,b},U)}|>t_2\right]&\le 2\exp\left(-\frac{(_{N+k}C_{18}-2)t^2_2}{48(N+k)^4}\right),\\
\mathrm{Prob}_{U\sim H^{\calM_{all}}_{\times}}\left[|x_{\alpha'}(\ket{i,a},\ket{j,b},U)-\overline{x_{\alpha'}(\ket{i,a},\ket{j,b},U)}|>t_1\right]&\le 2\exp\left(-\frac{(_{N+k}C_{18}-2)t^2_1}{48(N+k)^2}\right)
\end{align} 
Therefore, 
\eq{
&\mathrm{Prob}_{U\sim H^{\calM_{all}}_{\times}}\left[
\left(x^{(2)}_{\alpha'}(\ket{i,a},\ket{j,b},U)\approx_{t_2}\overline{x^{(2)}_{\alpha'}(\ket{i,a},\ket{j,b},U)}\right)\land
\left(x_{\alpha'}(\ket{i,a},\ket{j,b},U)\approx_{t_1}\overline{x_{\alpha'}(\ket{i,a},\ket{j,b},U)}\right)
\right]\non
&\ge 1-2\left(\exp\left(-\frac{(_{N+k}C_{18}-2)t^2_2}{48(N+k)^4}\right)+\exp\left(-\frac{(_{N+k}C_{18}-2)t^2_1}{48(N+k)^2}\right)\right),
}
Due to $V_{\rho'_{\alpha'|i,a,j,b,U}}(X_{\alpha'})=x^{(2)}_{\alpha'}(\ket{i,a},\ket{j,b},U)-x_{\alpha'}(\ket{i,a},\ket{j,b},U)^2$ and $x_{\alpha'}(\ket{i,a},\ket{j,b},U)\le N+k$, 
we obtain
\eq{
&\left(x^{(2)}_{\alpha'}(\ket{i,a},\ket{j,b},U)\approx_{t_2}\overline{x^{(2)}_{\alpha'}(\ket{i,a},\ket{j,b},U)}\right)\land
\left(x_{\alpha'}(\ket{i,a},\ket{j,b},U)\approx_{t_1}\overline{x_{\alpha'}(\ket{i,a},\ket{j,b},U)}\right)\non
&\Rightarrow \left(V_{\rho'_{\alpha'|i,a,j,b,U}}(X_{\alpha'})\approx_{t_2+2t_1(N+k)+t^2_1}V_{\overline{\rho'_{\alpha'|i,a,j,b,U}}}(X_{\alpha'})\right)
}
Therefore, we obtain
\begin{align}
&\mathrm{Prob}_{U\sim H^{\calM_{all}}_{\times}}\left[V_{\rho'_{\alpha'|i,a,j,b,U}}(X_{\alpha'})\approx_{t_2+2t_1(N+k)+t^2_1}V_{\overline{\rho'_{\alpha'|i,a,j,b,U}}}(X_{\alpha'})\right]\non
&\ge 1-2\left(\exp\left(-\frac{(_{N+k}C_{18}-2)t^2_2}{48(N+k)^4}\right)+\exp\left(-\frac{(_{N+k}C_{18}-2)t^2_1}{48(N+k)^2}\right)\right),
\end{align} 
Here, let us define $t_1:=\frac{\epsilon}{16(N+k)}$ and $t_2:=\frac{\epsilon}{16}$.
Then, with using $1/(N+k)^2\ge\epsilon\ge 1/(N+k)^3$, we obtain
\begin{align}
\mathrm{Prob}_{U\sim H^{\calM_{all}}_{\times}}\left[V_{\rho'_{\alpha'|i,a,j,b,U}}(X_{\alpha'})\approx_{\epsilon/4}V_{\overline{\rho'_{\alpha'|i,a,j,b,U}}}(X_{\alpha'})\right]
\ge 1-4\exp\left(-\frac{_{N+k}C_{18}-2}{48\times16^2(N+k)^{10}}\right).
\end{align} 
Due to $N\ge10^3$, we obtain the following:
\eq{
\frac{_{N+k}C_{18}-2}{48\times16^2(N+k)^{11}}&\ge \frac{_{10^3}C_{18}-2}{748\times16^2(10)^{33}}\non
&\ge10.
}
Therefore, we obtain
\eq{
\frac{_{N+k}C_{18}-2}{48\times16^2(N+k)^{10}}\ge 10(N+k).
}
Therefore, we obtain
\begin{align}
\mathrm{Prob}_{U\sim H^{\calM_{all}}_{\times}}\left[V_{\rho'_{\alpha'|i,a,j,b,U}}(X_{\alpha'})\approx_{\epsilon/4}V_{\overline{\rho'_{\alpha'|i,a,j,b,U}}}(X_{\alpha'})\right]
\ge 1-4\exp\left(-10(N+k)\right).
\end{align} 
Combining the above inequality, \eqref{alleq_sec} and $\frac{\min\{(1-\gamma)(N+k),\gamma(N+k)\}}{4}\ge1/4$, we obtain \eqref{CV-Z}.

\end{proofof}

\begin{proofof}{Theorem \ref{ex-sc-tsec}}
We firstly remark that 
\eq{
\overline{U\ket{i,a}\bra{i,a}\otimes\ket{j,b}\bra{j,b}U^\dagger}=\rho^{\max}_{i+j}
}
where $\rho^{\max}_{i+j}$ is the maximally mixed state in eigenspace of $X_{A'}+X_{B'}$ whose eigenvalue is $i+j$.
Therefore, 
\eq{
\overline{\rho^{f}_{h,h'|i,a,j,b,U}}&:=\Tr_{\neg(h,h')}[\overline{U\ket{i,a}\bra{i,a}\otimes\ket{j,b}\bra{j,b}U^\dagger}]\non
&=\frac{(i+j)(i+j-1)}{(N+k)(N+k-1)}\ket{1}\bra{1}_h\otimes\ket{1}\bra{1}_{h'}
+\frac{(i+j)(N+k-(i+j))}{(N+k)(N+k-1)}\left(\ket{1}\bra{1}_h\otimes\ket{0}\bra{0}_{h'}+\ket{0}\bra{0}_h\otimes\ket{1}\bra{1}_{h'}\right)\non
&+\frac{(N+k-(i+j))(N+k-(i+j+1))}{(N+k)(N+k-1)}\ket{0}\bra{0}_h\otimes\ket{0}\bra{0}_{h'}
}
where $\Tr_{\neg(h,h')}$ is the partial trace of the qubits other than the $h$-th and $h'$-th qubits, and $\ket{1}_h$ and $\ket{0}_h$ are the excited state and the ground state of $X_h$.

Here we define $x^{(2)}_{\alpha'}(\rho,\rho_B,U):=\Tr_{\neg \alpha'}[X^2_{\alpha'}U\rho\otimes \rho_BU^\dagger]$ ($(\alpha',\neg\alpha')=(A',B') \mbox{ or } (B',A')$), and obtain
\begin{align}
\overline{x^{(2)}_{A'}(\ket{i,a}\bra{i,a},\ket{j,b}\bra{j,b},U)}&=\sum_{h,h'\in A', h\ne h'}\ex{X_h\otimes X_{h'}}_{\overline{\rho^{f}_{h,h'|i,a,j,b,U}}}+\sum_{h\in A'}\ex{X^2_h}_{\overline{\rho^{f}_{h,h'|i,a,j,b,U}}}\non
&=\frac{(i+j)(i+j-1)}{(N+k)(N+k-1)}l(l-1)+\frac{i+j}{N+k}l\non
&=p\left(p+\frac{p-1}{N+k-1}\right)l(l-1)+pl
\end{align}
Here we use $p:=(i+j)/(N+k)$.
Due to Theorem \ref{ex-sc-t2}, we obtain $\overline{x_{A'}(\ket{i,a}\bra{i,a},\ket{j,b}\bra{j,b},U)}=(1-\gamma)(i+j)=lp$.
Therefore, we obtain
\begin{align}
V_{\overline{\rho'_{A'|i,a,j,b,U}}}(X_{A'})&=p\left(p+\frac{p-1}{N+k-1}\right)l(l-1)+pl-p^2l^2\non
&=p(1-p)l-\frac{p(1-p)}{N+k-1}l(l-1)\non
&=p(1-p)l\left(1-\frac{l-1}{N+k-1}\right).
\end{align}
Since $0\le p\le1$ holds, the inequality $p(1-p)\le1/4$ holds.
Therefore, to obtain \eqref{alleq_sec}, we only have to show
\eq{
l\left(1-\frac{l-1}{N+k-1}\right)\le\min\{l,N+k-l\}.
}
Due to $\left(1-\frac{l-1}{N+k-1}\right)\le1$, we only have to show that the following inequality holds for $l\in\{1,2,..,N+k\}$:
\eq{
l\left(1-\frac{l-1}{N+k-1}\right)\le N+k-l.\label{deka1}
}
To show \eqref{deka1}, we show that the following inequality holds for $l\in\{1,2,..,N+k\}$:
\eq{
l\left(2-\frac{l-1}{N+k-1}\right)\le N+k.\label{deka2}
}
To show this, let us define $h(l):=l\left(2-\frac{l-1}{N+k-1}\right)$.
Then,
\eq{
\frac{d h(l)}{d l}&=\left(2-\frac{l-1}{N+k-1}\right)-\frac{l}{N+k-1}\non
&=2-\frac{2l-1}{N+k-1}.
}
Therefore, for $0\le l\le N+k-1$, the function $h(l)$ is monotonically increasing with $l$. Furthermore, $h(N+k-1)=h(N+k)=N+k$ holds. Therefore, \eqref{deka2} holds for $l\in\{1,2,..,N+k\}$.
Hence, we obtain \eqref{alleq_sec}.

We can also derive \eqref{seq_sec} in the same way. We only have to substitute $\tilde{U}$ and $\calU^{\calM_{s}}_{\times}$ for $U$ and $V\in\calU^{\calM_{all}}_{\times}$.
\end{proofof}

The proof of Theorem \ref{ev-prob-tsec} is almost the same as that of Theorem \ref{ev-prob-t}:
\begin{proofof}{Theorem \ref{ev-prob-tsec}}
Since the support of $\rho\otimes\rho_B$ is included in the subspace $\calH^{\calM_s}:=\otimes_{m\in\calM_s}\calH^{(m)}$, the following relation holds for arbitrary $U\in\calU^{\calM_{all}}_{\times}$:
\begin{align}
x_{A'}(\rho,\rho_B,U)=x_{A'}(\rho,\rho_B,\tilde{U}),
\end{align}
where $\tilde{U}$ defined from $U$ by $\tilde{V}=\left(\oplus_{m\in\calM_s}V^{(m)}\right)\oplus\left(\oplus_{n\not\in\calM_s}I^{(m)}\right)$.
Therefore, we only have to show 
\begin{align}
\mathrm{Prob}_{\tilde{U}\sim H^{\calM_{s}}_{\times}}\left[|x^{(2)}_{A'}(\rho,\rho_B,\tilde{U})-\overline{x^{(2)}_{A'}(\rho,\rho_B,\tilde{U})}|_{H^{\calM_s}_{\times}}|>t\right]\le 2\exp\left(-\frac{(_{N+k}C_s-2)t^2}{48l^4}\right).\label{ev-prob''}
\end{align} 
Note that $\min_{m\in\calM_s}\mathrm{dim}\calH^{(m)}=_{N+k}C_s$.
Therefore, due to Theorem \ref{Eli}, to show \eqref{ev-prob''}, it is sufficient to show that $x^{(2)}_{A'}(\rho,\rho_B,\tilde{U})$ is $2l^2$-Lipchitz.

To show that $x^{(2)}_{A'}(\rho,\rho_B,\tilde{U})$ is $2l^2$-Lipchitz, let us take two unitary operations $\hat{U}\in\calU^{\calM_s}_{\times}$ and $\hat{V}\in\calU^{\calM_s}_{\times}$.
We evaluate $|x_{A'}(\rho,\rho_B,\hat{U})-x_{A'}(\rho,\rho_B,\hat{V})|$ as follows:
\begin{align}
|x^{(2)}_{A'}(\rho,\rho_B,\hat{U})-x_{A'}(\rho,\rho_B,\hat{V})|
&=|\Tr[X^2_{A'}(\hat{U}(\rho\otimes\rho_B)\hat{U}^{\dagger}-\hat{V}(\rho\otimes\rho_B)\hat{V}^{\dagger})]|\nonumber\\
&\le\calD^2_{X_{A'}}\|\hat{U}(\rho\otimes\rho_B)\hat{U}^{\dagger}-\hat{V}(\rho\otimes\rho_B)\hat{V}^{\dagger}\|_{1}\nonumber\\
&\le l^2\|\hat{U}(\rho\otimes\rho_B)\hat{U}^{\dagger}-\hat{V}(\rho\otimes\rho_B)\hat{V}^{\dagger}\|_{1}.
\end{align}
Therefore, in order to show that $x^{(2)}_{A'}(\rho,\rho_B,\tilde{U})$ is $2l^2$-Lipchitz, we only have to show 
\begin{align}
\|\hat{U}(\rho\otimes\rho_B)\hat{U}^{\dagger}-\hat{V}(\rho\otimes\rho_B)\hat{V}^{\dagger}\|_{1}\le2D(\hat{U},\hat{V}).\label{legofinal2}
\end{align}
And we have already obtained \eqref{legofinal2} as \eqref{legofinal}.
Therefore, $x^{(2)}_{A'}(\rho,\rho_B,\tilde{U})$ is $2l^2$-Lipchitz, and we have obtained \eqref{ev-prob}.

We can show \eqref{ev-prob2B} by substituting $N+k-l$ for $l$ in the above argument.
\end{proofof}

\subsection{Other applications to Hayden-Preskill model with symmetry}
Other than \eqref{foggyineq}, there are several applications to Hayden-Preskill model.
For example, we can use \eqref{SIQ1} for non-maximally entangled states for the initial states $AR_A$ and $BR_B$.  Noting $\Delta_{+}\le(k+l)/2$, we obtain the following bound
\begin{align}
  \frac{1-\epsilon}{1+\epsilon}\times\frac{M \left(1-l/(N+k)\right)}{2 ( 
\sqrt{\calF}+2(k+l) )}
  \le \delta 
    .\label{foggyineq2pre}
\end{align}
To illustrate the meaning of this inequality, we consider the case of $M\propto k$.
Then, we obtain the lower bound \eqref{foggyineq2pre}:
\begin{align}
{\rm const.}\times\frac{1-l/(k+N)}{1+(2l+\sqrt{\calF})/(2k)}\le\delta.\label{foggyineq2}
\end{align}
Note that $\calF=4V_{\rho_{B}}(X_{B})$ where $\rho_{B}:=\Tr_{R_B}[\rho_{BR_B}]$. This inequality shows that when the fluctuation of the conserved quantity of the initial state of the black hole $B$ is not so large, in order to make $\delta$ small, we have to collect information from the Hawking radiation so that $l\gg k$ or $l\approx k+N$.
In other words, whenever the fluctuation of the conserved quantity of the black hole is small, then in order to recover the quantum data thrown into the black hole with good accuracy, we have to wait until the black hole is evaporated enough.
Note also that if $\sqrt{\calF}$ is small, the bound in \eqref{foggyineq2} does not become trivial even if $N$ is much larger than $k$.

\section{Lower bound of recovery error in the information recovery without using $R_B$}\label{S-woRB}
\begin{figure}[tb]
		\centering
		\includegraphics[width=.45\textwidth]{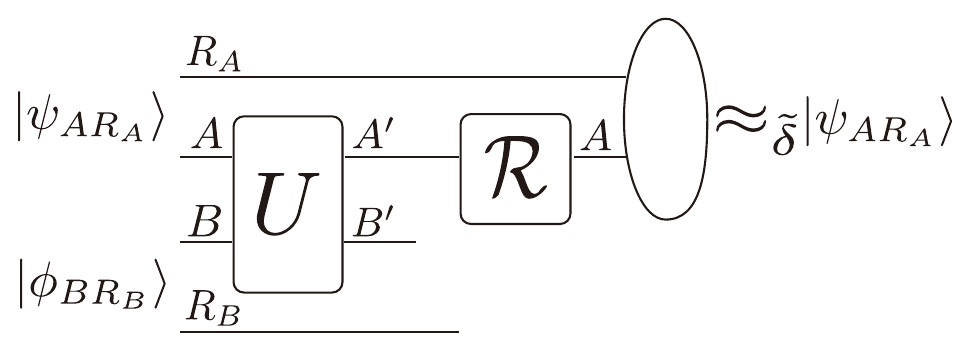}
		\caption{Schematic diagram of the information recovery without using $R_B$.}
		\label{setupSR2-S}
\end{figure}
The relations \eqref{SIQ1} and \eqref{SIQ2} in the main text describe the limitation of information recovery when one uses the quantum information of $R_B$. We can also discuss the case without using the information of $R_B$. The recovery operation $\calR$ in this case maps
the state on the system $A'$ to $A$\g{, as seen in} the schematic in Fig.~\ref{setupSR2-S}.  
We then define the recovery error as
\begin{align}
  \tilde{\delta}
  :=\min_{\mbox{$\calR$} \atop {A'\to A }} \!\!
  D_F\!\left( \rho_{AR_A},\mathrm{id}_{R_A}\otimes\calR\circ\calE(\rho_{AR_A})] \right) \, .\label{dwithoutRB}
\end{align}
Since $\tilde{\delta}\ge\delta$, \g{we} can substitute $\tilde{\delta}$ for $\delta$ in \eqref{SIQ1} and \eqref{SIQ2} to get a limitation of recovery in the present setup. Moreover, 
we can derive a tighter relation than this simple substitution as
\begin{align}
\frac{\SA}{2(\sqrt{\calF_B}+4\Delta_{+})}\le\tilde{\delta} \, , \label{SIQ1'-S}
\end{align}
where $\calF_{B}:=\calF_{\rho_B}(X_B)$. Note that $\calF_{B}\le\calF$ holds in general. The inequality (\ref{SIQ1'-S}) is the third main relation on the information recovery.

\begin{proofof}{\eqref{SIQ1'-S}}
We firstly take a quantum system $\tilde{B}$ whose dimension is the same as $B$, and a purification $\ket{\phi_{B\tilde{B}}}$ of $\rho_B:=\Tr_{R_B}[\phi_{B\tilde{B}}]$.
From $\ket{\phi_{B\tilde{B}}}$ and $U$, we define a set $\tilde{\calI}:=(\ket{\phi_{B\tilde{B}}}\otimes\ket{\eta_{R_B}},U\otimes 1_{\tilde{B}})$.
We take the Schmidt decomposition of $\ket{\phi_{B\tilde{B}}}$ as
\begin{align}
\ket{\phi_{B\tilde{B}}}=\sum_{l}\sqrt{r_l}\ket{l_{B}}\ket{l_{\tilde{B}}},
\end{align}
and define $X_{\tilde{B}}$ on $\tilde{B}$ as 
\begin{align}
X_{\tilde{B}}:=\sum_{ll'}\frac{2\sqrt{r_{l}r_{l'}}}{r_{l}+r_{l'}}\bra{l_{B}}X_{B}\ket{l'_{B}}\ket{l'_{\tilde{B}}}\bra{l_{\tilde{B}}}.
\end{align}
Then, due to \eqref{F=4V2} and \eqref{bestXR}, 
\begin{align}
\calF_{\rho_B}(X_B)=4V_{\ket{\phi_{B\tilde{B}}}}(X_{B}+X_{\tilde{B}}).\label{T-M2}
\end{align}

Note that $\tilde{\calI}$ is a Steinspring representation of $\calE$ and that $U\otimes 1_{\tilde{B}}(X_A+X_B+X_{\tilde{B}})(U\otimes 1_{\tilde{B}})^\dagger=X_{A'}+X_{B'}+X_{\tilde{B}}$.
Therefore, we  obtain the following inequality from \eqref{SIQ1}:
\begin{align}
\frac{\SA(\psi_{AR_A},\calE)}{2(\sqrt{\calF_{\ket{\phi_{B\tilde{B}}}\otimes\ket{\eta_{R_B}}}((X_{B}+X_{\tilde{B}})\otimes1_{R_B})}+4\Delta_{+})}\le
\delta(\psi_{AR_A},\tilde{\calI})
\end{align}
Since $\ket{\phi_{B\tilde{B}}}\otimes\ket{\eta_{R_B}}$ is a tensor product between $B\tilde{B}$ and $R_B$, the state of $B\tilde{B}R_B$ after $U$ is also another tensor product state between $B\tilde{B}$ and $R_B$.
Therefore, we obtain 
\begin{align}
\delta(\psi_{AR_A},\tilde{\calI})=\tilde{\delta}
\end{align}
Finally, from \eqref{T-M2}, we obtain 
\begin{align}
\calF_{\rho_B}(X_B)=4V_{\ket{\phi_{B\tilde{B}}}}(X_{B}+X_{\tilde{B}})
=4V_{\ket{\phi_{B\tilde{B}}}\otimes\ket{\eta_{R_B}}}((X_{B}+X_{\tilde{B}})\otimes1_{R_B})=\calF_{\ket{\phi_{B\tilde{B}}}\otimes\ket{\eta_{R_B}}}((X_{B}+X_{\tilde{B}})\otimes1_{R_B}).
\end{align}
Therefore, we obtain \eqref{SIQ1'-S}.
\end{proofof}

\section{Rederivation of approximated Eastin-Knill theorem as a corollary of \eqref{SIQ2}}\label{covariant-codes}
In this subsection, we rederive the approximate Eastin-Knill theorem from our trade-off relation \eqref{SIQ2} and/or \eqref{R-SIQ2}.
Following the setup for Theorem 1 in Ref. \cite{e-EKFaist-S}, we assume the following three:
\begin{itemize}
\item We assume that the code $\calC$ is covariant with respect to $\{U^{L}_{\theta}\}_{\theta\in\mathbb{R}}$ and $\{U^{P}_{\theta}\}_{\theta\in\mathbb{R}}$, where $U^{L}_{\theta}:=e^{i\theta X_L}$ and  $U^{P}_{\theta}:=e^{i\theta X_P}$. 
We also assume that the code $\calC$ is an isometry.
\item We assume that the physical system $P$ is a composite system of subsystems $\{P_{i}\}^{N}_{i=1}$, and that $X_{P}$ is written as $X_{P}=\sum_iX_{P_i}$. We also assume that the lowest eigenvalue of each $X_{P_i}$ is 0. (We can omit the latter assumption. See the section \ref{v-symmetry})
\item We assume that the noise $\calN$ is the erasure noise in which the location of the noise is known. To be concrete, the noise $\calN$ is a CPTP-map from $P$ to $P':=PC$ written as follows:
\begin{align}
\calN(...):=\sum_{i}\frac{1}{N}\ket{i_C}\bra{i_C}\otimes\ket{\tau_i}\bra{\tau_i}_{P_i}\otimes\Tr_{P_i}[...],\label{noisedef}
\end{align}
where the subsystem $C$ is the register remembering the location of error, and $\{\ket{i_C}\}$ is an orthonormal basis of $C$. The state $\ket{\tau_i}_{P_i}$ is a fixed state in $P_i$. 
\end{itemize}
In general, $\calN$ is not a covariant operation.
However, we can substitute the following covariant operation $\tilde{\calN}$ for $\calN$ without changing $\delta_C$:
\begin{align}
\tilde{\calN}(...):=\sum_{i}\frac{1}{N}\ket{i_C}\bra{i_C}\otimes\ket{0_i}\bra{0_i}_{P_i}\otimes\Tr_{P_i}[...]
\end{align}
where $\ket{0_i}$ is the eigenvector of $X_{P_i}$ whose eigenvalue is 0.
We can easily see that $\tilde{\calN}\circ\calC$ and $\calN\circ\calC$ are the same in the sense of $\delta_C$ by noting that we can convert the final state of $\tilde{\calN}\circ\calC$ to the final state of $\calN\circ\calC$ by the following unitary operation:
\begin{align}
W:=\sum_{i}\ket{i_C}\bra{i_C}\otimes U_{P_i}\otimes_{j:j\ne i} I_{P_j},
\end{align}
where $U_{P_i}$ is a unitary on $P_i$ satisfying $\ket{\tau_i}=U_{P_i}\ket{0_i}$.

Under the above setup, $\tilde{\calN}\circ\calC$ is covariant with respect to $\{U^{L}_{\theta}\}$ and $\{I_C\otimes U^{P}_{\theta}\}$.
Therefore, we can apply \eqref{SIQ1}, \eqref{SIQ2}, \eqref{R-SIQ1} and \eqref{R-SIQ2} to this situation.
Below, we derive the following approximated Eastin-Knill theorem from  \eqref{R-SIQ2}.
\begin{align}
\frac{\calD_{X_L}}{2\delta_C\calD_{\max}}\le N+\frac{\calD_{X_L}}{2\calD_{\max}}.\label{almostFaist}
\end{align}
Here $\calD_{\max}:=\max_{i}\calD_{P_i}$.
This inequality is the same as the inequality in Theorem 1 of \cite{e-EKFaist-S}, apart from the irrelevant additional term $\calD_{X_L}/2\calD_{\max}$. (In Theorem 1 of \cite{e-EKFaist-S}, $\frac{\calD_{X_L}}{2\delta_C\calD_{\max}}\le N$ is given.)
We can also derive a very similar inequality from \eqref{SIQ2}.
When we use \eqref{SIQ2} instead of \eqref{R-SIQ2}, the coefficient $1/2$ in the lefthand side of \eqref{almostFaist} becomes $1/4$.
We remark that although the bound \eqref{almostFaist} is little bit weaker than the bound in Theorem 1 of Ref.\cite{e-EKFaist-S}, it is still remarkable, because \eqref{almostFaist} is given as a corollary of more general inequality \eqref{R-SIQ2}.

\begin{proofof}{\eqref{almostFaist}}
We construct an implementation of $\tilde{\calN}\circ\calC$ by combining the following implementations of $\calC$ and $\tilde{\calN}$.
As the implementation of $\calC$, we take a system $B$ satisfying $LB=P$, a Hermitian operator $X_B$, a symmetric state $\rho_B$ on $B$ with respect to $X_B$, and a unitary $U$ on $LB$ satisfying 
\begin{align}
U(X_L+X_B)U^\dagger&=X_P,\label{c-law-P1}\\
[\rho_B,X_B]&=0\label{Bsym}.\\
\calC(...)&=U(...\otimes\rho_B) U^{\dagger}
\end{align}
The existence of such $B$ $X_B$, $U$, and $\rho_B$ is guaranteed since $\calC$ is an isometry and any covariant operation is realized by an invariant unitary and a symmetric state (see Appendix \ref{AppA} in the main text).

As an implementation of $\tilde{\calN}$, we take a composite system $B_1:=C\tilde{P_1}...\tilde{P_N}$ where each $\tilde{P_i}$ is a copy system of $P_i$ which has $\tilde{X}_{P_i}$ that is equal to $X_{P_i}$.
We also define a state $\rho_{B_1}$ on $B_1$ and a unitary $V$ on $PB_1$ as follows
\begin{align}
\rho_{B_1}&:=\frac{1}{N}\sum^{N}_{j=1}\ket{j}\bra{j}_{C}\otimes(\otimes^{N}_{i=1}\ket{0_i}\bra{0_i}_{\tilde{P}_i})\\
V&:=\sum_{k}\ket{k}\bra{k}_{C}\otimes S_{\tilde{P_k}P_k}\otimes(\otimes_{j:j\ne k}I_{\tilde{P_j}P_j}),
\end{align} 
where $S_{\tilde{P_k}P_k}$ is the swap unitary between $\tilde{P_k}$ and $P_k$ and $I_{\tilde{P_j}P_j}$ is the identity operator on $\tilde{P_j}P_j$.
Then, $\rho_{B_1}$ and $V$ satisfies
\begin{align}
V(X_{P}\otimes I_{\tilde{P}}\otimes I_C+I_{P}\otimes X_{\tilde{P}}\otimes I_C)V^\dagger&=X_{P}\otimes I_{\tilde{P}}\otimes I_C+I_{P}\otimes X_{\tilde{P}}\otimes I_C,\label{c-law-P2}\\
[\rho_{B_1},X_{\tilde{P}}\otimes I_C]&=0,\\
\tilde{\calN}(...)&=\Tr_{\tilde{P}}[V(...\otimes\rho_{B_1})V^\dagger]
\end{align}
where $\tilde{P}=\tilde{P_1}...\tilde{P_N}$ and $X_{\tilde{P}}=\sum^{N}_{j=1}X_{\tilde{P}_j}$.

For the above implementation, from \eqref{R-SIQ2} and $\delta_C\ge\max_{\ket{\psi_{LR_L}}}\delta$, we obtain the following relation for an arbitrary $\ket{\psi_{LR_L}}$:
\begin{align}
\frac{\SA_2}{\delta_C}\le2\sqrt{V_{\rho^{f}_{\tilde{P}}}(X_{\tilde{P}})}+\Delta_{\max},\label{TS-appEN-Spre}
\end{align}
where $\rho^f_{\tilde{P}}$ is the final state of $\tilde{P}$.

To derive \eqref{almostFaist} from \eqref{R-SIQ2}, let us evaluate $\SA_2$, $\Delta_{\max}$ and $V_{\rho^{f}_{\tilde{P}}}(X_{\tilde{P}})$ for the following $\ket{\psi_{LR_L}}$:
\begin{align}
\ket{\psi_{LR_L}}:=\frac{\ket{0_L}\ket{0_{R_L}}+\ket{1_L}\ket{1_{R_L}}}{\sqrt{2}},
\end{align}
where $\ket{0_L}$ and $\ket{1_L}$ are the maximum and minimum eigenvectors of $X_L$.
Due to the definition of $\SA_2$, we obtain
\begin{align}
\SA_2\ge\frac{1}{2}\sum^{1}_{i=0}\left|\left(\ex{X_L}_{\ket{j_L}\bra{j_L}}-\ex{X_{P}\otimes I_C}_{\calE(\ket{j_L}\bra{j_L})}\right)-\left(\ex{X_L}_{(\ket{0_L}\bra{0_L}+\ket{1_L}\bra{1_L})/2}-\ex{X_{P}\otimes I_C}_{\calE((\ket{0_L}\bra{0_L}+\ket{1_L}\bra{1_L})/2)}\right)\right|
\end{align}
Due to \eqref{noisedef} and \eqref{c-law-P1}, for any $\rho_L$ on $L$,
\begin{align}
\ex{X_{P}\otimes I_C}_{\calE(\rho_L)}
&=\left(1-\frac{1}{N}\right)\left((\ex{X_L}_{\rho_L}+\ex{X_B}_{\rho_B}\right)+\frac{1}{N}\sum^{N}_{i=1}\ex{X_{P_i}}_{\ket{0_i}\bra{0_i}}\nonumber\\
&=\left(1-\frac{1}{N}\right)\left((\ex{X_L}_{\rho_L}+\ex{X_B}_{\rho_B}\right).
\end{align}
Therefore, we obtain
\begin{align}
\SA_2&\ge\frac{1}{2N}\sum^{1}_{j=0}|\ex{X_L}_{\ket{j_L}\bra{j_L}}-\ex{X_L}_{(\ket{0_L}\bra{0_L}+\ket{1_L}\bra{1_L})/2}|\nonumber\\
&=\frac{\calD_{X_L}}{2N}.
\end{align}
By definition of $\Delta_{\max}$, we obtain
\begin{align}
\Delta_{\max}&=\max_{\rho\mbox{ on the support of $(\ket{0_L}\bra{0_L}+\ket{1_L}\bra{1_L})/2$}}\frac{1}{N}\left|\ex{X_L}_{\rho}-\ex{X_L}_{(\ket{0_L}\bra{0_L}+\ket{1_L}\bra{1_L})/2}\right|\nonumber\\
&\le\frac{\calD_{X_L}}{2N}.
\end{align}
To evaluate $V_{\rho^{f}_{\tilde{P}}}(X_{\tilde{P}})$, we note that 
\begin{align}
\rho^{f}_{\tilde{P}}=\frac{1}{N}\sum^{N}_{h=1}\rho^{f}_{h}\otimes(\otimes_{i:i\ne h}\ket{0_{i}}\bra{0_{i}})
\end{align}
where $\rho^{f}_{h}:=\Tr_{\lnot P_h}[\calC((\ket{0_L}\bra{0_L}+\ket{1_L}\bra{1_L})/2)]$.
Therefore, 
\begin{align}
\ex{X^2_{\tilde{P}}}_{\rho^{f}_{\tilde{P}}}&=\frac{\sum_{h}\ex{X^2_{P_h}}_{\rho^{f}_{h}}}{N}\\
\ex{X_{\tilde{P}}}_{\rho^{f}_{\tilde{P}}}&=\frac{\sum_{h}\ex{X_{P_h}}_{\rho^{f}_{h}}}{N}.
\end{align}
With using the above, we evaluate $V_{\rho^{f}_{\tilde{P}}}(X_{\tilde{P}})$ as follows:
\begin{align}
V_{\rho^{f}_{\tilde{P}}}(X_{\tilde{P}})&=\ex{X^2_{\tilde{P}}}_{\rho^{f}_{\tilde{P}}}-\ex{X_{\tilde{P}}}_{\rho^{f}_{\tilde{P}}}^2\nonumber\\
&=\frac{\sum_{h}\ex{X^2_{P_h}}_{\rho^{f}_{h}}}{N}-\left(\frac{\sum_{h}\ex{X_{P_h}}_{\rho^{f}_{h}}}{N}\right)^2\nonumber\\
&=V^{c}_{Q}(x)\nonumber\\
&\le\frac{\calD^2_{\max}}{4}
\end{align}
where $V^{c}_{Q}(x)$ is the variance of a classical distribution of $Q$ on  a set of real numbers $\calX$ defined as follows:
\begin{align}
Q(x)&:=\sum^{N}_{h=1}\frac{P_h(x)}{N}\\
P_h(x)&:=\begin{cases}
    \bra{x_h}\rho^{f}_{h}\ket{x_h} & (x\in \calX_h) \\
    0 & (otherwise)
  \end{cases}\\
\calX_h&:=\{x|\mbox{eigenvalues of $X_{P_h}$}\}\\
\calX&:=\bigcup^{N}_{h=1}\calX_h
\end{align}
where $\ket{x_h}$ is an eigenvector of $X_{P_h}$ whose eigenvalue is $x$.

Combining the above, we obtain \eqref{almostFaist}.
\end{proofof}

\section{Generalization of main results to the case of general Lie group symmetry}\label{g-symmetry}

In this section, we generalize the results in the main text to the case of general Lie group symmetries.
In the first subsection, we derive a variation of the main results (\eqref{SIQ1} and \eqref{SIQ2} in the main text) for the case of the conservation law of $X$, as preliminary. 
In the variation, we use $\SA_V$ which represents the variance of the change of local conserved quantity $X$ instead of $\SA$.
In the second subsection, we extend the variation to the case of general symmetries.

\subsection{Variance-type lower bound of recovery error for the cases of  $U(1)$ and $\mathbb{R}$}

In this subsection, we derive a variation of the main results for the case of the conservation law of $X$. 
We consider Setup 1 with the conservation law of $X$: $X_A+X_B=U^\dagger(X_{A'}+X_{B'}U)$.
For an arbitrary decomposition of $\rho_A:=\sum_jp_j\rho_{j,A}$, we define the following quantity:
\begin{align}
\SA_V(\{p_j,\rho_{j,A}\},\calE):=\sum_{j}p_j\Delta^2_j.
\end{align} 
Hereafter, we abbreviate $\SA_V(\{p_j,\rho_{j,A}\},\calE)$ as $\SA_V$.
We remark that the quantity $\SA_V$ depends on the decomposition of $\rho_A$, unlike $\SA$. 

For $\SA_V$, the following trade-off relation holds:
\begin{align}
\frac{\SA_V}{8\delta^2}&\le\calF+\calB,\label{VSIQ1}\\
\frac{\SA_V}{8\delta^2}&\le\calF_f+\calB,\label{VSIQ2}
\end{align}
where $\delta$, $\calF$ and $\calF_f$ are the same as in \eqref{SIQ1} and \eqref{SIQ2}, and $\calB$ is defined as follows:
\begin{align}
\calB&:=\frac{\sum_j\Delta^2_j}{2}+8(V_{\rho_A}(X_A)+V_{\calE(\rho_A)}(X_{A'})).
\end{align}

\begin{proofof}{\eqref{VSIQ1} and \eqref{VSIQ2}}
To derive \eqref{VSIQ1} and \eqref{VSIQ2}, we use the following mean-variance-distance trade-off relation which holds for arbitrary states $\rho$ and $\sigma$ and an arbitrary Hermitian operator $X$ \cite{Nishiyama}:
\begin{align}
\Tr[(\rho-\sigma)X]^2\le D_F(\rho,\sigma)^2((\sqrt{V_{\rho}(X)}+\sqrt{V_{\sigma}(X)})^2+\Tr[(\rho-\sigma)X]^2).\label{N-R-C}
\end{align}
With using \eqref{N-R-C}, Lemma \ref{SCL} and \eqref{Eeq}--\eqref{Vineq2-S}, we derive \eqref{VSIQ1} and \eqref{VSIQ2}, in the very similar way to \eqref{SIQ1} and \eqref{SIQ2}.

Let us take an arbitrary decomposition of $\rho_A$ as $\rho_A=\sum_jp_j\rho_{j,A}$.
Then, the following equation follows from \eqref{Eeq}:
\begin{align}
|\Delta_j|=|\ex{X_{B'}}_{\rho^{f}_{j,B'}}-\ex{X_{B'}}_{\rho^{f}_{B'}}|.\label{Eeq2-S}
\end{align}
We firstly evaluate $\SA_V$ as follows:
\begin{align}
\SA_V&\stackrel{(a)}{=}\sum_{j}p_j(\ex{X_{B'}}_{\rho^{f}_{j,B'}}-\ex{X_{B'}}_{\rho^{f}_{B'}})^2\nonumber\\
&\stackrel{(b)}{\le}\sum_{j}q_jD_{F}(\rho^{f}_{j,B'},\rho^{f}_{B'})^2\left((\sqrt{V_{\rho^{f}_{j,B'}}(X_{B'})}+\sqrt{V_{\rho^{f}_{B'}}(X_{B'})})^2+\Delta^2_{j}\right)\label{totyu}
\end{align}
Here we use \eqref{Eeq2-S} in (a), \eqref{RCR2} in (b).

Second, we evaluate $(\sqrt{V_{\rho^{f}_{j,B'}}(X_{B'})}+\sqrt{V_{\rho^{f}_{B'}}(X_{B'})})^2$ in \eqref{totyu} as follows:
\begin{align}
\left(\sqrt{V_{\rho^{f}_{j,B'}}(X_{B'})}+\sqrt{V_{\rho^{f}_{B'}}(X_{B'})}\right)^2
&\le4\left(\sqrt{V_{\rho_B}(X_{B})}+(\sqrt{V_{\rho_A}(X_A)}+\sqrt{V_{\rho_{A'}}(X_{A'})})\right)^2
\nonumber\\
&\le4\left(2V_{\rho_B}(X_{B})+4(V_{\rho_A}(X_A)+V_{\rho_{A'}}(X_{A'}))\right)\nonumber\\
&=2(\calF+8(V_{\rho_A}(X_A)+V_{\rho_{A'}}(X_{A'})))\label{totyu2-A}
\end{align}
Here we use \eqref{Vineq} and $(x+y)^2\le 2(x^2+y^2)$.
By combining \eqref{totyu}, \eqref{totyu2-A}, Lemma \ref{SCL} and $\Delta^2_j\le\sum_j\Delta^2_j$, we obtain \eqref{VSIQ1}:
\begin{align}
\SA_V\le8\delta^2\left(\calF+8(V_{\rho_A}(X_A)+V_{\rho_{A'}}(X_{A'}))+\frac{\sum_j\Delta^2_j}{2}\right).
\end{align}

To derive \eqref{VSIQ2}, we evaluate $(\sqrt{V_{\rho^{f}_{j,B'}}(X_{B'})}+\sqrt{V_{\rho^{f}_{B'}}(X_{B'})})^2$ in \eqref{totyu} in a different way:
\begin{align}
\left(\sqrt{V_{\rho^{f}_{j,B'}}(X_{B'})}+\sqrt{V_{\rho^{f}_{B'}}(X_{B'})}\right)^2
&\le\left(\sqrt{V_{\rho_{B}}(X_{B})}+\sqrt{V_{\rho^{f}_{B'}}(X_{B'})}+\sqrt{V_{\rho_A}(X_A)}+\sqrt{V_{\rho_{A'}}(X_{A'})}\right)^2
\nonumber\\
&\le4\left(\sqrt{V_{\rho^{f}_{B'}}(X_{B'})}+(\sqrt{V_{\rho_A}(X_A)}+\sqrt{V_{\rho_{A'}}(X_{A'})})\right)^2
\nonumber\\
&\le4\left(2V_{\rho^f_{B'}}(X_{B'})+4(V_{\rho_A}(X_A)+V_{\rho_{A'}}(X_{A'}))\right)\nonumber\\
&=2(\calF+8(V_{\rho_A}(X_A)+V_{\rho_{A'}}(X_{A'})))\label{totyu2-B}
\end{align}
Here we use \eqref{Vineq}, \eqref{Vineq2-S} and $(x+y)^2\le 2(x^2+y^2)$.

By combining \eqref{totyu}, \eqref{totyu2-B}, Lemma \ref{SCL} and $\Delta^2_j\le\sum_j\Delta^2_j$, we obtain \eqref{VSIQ2}:
\begin{align}
\SA_V\le8\delta^2\left(\calF_f+8(V_{\rho_A}(X_A)+V_{\rho_{A'}}(X_{A'}))+\frac{\sum_j\Delta^2_j}{2}\right).
\end{align}
\end{proofof}

\subsection{Main results for general symmetry: Limitations of recovery error for general Lie groups }\label{MSIQwithRB}

Now, we introduce the generalized version of the main results.
We consider Setup 1, and assume that $U$ is restricted by some Lie group symmetry.
To be more concrete, we take an arbitrary Lie group $G$ and its unitary representations $\{V_{g,\alpha}\}_{g\in G}$ ($\alpha=A,B,A',B'$).
We assume that $U$ satisfies the following relation:
\begin{align}
U(V_{g,A}\otimes V_{g,B})=(V_{g,A'}\otimes V_{g,B'})U,\enskip g\in G.\label{g-law}
\end{align}
Let $\{X^{(a)}_{\alpha}\}$ $(\alpha=A,B,A',B')$ be an arbitrary basis of Lie algebra corresponding to $\{V_{g,\alpha}\}_{g\in G}$.
Then, for an arbitrary decomposition $\rho_A=\sum_jp_j\rho_{j,A}$, the following matrix inequalities hold:
\begin{align}
\frac{\wh{\SA_V}}{8\delta^2}&\preceq \wh{\calF}+\wh{\calB},\label{MSIQ1}\\
\frac{\wh{\SA_V}}{8\delta^2}&\preceq \wh{\calF_f}+\wh{\calB},\label{MSIQ2}
\end{align}
where $\preceq$ is the inequality for matrices, and $\wh{\SA_V}$ and $\wh{\calB}$ are matrices whose components are defined as follows:
\begin{align}
\wh{\SA_V}_{ab}&:=\sum_jp_j\Delta^{(a)}_{j}\Delta^{(b)}_{j}\\
\Delta^{(a)}_{j}&:=\left(\ex{X{(a)}_A}_{\rho_j}-\ex{X^{(a)}_{A'}}_{\calE(\rho_j)}\right)-\left(\ex{X^{(a)}_A}_{\rho_A}-\ex{X^{(a)}_{A'}}_{\calE(\rho_A)}\right)\\
\wh{\calB}_{ab}&:=8(Cov_{\rho_A}(X^{(a)}_{A}:X^{(b)}_{A})+Cov_{\calE(\rho_{A})}(X^{(a)}_{A'}:X^{(b)}_{A'}))+\frac{\sum_j\Delta^{(a)}_{j}\Delta^{(b)}_{j}}{2}.
\end{align}
and $\wh{\calF}$ and $\wh{\calF_f}$ are the Fisher information matrices
\begin{align}
\wh{\calF}&:=\wh{\calF}_{\phi_{BR_B}}(\{X^{(a)}_{B}\otimes1_{R_B}\})\\
\wh{\calF_f}&:=\wh{\calF}_{\phi^{f}_{B'R_{B'}}}(\{X^{(a)}_{B}\otimes1_{R_B}\}),
\end{align}
where the Fisher information matrix $\wh{\calF}_{\xi}(\{X^{(a)}\})$ is defined as follows for a given state $\xi$ and given Hermite operators $\{X^{(a)}\}$:
\begin{align}
\wh{\calF}_{\xi}(\{X^{(a)}\})_{ab}=2\sum_{i,i'}\frac{(r_i-r_{i'})^2}{r_i+r_{i'}}X^{(a)}_{ii'}X^{(b)}_{i'i} \,  \label{afi}
\end{align}
Here, $r_i$ is the $i$-th eigenvalue of the density matrix $\xi$ with the eigenvector $\psi_i$, and $X^{(a)}_{ii'}:=\bra{\psi_i}X^{(a)}\ket{\psi_{i'}}$. 

\begin{proofof}{\eqref{MSIQ1} and \eqref{MSIQ2}}
We first show \eqref{MSIQ1}.
Since $\wh{\SA_V}$, $\wh{\calF}$ and $\wh{\calB}$ are real symmetric matrices, we only have to show the following relation holds for arbitrary real vector $\vec{\lambda}$:
\begin{align}
\vec{\lambda}^{T}\frac{\wh{\SA_V}}{8\delta^2}\vec{\lambda}\le\vec{\lambda}^{T}(\wh{\calF}+\wh{\calB})\vec{\lambda}.\label{MSIQ1pre}
\end{align}
By definition of  $\wh{\SA_V}$, $\wh{\calF}$ and $\wh{\calB}$, the inequality \eqref{MSIQ1pre} is equivalent to \eqref{VSIQ1} whose $X_A$, $X_{A'}$ and $X_{B}$ are substituted by $X_{\alpha,\vec{\lambda}}=\sum_{a}\lambda_aX^{(a)}_{\alpha}$  ($\alpha=A,A',B$ and $\{\lambda_{a}\}$ are the components of $\vec{\lambda}$).
Therefore, we only have to show that the following equality holds for arbitrary $\vec{\lambda}$:
\begin{align}
U(X_{A,\vec{\lambda}}+X_{B,\vec{\lambda}})U^{\dagger}=X_{A',\vec{\lambda}}+X_{B',\vec{\lambda}}.\label{MSIQ1pre2}
\end{align}
Due to \eqref{g-law}, for any $a$, the following relation holds:
\begin{align}
U(X^{(a)}_{A}+X^{(a)}_{B})=(X^{(a)}_{A'}+X^{(a)}_{B'})U.
\end{align}
Therefore, \eqref{MSIQ1pre2} holds, and thus we obtain \eqref{MSIQ1}.
We can obtain \eqref{MSIQ2} in the same way.

\end{proofof}

\subsection{Limitations of recovery error for general symmetry in information recovery without using $R_B$}

In this subsection, we extend \eqref{MSIQ1} and \eqref{MSIQ2} to the case of information recoveries without using $R_B$.
Let us consider the almost same setup as in the subsection \ref{MSIQwithRB}.
The difference between the present setup and the setup in the subsection \ref{MSIQwithRB} is that the recovery operation $\calR$ is a CPTP-map $A'$ to $A$.
Then, the recovery error is $\tilde{\delta}$ which is defined in \eqref{dwithoutRB}.

As is explained in the section \ref{S-woRB}, since the inequality $\tilde{\delta}\ge\delta$ holds in general, we can substitute $\tilde{\delta}$ for $\delta$ in \eqref{MSIQ1} and \eqref{MSIQ2}.
Moreover, we can derive the following more strong inequality from \eqref{VSIQ1}:
\begin{align}
\frac{\wh{\SA_V}}{8\tilde{\delta}^2}\preceq \wh{\calF_B}+\wh{\calB},\label{MSIQ1'}
\end{align}
where $\wh{\calF_B}:=\wh{\calF}_{\rho_B}(\{X^{(a)}_{B}\})$. 

The proof of \eqref{MSIQ1'} is very similar to the proof of \eqref{SIQ1'-S}:
\begin{proofof}{\eqref{MSIQ1'}}
As in the proof of \eqref{MSIQ1}, since $\wh{\SA_V}$, $\wh{\calF_B}$ and $\wh{\calB}$ are real symmetric matrices, we only have to show the following inequality for an arbitrary real vector $\vec{\lambda}$:
\begin{align}
\vec{\lambda}^{T}\frac{\wh{\SA_V}}{8\tilde{\delta}^2}\vec{\lambda}\le\vec{\lambda}^{T}(\wh{\calF_B}+\wh{\calB})\vec{\lambda}.
\end{align}

We take a quantum system $\tilde{B}$ whose dimension is the same as $B$, and a purification $\ket{\phi_{B\tilde{B}}}$ of $\rho_B:=\Tr_{R_B}[\phi_{BR_B}]$.
From $\ket{\phi_{B}\tilde{B}}$ and $U$, we define a set $\tilde{\calI}:=(\ket{\phi_{B\tilde{B}}}\otimes\ket{\eta_{R_B}},U\otimes 1_{\tilde{B}})$.
We take the Schmidt decomposition of $\ket{\phi_{B\tilde{B}}}$ as
\begin{align}
\ket{\phi_{B\tilde{B}}}=\sum_{l}\sqrt{r_l}\ket{l_{B}}\ket{l_{\tilde{B}}},
\end{align}
and define $\{X^{(a)}_{\tilde{B}}\}$ on $\tilde{B}$ corresponding to $\{X^{(a)}_{B}\}$ as 
\begin{align}
X^{(a)}_{\tilde{B}}:=\sum_{ll'}\frac{2\sqrt{r_{l}r_{l'}}}{r_{l}+r_{l'}}\bra{l_{B}}X^{(a)}_{B}\ket{l_{B}}\ket{l'_{\tilde{B}}}\bra{l_{\tilde{B}}}.
\end{align}

Note that $\tilde{\calI}$ is a Steinspring representation of $\calE$ and that $U\otimes 1_{\tilde{B}}(X^{(a)}_A+X^{(a)}_B+X^{(a)}_{\tilde{B}})(U\otimes 1_{\tilde{B}})^\dagger=X^{(a)}_{A'}+X^{(a)}_{B'}+X^{(a)}_{\tilde{B}}$ for any $a$.
Therefore, we obtain the following inequality from \eqref{VSIQ1} by substituting
 $X^{(\vec{\lambda})}_A:=\sum_a\lambda_aX^{(a)}_{A}$ for $X_A$, $X^{(\vec{\lambda})}_{B\tilde{B}}:=\sum_a\lambda_a(X^{(a)}_{B}+X^{(a)}_{\tilde{B}})$ for $X_B$, $X^{(\vec{\lambda})}_{A'}:=\sum_a\lambda_aX^{(a)}_{A'}$ for $X_{A'}$, and $X^{(\vec{\lambda})}_{B'\tilde{B}}:=\sum_a\lambda_a(X^{(a)}_{B'}+X^{(a)}_{\tilde{B}})$ for $X_{B'}$:
\begin{align}
\frac{\SA^{(\vec{\lambda})}_{V}(\psi_{AR_A},\tilde{\calI})}{8\delta(\psi_{AR_A},\tilde{\calI})}\le\calF^{(\vec{\lambda})}_{\ket{\phi_{B\tilde{B}}}\otimes\ket{\eta_{R_B}}}+\calB^{(\vec{\lambda})}.
\end{align}
Here $\SA^{(\vec{\lambda})}_{V}(\psi_{AR_A},\tilde{\calI})$, $\calF^{(\vec{\lambda})}_{\ket{\phi_{B\tilde{B}}}\otimes\ket{\eta_{R_B}}}$ and $\calB^{(\vec{\lambda})}$ are $\SA_V$, $\calF$ and $\calB$ for $(\ket{\phi_{B\tilde{B}}}\otimes\ket{\eta_{R_B}},U\otimes1_{R_B})$ and $X^{(\vec{\lambda})}_{\alpha}$ ($\alpha=A,B\tilde{B},A',B'\tilde{B}$).

Since both of $\calI$ and $\tilde{\calI}$ gives the same CPTP-map $\calE$, and due to the definitions of $\SA^{(\vec{\lambda})}_{V}(\psi_{AR_A},\tilde{\calI})$ and $\calB^{(\vec{\lambda})}$,
\begin{align}
\SA^{(\vec{\lambda})}_V(\psi_{AR_A},\tilde{\calI})&=\vec{\lambda}^{T}\SA_V\vec{\lambda},\\
\calB^{(\vec{\lambda})}&=\vec{\lambda}^{T}\wh{\calB}\vec{\lambda}.
\end{align}
Similarly due to \eqref{F=4V2}, 
\begin{align}
\vec{\lambda}^{T}\wh{\calF}_{\rho_B}(\{X^{(a)}_{B}\})\vec{\lambda}=\calF_{\rho_B}(\sum_{a}\lambda_{a}X^{(a)}_B)=4V_{\ket{\phi_{B\tilde{B}}}}(\sum_{a}\lambda_{a}(X^{(a)}_{B}+X^{(a)}_{\tilde{B}}))=\calF^{(\vec{\lambda})}_{\ket{\phi_{B\tilde{B}}}\otimes\ket{\eta_{R_B}}}.\label{T-M2-M}
\end{align}
Moreover, since $\ket{\phi_{B\tilde{B}}}\otimes\ket{\eta_{R_B}}$ is a tensor product between $B\tilde{B}$ and $R_B$, the state of $B\tilde{B}R_B$ after $U$ is also another tensor product state between $B\tilde{B}$ and $R_B$.
Therefore, we obtain 
\begin{align}
\delta(\psi_{AR_A},\tilde{\calI})=\tilde{\delta}
\end{align}
Combining the above, we obtain \eqref{MSIQ1'}.
\end{proofof}

\subsection{Applications of the limitations of recovery error for general symmetries}
As the cases of $U(1)$ and $\mathbb{R}$, we can use the inequalities \eqref{MSIQ1}, \eqref{MSIQ2} and \eqref{MSIQ1'} (and \eqref{MSIQ2} whose $\delta$ is substituted by $\tilde{\delta}$) to various phenomena.
\begin{itemize}
\item As \eqref{SIQ1} and \eqref{SIQ2}, we can apply \eqref{MSIQ1} and \eqref{MSIQ2} to information recovery from scrambling with general symmetry.
\item As \eqref{SIQ1'-S}, we can apply \eqref{MSIQ1} to implementation of general unitary dynamics and covariant error correcting codes with covariant errors. 
With using $\delta_U$ and $\delta_C$, we obtain
\begin{align}
\frac{\wh{\SA_V}}{8\delta^2_U}&\preceq \wh{\calF_B}+\wh{\calB}\\
\frac{\wh{\SA_V}}{8\delta^2_C}&\preceq \wh{\calB}
\end{align}
\end{itemize}

\section{limitations of recovery error for the case where the conservation law is weakly violated}\label{v-symmetry}
In this section, we consider the case where the conservation law of $X$ is violated.
We show that our results hold even in such cases.
We consider Setup 1 with the following violated global conservation law:
\begin{align}
X_{A}+X_{B}=U^{\dagger}(X_{A'}+X_{B'})U+Z.\label{vio-c-law}
\end{align}
Here $Z$ is some perturbation term which describes the strength of the violation of global conservation law.
Then, the following two relations hold:
\begin{align}
\frac{\SA-\SA_Z}{2(\sqrt{\calF}+2(\sqrt{V_{\rho_A}(X_A)}+\sqrt{V_{\rho^{f}_{A'}}(X_{A'})})+\SA^{(2)}+\SA^{(2)}_{Z}+2\sqrt{V_Z})}&\le\delta,\label{SIQ-v1}\\
\frac{\SA-\SA_Z}{2(\sqrt{\calF_f}+\SA^{(2)}+\calA^{(2)}_Z)}&\le\delta.\label{SIQ-v2}
\end{align}
Here $V_Z:=V_{\rho_A\otimes\rho_B}(Z)$ and 
\begin{align}
\SA_Z&:=\max_{\{p_j,\rho_{j,A}\}}\sum_jp_j|\ex{Z}_{\rho_{j,A}\otimes\rho_B}-\ex{Z}_{\rho_A\otimes\rho_B}|,\\
\SA^{(2)}_Z&:=\max_{\{p_j,\rho_{j,A}\}}\sqrt{\sum_jp_j|\ex{Z}_{\rho_{j,A}\otimes\rho_B}-\ex{Z}_{\rho_A\otimes\rho_B}|^2},\\
\SA^{(2)}&:=\max_{\{p_j,\rho_{j,A}\}}\sqrt{\sum_{j}p_j|\Delta_j|^2},
\end{align}
where $\{p_j,\rho_{j,A}\}$ runs $\rho_A=\sum_jp_j\rho_{j,A}$.

To simplify \eqref{SIQ-v1} and \eqref{SIQ-v2}, we can use the following relations (we prove them in the end of this section):
\begin{align}
\SA_Z&\le\SA^{(2)}_{Z}\le\sqrt{V_Z},\label{ur}\\
\SA^{(2)}&\le\Delta_{\max}\le2\Delta_{+},\label{ur2}\\
\sqrt{V_{\rho_A}(X_A)}+\sqrt{V_{\rho^{f}_{A'}}(X_{A'})}&\le\Delta_{+}\label{ur3}\\
\SA_Z&\le M_{\rho_A}(Z_A)\label{ur4}\\
\SA^{(2)}_Z&\le\sqrt{V_{\rho_A}(Z_A)}.\label{ur5}
\end{align}
where $Z_A:=\Tr_B[(1_A\otimes\rho_B)Z]$ and $M_{\rho_A}(Z_A):=\ex{|Z_A-\ex{Z_A}_{\rho_A}|}_{\rho_A}$.
For example, by using \eqref{ur}, \eqref{ur2} and \eqref{ur5}, we obtain the following inequalities from \eqref{SIQ-v1} and \eqref{SIQ-v2}:
\begin{align}
\frac{\SA-\sqrt{V_Z}}{2(\sqrt{\calF}+4\Delta_{+}+3\sqrt{V_Z})}&\le\delta,\label{SIQ-v1-sim}\\
\frac{\SA-\sqrt{V_Z}}{2(\sqrt{\calF_f}+\Delta_{\max}+\sqrt{V_Z})}&\le\delta.\label{SIQ-v2-sim}
\end{align}
We remark that we have introduced \eqref{SIQ-v1-sim} in the section \ref{mainA} of the main text.

Similarly, the following relations also hold:
\begin{align}
\frac{\SA_2-\SA_Z}{\sqrt{\calF}+2(\sqrt{V_{\rho_A}(X_A)}+\sqrt{V_{\rho^{f}_{A'}}(X_{A'})})+\SA^{(2)}+\SA^{(2)}_{Z}+2\sqrt{V_Z}}&\le\delta,\label{R-SIQ-v1}\\
\frac{\SA_2-\SA_Z}{\sqrt{\calF_f}+\SA^{(2)}+\calA^{(2)}_Z}&\le\delta.\label{R-SIQ-v2}
\end{align}

These inequalities have two important messages.
First, when $Z=\mu I$ where $\mu$ is an arbitrary real number, the inequalities \eqref{SIQ1} and \eqref{SIQ2} are valid, since in that case $\SA_Z=V_Z=V_{\rho_A}(Z_A)=0$ holds.
Therefore, we can omit the assumption that the lowest eigenvalue of $X_{P_i}$ is 0, which is used in the re-derivation of the approximate Eastin-Knill theorem in the section \ref{covariant-codes}.
Second, our trade-off relations become trivial only when $\SA\le \SA_Z$. 
As we show in the section 3 in the main text, the inequality $\SA\ge  M\gamma(1-\epsilon)$ holds in the Hayden-Preskill black hole model.
Therefore, when $M_Z$ is not so large, our message on black holes does not radically change. 
Even when the global conservation law is weakly violated, black holes are foggy mirrors.

\begin{proofof}{\eqref{SIQ-v1}, \eqref{SIQ-v2}, \eqref{R-SIQ-v1} and \eqref{R-SIQ-v2}}
Hereafter we use the abbreviation $X_{AB}=X_A+X_B$ and $X_{A'B'}=X_{A'}+X_{B'}$.
Then, for an arbitrary state $\xi$ on $AB$, we can transform $V_{U\xi U^\dagger}(X_{A'B'})$ as follows
\begin{align}
V_{U\xi U^\dagger}(X_{A'B'})&=\ex{X^2_{A'B'}}_{U\xi U^\dagger}-\ex{X_{A'B'}}^2_{U\xi U^\dagger}\nonumber\\
&=\ex{(U^\dagger X_{A'B'}U)^2}_{\xi}-\ex{U^\dagger X_{A'B'}U}^2_{U\xi U^\dagger}\nonumber\\
&=\ex{(X_{AB}-Z)^2}_{\xi}-\ex{X_{AB}-Z}^2_{\xi}\nonumber\\
&=V_{\xi}(X_{AB}-Z)\nonumber\\
&=V_{\xi}(X_{AB})-2Cov_{\xi}(X_{AB}:Z)+V_{\xi}(Z).
\end{align}
Due to $|Cov_{\xi}(X_{AB}:Z)|\le\sqrt{V_{\xi}(X_{AB})}\sqrt{V_{\xi}(Z)}$, we obtain 
\begin{align}
\left(\sqrt{V_{\xi}(X_{AB})}-\sqrt{V_{\xi}(Z)}\right)^2
\le
V_{U\xi U^\dagger}(X_{A'B'})
\le
\left(\sqrt{V_{\xi}(X_{AB})}+\sqrt{V_{\xi}(Z)}\right)^2\label{ev-v1}
\end{align}

Now, let us set $\xi=\xi_A\otimes\xi_B$, $\xi^{f}_{A'}:=\Tr_{B'}[U(\xi_A\otimes\xi_B) U^\dagger]$ and $\xi^{f}_{B'}:=\Tr_{A'}[U(\xi_A\otimes\xi_B) U^\dagger]$.
Then,
\begin{align}
V_{U\xi U^\dagger}(X_{A'B'})=V_{\xi^{f}_{A'}}(X_{A'})+2Cov_{U\xi U^\dagger}(X_{A'}:X_{B'})+V_{\xi^{f}_{B'}}(X_{B'}).
\end{align}
Due to $|Cov_{U\xi U^\dagger}(X_{A'}:X_{B'})|\le\sqrt{V_{\xi^{f}_{A'}}(X_{A'})}\sqrt{V_{\xi^{f}_{B'}}(X_{B'})}$.
Therefore, we obtain
\begin{align}
\left(\sqrt{V_{\xi^{f}_{A'}}(X_{A'})}-\sqrt{V_{\xi^{f}_{B'}}(X_{B'})}\right)^2
\le 
V_{U\xi U^\dagger}(X_{A'B'})
\le
\left(\sqrt{V_{\xi^{f}_{A'}}(X_{A'})}+\sqrt{V_{\xi^{f}_{B'}}(X_{B'})}\right)^2\label{ev-v2}
\end{align}
Substituting $\xi=\xi_{A}\otimes\xi_{B}$ into \eqref{ev-v1} and combining it with \eqref{ev-v2}, we obtain
\begin{align}
\sqrt{V_{\xi^{f}_{B'}}(X_{B'})}&\le\sqrt{V_{\xi_B}(X_B)}+\sqrt{V_{\xi_A}(X_A)}+\sqrt{V_{\xi^f_{A'}}(X_{A'})}+\sqrt{V_{\xi_A\otimes\xi_{B}}(Z)}.\label{Vineq-v}
\end{align}

Due to \eqref{vio-c-law}, we obtain
\begin{align}
\ex{X_A}_{\xi_A}-\ex{X_{A'}}_{\xi^f_{A'}}=-\ex{X_B}_{\xi_B}+\ex{X_{B'}}_{\xi^f_{B'}}+\ex{Z}_{\xi_A\otimes\xi_B}.
\end{align}
Therefore, for the decomposition $\rho_A=\sum_jp_j\rho_j$ such that $\SA=\sum_jp_j|\Delta_j|$, we obtain
\begin{align}
|\ex{X_{B'}}_{\rho^f_{j,B'}}-\ex{X_{B'}}_{\rho^f_{B'}}|-|\ex{Z}_{(\rho_{j,A}-\rho_{A})\otimes\rho_B}|\le|\Delta_j|\le|\ex{X_{B'}}_{\rho^f_{j,B'}}-\ex{X_{B'}}_{\rho^f_{B'}}|+|\ex{Z}_{(\rho_{j,A}-\rho_A)\otimes\rho_B}|\label{Eeq-v}
\end{align}
By using \eqref{Vineq-v} and \eqref{Eeq-v} instead of \eqref{Vineq} and \eqref{Eeq}, we obtain \eqref{SIQ-v1} by the same way as \eqref{SIQ1}.
We choose an ensemble $\{p_j,\rho_{j,A}\}$ satisfying $\calA=\sum_{j}p_j|\Delta_j|$.
Then, we obtain
\begin{align}
\SA&=\sum_{j}p_j|\Delta_j|\nonumber\\
&\le\sum_{j}p_j(|\ex{X_{B'}}_{\rho^f_{j,B'}}-\ex{X_{B'}}_{\rho^f_{B'}}|+|\ex{Z}_{(\rho_{j,A}-\rho_A)\otimes\rho_B}|)\nonumber\\
&\le\sum_{j}p_j|\ex{Z}_{(\rho_{j,A}-\rho_A)\otimes\rho_B}|
+\sum_{j}p_jD_{F}(\rho^{f}_{j,B'},\rho^{f}_{B'})\left(\sqrt{V_{\rho^{f}_{j,B'}}(X_{B'})}+\sqrt{V_{\rho^{f}_{B'}}(X_{B'})}+|\ex{X_{B'}}_{\rho^f_{j,B'}}-\ex{X_{B'}}_{\rho^f_{B'}}|\right)\nonumber\\
&\le\SA_Z
+\sum_{j}p_jD_{F}(\rho^{f}_{j,B'},\rho^{f}_{B'})\left(2\sqrt{V_{\rho_{B}}(X_{B})}+
\sqrt{V_{\rho_{j,A}}(X_A)}+\sqrt{V_{\rho^{f}_{j,A'}}(X_{A'})}+\sqrt{V_{\rho_A}(X_A)}+\sqrt{V_{\rho^{f}_{A'}}(X_{A'})}\right.\nonumber\\
&\left.+2\sqrt{V_Z}+|\Delta_j|+|\ex{Z}_{(\rho_{j,A}-\rho_A)\otimes\rho_B}|\right)\nonumber\\
&\le\SA_Z+2\delta\left(\sqrt{\calF}+\sqrt{V_{\rho_A}(X_A)}+\sqrt{V_{\rho^{f}_{A'}}(X_{A'})}+2\sqrt{V_Z}\right)\nonumber\\
&+\sqrt{\sum_{j}p_jD_{F}(\rho^{f}_{j,B'},\rho^{f}_{B'})^2}\left(\sqrt{\sum_{j}p_jV_{\rho_{j,A}(X_A)}}+\sqrt{\sum_{j}p_jV_{\rho^{f}_{j,A'}(X_{A'})}}+\sqrt{\sum_jp_j|\Delta_{j}|^2}+\sqrt{\sum_jp_j|\ex{Z}_{(\rho_{j,A}-\rho_A)\otimes\rho_B}|^2}\right)\nonumber\\
&\le\SA_Z
+2\delta\left(\sqrt{\calF}+2(\sqrt{V_{\rho_A}(X_A)}+\sqrt{V_{\rho^{f}_{A'}}(X_{A'})})+2\sqrt{V_Z}+\SA^{(2)}+\SA^{(2)}_{Z}\right).
\end{align}

Similarly, we derive \eqref{SIQ-v2} as follows:
\begin{align}
\SA&=\sum_{j}p_j|\Delta_j|\nonumber\\
&\le\sum_{j}p_j(|\ex{X_{B'}}_{\rho^f_{j,B'}}-\ex{X_{B'}}_{\rho^f_{B'}}|+|\ex{Z}_{(\rho_{j,A}-\rho_A)\otimes\rho_B}|)\nonumber\\
&\le\sum_{j}p_j|\ex{Z}_{(\rho_{j,A}-\rho_A)\otimes\rho_B}|
+\sum_{j}p_jD_{F}(\rho^{f}_{j,B'},\rho^{f}_{B'})\left(\sqrt{V_{\rho^{f}_{j,B'}}(X_{B'})}+\sqrt{V_{\rho^{f}_{B'}}(X_{B'})}+|\ex{X_{B'}}_{\rho^f_{j,B'}}-\ex{X_{B'}}_{\rho^f_{B'}}|\right)\nonumber\\
&\le\SA_Z+2\delta\sqrt{V_{\rho^f_{B'}}(X_{B'})}\nonumber\\
&+\sqrt{\sum_{j}p_jD_{F}(\rho^{f}_{j,B'},\rho^{f}_{B'})^2}\left(\sqrt{\sum_{j}p_jV_{\rho^{f}_{j,B'}}(X_{B'})}+\sqrt{\sum_jp_j|\Delta_j|^2}+\sqrt{\sum_jp_j|\ex{Z}_{(\rho_{j,A}-\rho_A)\otimes\rho_B}|^2}\right)\nonumber\\
&\le\SA_Z
+2\delta\left(\sqrt{\calF_f}+\SA^{(2)}+\SA^{(2)}_{Z}\right).
\end{align}

We can show \eqref{R-SIQ-v1} and \eqref{R-SIQ-v2} in the same way.

\end{proofof}

Finally, we prove \eqref{ur}--\eqref{ur5}.
\begin{proofof}{\eqref{ur}--\eqref{ur5}}
The inequalities \eqref{ur2} and \eqref{ur3} are easily obtained from the definition. So, we prove \eqref{ur}, \eqref{ur4} and \eqref{ur5}.
We firstly show \eqref{ur} and \eqref{ur5}.
Since the square of the average is smaller than the average of square, the inequality $\SA_Z\le\SA^{(2)}_{Z}$ in \eqref{ur} clearly holds.
We can easily derive the remaining parts of \eqref{ur} and \eqref{ur5} from the following inequality holds for arbitrary Hermitian $Y$ and state $\xi$ and its decomposition $\xi=\sum_{l}q_l\xi_l$:
\begin{align}
\sum_{l}q_l\left(\ex{Y}_{\xi_l}-\ex{Y}_{\xi}\right)^2\le V_{\xi}(Y)\label{A2Vpre}
\end{align}
We obtain \eqref{A2Vpre} as follows
\begin{align}
V_{\xi}(Y)&=\ex{Y^2}_{\xi}-\ex{Y}^2_{\xi}\nonumber\\
&=\sum_lq_l\ex{Y^2}_{\xi_l}-\left(\sum_lq_l\ex{Y}_{\xi_{l}}\right)^2\nonumber\\
&\ge\sum_lq_l\ex{Y}^2_{\xi_l}-\left(\sum_lq_l\ex{Y}_{\xi_{l}}\right)^2\nonumber\\
&=\sum_{l}q_l\left(\ex{Y}_{\xi_l}-\sum_{l'}q_{l'}\ex{Y}_{\xi_{l'}}\right)^2\nonumber\\
&=\sum_{l}q_l\left(\ex{Y}_{\xi_l}-\ex{Y}_{\xi}\right)^2.
\end{align}

Similarly, we can easily derive \eqref{ur4} from the following inequality holds for arbitrary Hermitian $Y$ and state $\xi$ and its decomposition $\xi=\sum_{l}q_l\xi_l$:
\begin{align}
\sum_{l}q_l\left|\ex{Y}_{\xi_l}-\ex{Y}_{\xi}\right|\le M_{\xi}(Y)\label{A2Mpre}
\end{align}
We obtain \eqref{A2Mpre} as follows:
\begin{align}
M_{\xi}(Y)&=\left<|Y-\ex{Y}_{\xi}|\right>_{\xi}\nonumber\\
&=\sum_lq_l\ex{|Y-\ex{Y}_{\xi}|}_{\xi_{l}}\nonumber\\
&\stackrel{(a)}{\ge}\sum_{l}q_l|\ex{Y-\ex{Y}_{\xi}}_{\xi_l}|\nonumber\\
&=\sum_{l}q_l|\ex{Y}_{\xi_l}-\ex{Y}_{\xi}|.
\end{align}
where we use the inequality $|\ex{H}_{\zeta}|\le\ex{|H|}_{\zeta}$ which holds for arbitrary Hermitian $H$ and state $\zeta$ in (a).

\end{proofof}

\section{Derivations of \eqref{check_ex1}--\eqref{check_ex3} and \eqref{upper_ex} in the main text}\label{Stightness}
In this section, we derive \eqref{check_ex1}--\eqref{check_ex3} and \eqref{upper_ex} in the main text.

\subsection{Model}
For the reader's convenience, we write down our model again. 
Let us consider four qubits, $A$, $R_A$, $B_1$ and $R_{B_1}$.
We also take a natural number $b$ and a $b+2$-dimensional system $B_2$.
For $A$, $B_1$, $B_2$, we define Hermitian operators $X_A$, $X_{B_1}$ and $X_{B_2}$ as follows:
\eq{
X_A&:=\ket{1}\bra{1}_{A},\\
X_{B_1}&:=\ket{1}\bra{1}_{B_1},\\
X_{B_2}&:=\sum^{b+2}_{x=1}2x\ket{x}\bra{x}_{B_2},
}
where $\{\ket{x}_{A}\}_{x=0,1}$, $\{\ket{x}_{B_1}\}_{x=0,1}$, $\{\ket{x}_{B_2}\}_{x=0,...,d+2}$ are orthogonal basis on $A$, $B_1$, and $B_2$, respectively.

On the above system, we prepare the following initial states:
\eq{
\ket{\psi_{AR_A}}&:=\frac{\ket{00}_{AR_A}+\ket{11}_{AR_A}}{\sqrt{2}},\\
\ket{\phi_{B_1R_{B_1}}}&:=\frac{\ket{00}_{B_1R_{B_1}}+\ket{11}_{B_1R_{B_1}}}{\sqrt{2}},\\
\ket{\phi_{B_2}}&:=\frac{1}{\sqrt{b}}\sum^{b}_{x=1}\ket{x}_{B_2}.
} 
And we prepare the following unitary $U_{AB_1B_2}$ on $AB_1B_2$:
\eq{
U_{AB_1B_2}:=&\sum_{1\le k\le b+1}
 \left(\ket{11}\bra{00}_{AB_1}\otimes\ket{k-1}\bra{k}
+\ket{10}\bra{01}_{AB_1}\otimes\ket{k}\bra{k}
+\ket{01}\bra{10}_{AB_1}\otimes\ket{k}\bra{k}
+\ket{00}\bra{11}_{AB_1}\otimes\ket{k}\bra{k-1}\right)\nonumber\\
&+\ket{00}\bra{00}_{AB_1}\otimes\ket{0}\bra{0}
+\ket{01}\bra{01}_{AB_1}\otimes\ket{0}\bra{0}
+\ket{10}\bra{10}_{AB_1}\otimes\ket{0}\bra{0}
+\ket{11}\bra{11}_{AB_1}\otimes\ket{b+1}\bra{b+1}.
}
After the unitary $U$, we perform a CPTP map on $AR_{B_1}$ to recover  the initial state $\ket{\psi_{AR_A}}$ on $AR_A$.
Then, following our framework, we define the minimum recovery error as follows:
\eq{
\delta=\min_{\calR}D_F(\psi_{AR_A},id_{R_A}\otimes\calR\circ\calE_{A\rightarrow AR_{B_1}}(\psi_{AR_A})).
}
Here $\calE_{A\rightarrow AR_{B_1}}(...):=\Tr_{B_1B_2}[U_{AB_1B_2}(...\otimes\phi_{B_2}\otimes\phi_{BR_{B_{1}}}) U^\dagger_{AB_1B_2}]$.

Since $U_{AB_1B_2}$ conserves $X_{A}+X_{B_1}+X_{B_2}$ and $\rho_A:=\Tr_{R_A}[\psi_{AR_A}]=\frac{\ket{0}\bra{0}+\ket{1}\bra{1}}{2}$, our bound is applicable to the above model:
\eq{
\delta\ge\frac{\calA_2}{\sqrt{\calF_f}+\Delta_{\max}}\label{Slower_a}
}
Here, to check our bound, we calculate each term in the RHS of \eqref{Slower_a}, and give upper bounds of $\delta$ with analytical and numerical ways.

\begin{figure}[tb]
		\centering
		\includegraphics[width=.45\textwidth]{check_fig.pdf}
		\caption{Schematic diagram of the concrete model for numerical check.}
		\label{Scheck_fig}
	\end{figure}

\subsubsection{analytical lower bound: Derivations of \eqref{check_ex1}--\eqref{check_ex3}}
Let us calculate $\calA_2$, $\calF_f$ and $\Delta_{\max}$ in the RHS of \eqref{Slower_a}. The calculations correspond to the derivations of \eqref{check_ex1}--\eqref{check_ex3}.
We firstly calculate $\calA_2$. The definition of $\calA_2$ is as follows:
\eq{
\calA_2:=\frac{\sum_{i=0,1}|\ex{X_A}_{\ket{i}\bra{i}}-\ex{X_A}_{\calE_{A\rightarrow A}(\ket{i}\bra{i})}|}{2},
}
where $\calE_{A\rightarrow A}(...):=\Tr_{B_1B_2R_{B_1}}[U_{AB_1B_2}(...\otimes\phi_{B_2}\otimes\phi_{BR_{B_{1}}}) U^\dagger_{AB_1B_2}]$.
We remark that
\eq{
\calE_{A\rightarrow A}(\ket{0}\bra{0})&=\ket{1}\bra{1},\\
\calE_{A\rightarrow A}(\ket{1}\bra{1})&=\ket{0}\bra{0}.
}
Therefore, we obtain
\eq{
\calA_2=1.
}

Next, we evaluate $\Delta_{\max}$. Due to $\Delta_{\max}:=\max_{\rho\mbox{ on }\mathrm{supp}(\rho_A)}|\ex{X_A}_{\rho}-\ex{X_A}_{\calE_{A\rightarrow A}(\rho)}|$ and $\Delta_{X_A}=1$, we obtain
\eq{
\Delta_{\max}\le1.
}

Next, we evaluate $\calF_f$. Note that
\eq{
\calF_f=4V_{\rho'_{B_1B_2}}(X_{B_1}+X_{B_2}),
}
where $\rho'_{B_1B_2}:=\Tr_{AR_AR_{B_1}}[1_{R_A}\otimes U_{AB_1B_2}(\psi_{AR_A}\otimes\phi_{B_2}\otimes\phi_{BR_{B_{1}}}) 1_{R_A}\otimes U^\dagger_{AB_1B_2}]$.
To evaluate $V_{\rho'_{B_1B_2}}(X_{B_1}+X_{B_2})$, we firstly note that
\eq{
&U_{AB_1B_2}\ket{\psi_{AR_A}}\otimes\ket{\phi_{B_2}}\otimes\ket{\phi_{BR_{B_{1}}}}\non
&=\frac{1}{2}(\ket{0101}_{R_AAR_{B_1}B_1}\ket{\phi^{(-1)}_{B_2}}+\ket{0110}_{R_AAR_{B_1}B_1}\ket{\phi_{B_2}}+\ket{1001}_{R_AAR_{B_1}B_1}\ket{\phi_{B_2}}+\ket{1010}_{R_AAR_{B_1}B_1}\ket{\phi^{(+1)}_{B_2}}),\label{Shukusen1}
}
where $\ket{\phi^{(-1)}_{B_2}}:=\frac{1}{\sqrt{b}}\sum^{b-1}_{x=0}\ket{x}_{B_2}$ and $\ket{\phi^{(+1)}_{B_2}}:=\frac{1}{\sqrt{b}}\sum^{b+1}_{x=2}\ket{x}_{B_2}$.
Therefore, we obtain
\eq{
\rho'_{B_1B_2}=\frac{1}{4}(\ket{1}\bra{1}\otimes\phi^{(-1)}_{B_2}+\ket{0}\bra{0}\otimes\phi_{B_2}+\ket{1}\bra{1}\otimes\phi_{B_2}+\ket{0}\bra{0}\otimes\phi^{(+1)}_{B_2}).
}
Therefore, with using $V_{\sum_xq_x\rho_x}(W)=V_{\{q_x\}}(\{\ex{W}_{\rho_x}\})+\sum_{x}q_xV_{\rho_x}(X)$, we obtain
\eq{
V_{\rho'_{B_1B_2}}(X_{B_1}+X_{B_2})&=\frac{5}{4}+V_{\phi_{B_2}}(X_{B_2})\non
&=\frac{5}{4}+\frac{1}{b}\sum^{b}_{k=1}k^2-\frac{(\sum^{b}_{k=1}k)^2}{b^2}\non
&=\frac{5}{4}+\frac{(b+1)(2b+1)}{6}-\frac{(b+1)^2}{4}\non
&=\frac{b^2+14}{12}
}
Therefore, we obtain
\eq{
\calF_f=\frac{b^2+14}{3}.
}
Hence, $\delta$ is bounded as follows:
\eq{
\delta&\ge\frac{1}{\sqrt{\calF_f}+1}\non
&=\frac{1}{\sqrt{\frac{b^2+14}{3}}+1}
}

\subsubsection{Analytical upper bound:}
Next, we give \eqref{upper_ex} in the main text, an analytical upper bound for $\delta$.
To derive the upper bound, we define the following $V_{AR_{B_1}}$ on $AR_{B_1}$.
\eq{
V_{AR_{B_1}}:=\ket{11}\bra{00}_{AR_{B_1}}+\ket{10}\bra{01}_{AR_{B_1}}+\ket{01}\bra{10}_{AR_{B_1}}+\ket{00}\bra{11}_{AR_{B_1}}
}
Since the unitary operation $\calR^{V}_{AR_{B_1}\rightarrow A}(...):=\Tr_{R_{B_1}}[V_{AR_{B_1}}(...)V^\dagger_{AR_{B_1}}]$ is a CPTP map from $AR_{B_1}$ to $A$, we obtain
\eq{
\delta\le D_F(\psi_{AR_A},\mathrm{id}_{R_A}\otimes\calR^{V}_{AR_{B_1}\rightarrow A}\circ\calE_{A\rightarrow AR_{B_1}}(\psi_{AR_A})).
}
To evaluate the RHS of the above inequality, note the following relations:
\eq{
&\mathrm{id}_{R_A}\otimes\calR^{V}_{AR_{B_1}\rightarrow A}\circ\calE_{A\rightarrow AR_{B_1}}(\psi_{AR_A})=\Tr_{R_{B_1}B_1B_2}[V_{AR_{B_1}}U_{AB_1B_2}\psi_{AR_A}\otimes\phi_{B_2}\otimes\phi_{BR_{B_{1}}}U^\dagger_{AB_1B_2}V^\dagger_{AR_{B_1}}],\\
&V_{AR_{B_1}}U_{AB_1B_2}\ket{\psi_{AR_A}}\otimes\ket{\phi_{B_2}}\otimes\ket{\phi_{BR_{B_{1}}}}\non
&\stackrel{(a)}{=}\frac{1}{2}(\ket{0011}_{R_AAR_{B_1}B_1}\ket{\phi^{(-1)}_{B_2}}+\ket{0000}_{R_AAR_{B_1}B_1}\ket{\phi_{B_2}}+\ket{1111}_{R_AAR_{B_1}B_1}\ket{\phi_{B_2}}+\ket{1100}_{R_AAR_{B_1}B_1}\ket{\phi^{(+1)}_{B_2}})\non
&=\sqrt{\frac{b-2}{b}}\ket{\psi_{AR_A}}\ket{\phi_{BR_{B_1}}}\ket{\phi'_{B_2}}
\non
&+\frac{1}{2\sqrt{b}}(\ket{0011}_{R_AAR_{B_1}B_1}(\ket{0}+\ket{1})+\ket{0000}_{R_AAR_{B_1}B_1}(\ket{1}+\ket{b})+\ket{1111}_{R_AAR_{B_1}B_1}(\ket{1}+\ket{b})+\ket{1100}_{R_AAR_{B_1}B_1}(\ket{b}+\ket{b+1}))
}
where $\ket{\phi'_{B_2}}:=\sqrt{\frac{1}{b-2}}\sum^{b-1}_{x=2}\ket{x}$ and we use \eqref{Shukusen1} in $(a)$.

Using the above, we evaluate $F(\psi_{AR_A},\mathrm{id}_{R_A}\otimes\calR^{V}_{AR_{B_1}\rightarrow A}\circ\calE_{A\rightarrow AR_{B_1}}(\psi_{AR_A}))$ as follows:
\eq{
&F^2(\psi_{AR_A},\mathrm{id}_{R_A}\otimes\calR^{V}_{AR_{B_1}\rightarrow A}\circ\calE_{A\rightarrow AR_{B_1}}(\psi_{AR_A}))\non
&=\bra{\psi_{AR_A}}
\mathrm{id}_{R_A}\otimes\calR^{V}_{AR_{B_1}\rightarrow A}\circ\calE_{A\rightarrow AR_{B_1}}(\psi_{AR_A})\ket{\psi_{AR_A}}\non
&=\bra{\psi_{AR_A}}\Tr_{R_{B_1}B_1B_2}[V_{AR_{B_1}}U_{AB_1B_2}\psi_{AR_A}\otimes\phi_{B_2}\otimes\phi_{BR_{B_{1}}}U^\dagger_{AB_1B_2}V^\dagger_{AR_{B_1}}]\ket{\psi_{AR_A}}\non
&=\bra{\psi_{AR_A}}\sum^{4(b+2)}_{j=1}\bra{\phi_j}V_{AR_{B_1}}U_{AB_1B_2}\psi_{AR_A}\otimes\phi_{B_2}\otimes\phi_{BR_{B_{1}}}U^\dagger_{AB_1B_2}V^\dagger_{AR_{B_1}}\ket{\phi_j}\ket{\psi_{AR_A}}\non
&\ge\bra{\psi_{AR_A}}\bra{\phi_0}V_{AR_{B_1}}U_{AB_1B_2}\psi_{AR_A}\otimes\phi_{B_2}\otimes\phi_{BR_{B_{1}}}U^\dagger_{AB_1B_2}V^\dagger_{AR_{B_1}}\ket{\phi_0}\ket{\psi_{AR_A}}\non
&=\frac{b-2}{b}.
}
where $\{\phi_j\}$ is a orthonormal basis of $R_{B_1}B_1B_{2}$ satisfying $\phi_0=\phi_{B_1R_{B_1}}\otimes\phi'_{B_2}$.
Due to $D_F=\sqrt{1-F^2}$, we obtain
\eq{
D_F(\psi_{AR_A},\mathrm{id}_{R_A}\otimes\calR^{V}_{AR_{B_1}\rightarrow A}\circ\calE_{A\rightarrow AR_{B_1}}(\psi_{AR_A}))\le\sqrt{\frac{2}{b}}
}
Therefore, we obtain
\eq{
\delta\le\sqrt{\frac{2}{b}}.
}

\end{widetext}

\end{document}